\documentclass[%
 reprint,
superscriptaddress,
 amsmath,amssymb,
aps,
pra,
showpacs
]{revtex4-2}

\usepackage{graphicx}
\usepackage{bm}
\usepackage{bbm}
\usepackage{amsthm,amsmath,amssymb}
\usepackage{color}
\usepackage{enumitem}
\usepackage{caption}
\usepackage{subcaption}
\usepackage{mathtools}
\usepackage{dcolumn}
\usepackage{hyperref}
\usepackage[ruled]{algorithm2e}
\usepackage{braket}
\allowdisplaybreaks[2]

\usepackage{tikz}
\usetikzlibrary{arrows.meta}

\theoremstyle{plain}
\newtheorem{theorem}{Theorem}[section]
\newtheorem{proposition}[theorem]{Proposition}

\theoremstyle{definition}

\newtheorem{Observation}{Observation}[section]
\theoremstyle{remark}

\newcommand{\cH}{\mathcal{H}}

\newcommand{\C}{\mathbb{C}}
\newcommand{\R}{\mathbb{R}}

\newcommand{\Tr}{\operatorname{Tr}}

\begin{document}


\title{Variance Geometry of Exact Pauli-Detecting Codes:\\ Continuous Landscapes Beyond Stabilizers}

\author{Arunaday Gupta}
\affiliation{Department of Physics, The University of Texas at Dallas, Richardson, Texas 75080, USA}
\author{Baisong Sun}
\affiliation{Department of Physics, The University of Texas at Dallas, Richardson, Texas 75080, USA}
\author{Xi He}
\affiliation{Department of Physics, The University of Texas at Dallas, Richardson, Texas 75080, USA}
\author{Bei Zeng}
\affiliation{Department of Physics, The University of Texas at Dallas, Richardson, Texas 75080, USA}

\begin{abstract}
Exact quantum codes detecting a prescribed set of Pauli errors are approached through algebraic constructions---stabilizer, codeword-stabilized, permutation-invariant, topological, and related families. Geometrically, exact Pauli detection is governed by joint higher-rank numerical ranges of these Pauli operators, whose structure for rank $\geq 2$ is largely uncharted. From this viewpoint, we show that such codes often form connected continuous families rather than collections of disjoint solution regions. These families are characterized by a single scalar derived from the Knill--Laflamme conditions: denoted $\lambda^*$, it is the Euclidean norm of the signature vector of Pauli expectation values on the maximally mixed code state, and provides a one-parameter summary of the code's joint Pauli variance profile. Within these continuous landscapes, stabilizer codes occupy only discrete, measure-zero subsets of the attainable $\lambda^*$-spectrum, exposing a largely unexplored continuum of genuinely nonadditive exact codes. We establish this picture by analyzing the geometry of higher-rank operator compressions, and extend it to symmetry-restricted settings where cyclic and permutation symmetries are imposed on both the error model and the code projector. Small-system cases reveal interval, singleton, and empty regimes through eigenvalue interlacing and symmetry-sector decompositions; larger systems are treated numerically via Stiefel-manifold optimization and symmetry-adapted parameterizations. In every unrestricted and symmetry-compatible case analyzed, the attainable $\lambda^*$-spectrum forms a single closed interval whenever nonempty---although a general proof remains open. These results place stabilizer, symmetric, and nonadditive code families within a unified higher-rank variance framework, suggesting a continuous geometric perspective on the landscape of exact quantum codes.
\end{abstract}

\maketitle

\section{Introduction}
\label{sec:intro}

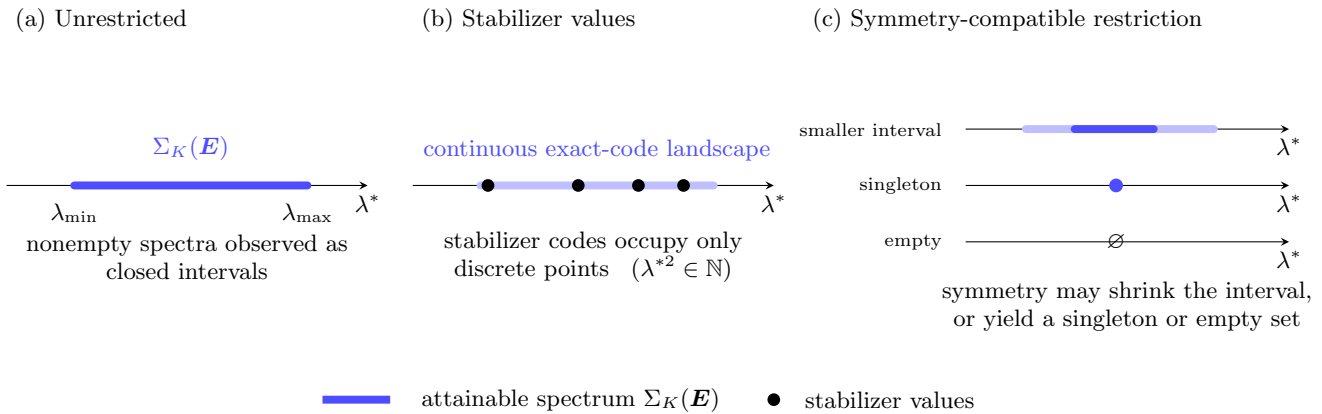
\begin{figure*}[t]
\centering
\begin{tikzpicture}[x=1cm,y=1cm,>=stealth,baseline=(current bounding box.center)]

\colorlet{specblue}{blue!70}
\colorlet{faintblue}{blue!25}

\begin{scope}[shift={(0,0)}]
  \node[anchor=west,font=\small] at (0,2.2) {(a) Unrestricted};
  \draw[->] (0,0) -- (4.8,0) node[below] {$\lambda^{*}$};
  \draw[line width=3pt,specblue,line cap=round] (0.9,0) -- (4.0,0);
  \node[below,font=\small] at (0.9,-0.12) {$\lambda_{\min}$};
  \node[below,font=\small] at (4.0,-0.12) {$\lambda_{\max}$};
  \node[above,font=\small,specblue] at (2.45,0.22) {$\Sigma_K(\boldsymbol E)$};
  \node[align=center,font=\small] at (2.4,-0.95)
    {nonempty spectra observed as\\closed intervals};
\end{scope}

\begin{scope}[shift={(5.4,0)}]
  \node[anchor=west,font=\small] at (0,2.2) {(b) Stabilizer values};
  \draw[->] (0,0) -- (4.8,0) node[below] {$\lambda^{*}$};
  \draw[line width=3pt,faintblue,line cap=round] (0.9,0) -- (4.0,0);
  \foreach \x in {1.0,2.2,3.0,3.6}
    \fill (\x,0) circle (2.4pt);
  \node[above,font=\small,specblue] at (2.45,0.22) {continuous exact-code landscape};
  \node[align=center,font=\small] at (2.4,-0.95)
    {stabilizer codes occupy only\\discrete points \;\;($\lambda^{*2}\in\mathbb N$)};
\end{scope}

\begin{scope}[shift={(12.2,0)}]
  \node[anchor=west,font=\small] at (-1.6,2.2) {(c) Symmetry-compatible restriction};

  \draw[->] (0.55,0.75) -- (4.85,0.75) node[below] {$\lambda^{*}$};
  \node[anchor=east,font=\scriptsize] at (0.35,0.75) {smaller interval};
  \draw[line width=3pt,faintblue,line cap=round] (1.35,0.75) -- (3.85,0.75);
  \draw[line width=3pt,specblue,line cap=round] (2.0,0.75) -- (3.05,0.75);

  \draw[->] (0.55,0) -- (4.85,0) node[below] {$\lambda^{*}$};
  \node[anchor=east,font=\scriptsize] at (0.35,0) {singleton};
  \fill[specblue] (2.55,0) circle (2.6pt);

  \draw[->] (0.55,-0.75) -- (4.85,-0.75) node[below] {$\lambda^{*}$};
  \node[anchor=east,font=\scriptsize] at (0.35,-0.75) {empty};
  \node[font=\small] at (2.55,-0.75) {$\varnothing$};

  \node[align=center,font=\small] at (2.7,-1.6)
    {symmetry may shrink the interval,\\
     or yield a singleton or empty set};
\end{scope}

\begin{scope}[shift={(4.2,-2.85)}]
  \draw[line width=3pt,specblue] (0,0) -- (0.9,0);
  \node[anchor=west,font=\small] at (1.2,0) {attainable spectrum \(\Sigma_K(\boldsymbol E)\)};
  \fill (6.0,0) circle (2.4pt);
  \node[anchor=west,font=\small] at (6.3,0) {stabilizer values};
\end{scope}

\end{tikzpicture}
\caption{Schematic behavior of the attainable spectrum
\(\Sigma_K(\boldsymbol E)=\{\lambda^*(P): P \text{ detects } \boldsymbol E,\ \operatorname{rank}(P)=K\}\).
In the unrestricted setting studied here, nonempty spectra are observed to be closed intervals. Stabilizer codes occupy only discrete values within these intervals. Under symmetry-compatible restrictions, the attainable set may shrink to a smaller interval, reduce to a singleton, or become empty, while remaining interval-like whenever nonempty.}
\label{fig:sigma-schematic}
\end{figure*}

Exact quantum codes that detect a prescribed family of Pauli errors are often approached through algebraic constructions---stabilizer codes \cite{CalderbankShor1996,Steane1996ErrorCorrectingCodes,Gottesman1997Stabilizer}, codeword-stabilized codes \cite{Cross2009CWS}, permutation-invariant codes \cite{Ouyang2014PermutationInvariant,Ouyang2016PermutationInvariantMultiQubit}, topological codes \cite{DennisKitaevLandahlPreskill2002}, and related families. From a geometric viewpoint \cite{bengtsson2017geometry}, however, a more primitive question comes first: once a Pauli error set $\mathcal E$ is specified, what is the size and shape of the space of exact detecting codes? The detectable set $\mathcal E$ may be the full family of Pauli operators up to a given weight, as in the usual distance-based setting, or a restricted subset motivated by biased or structured noise. Equivalently, one asks for the geometry of the rank-$K$ projectors satisfying the corresponding detection constraints. This landscape is carved out by simultaneous compression conditions on the nonconvex manifold of rank-$K$ code projectors, so there is no \emph{a priori} reason for it to admit a simple global form: one may encounter isolated solutions, disconnected components, or continuous families of different dimensions. Are the familiar stabilizer constructions representative of this landscape, or do they occupy only a structured subset inside a much larger continuum of nonadditive solutions \cite{Rains1997Nonadditive,Cross2009CWS,du2025klcoefficients}?

Since our focus is exact detection of a prescribed family of Pauli errors, we use the Knill--Laflamme condition in the detection form
\begin{equation}
P E P=\alpha_E P,
\qquad E\in\mathcal E,
\end{equation}
where $P$ is a rank-$K$ code projector. Equivalently, for any two states $|\psi\rangle$ and $|\phi\rangle$ in the code space,
$\langle \psi|E|\phi\rangle=\alpha_E\,\langle \psi|\phi\rangle$.
Thus the error carries no logical-state-dependent information inside the code: after compression back to the code, it acts only as a scalar. When $\alpha_E=0$, the error leaves the code space and is directly detected; when $\alpha_E\neq 0$, the error is degenerate but still logically indistinguishable within the code. When $\mathcal E$ is chosen to be the set of Pauli operators of weight at most $d-1$, one recovers the usual $((n,K,d))$ detection language \cite{KnillLaflamme}. The relation between this detection form and the broader correction form of the Knill--Laflamme conditions is reviewed in Sec.~\ref{sec:preliminaries}.

A projector satisfying this condition for every $E\in\mathcal E$ is therefore a common $K$-dimensional subspace on which an entire Pauli family compresses to scalars. Exact Pauli detection is thus a problem of simultaneous operator compression and links naturally to joint higher-rank numerical ranges, where one studies all scalar tuples obtainable by compressing several operators to a common $K$-dimensional subspace \cite{li2008generalized,ChoiKribsZyczkowski2006LAA,ChoiGiesingerHolbrookKribs2008,LiPoonSze2008JMAA,GauLiPoonSze2011,Woerdeman2008HigherRankConvex,afshin2018pauli}. Compared with rank-$1$ Pauli expectation geometry \cite{XuSchwonnekWinter2024,xu2025simultaneous} and with the better-understood higher-rank theory for individual operators \cite{ChoiKribsZyczkowski2006LAA,LiPoonSze2008JMAA,GauLiPoonSze2011,Woerdeman2008HigherRankConvex}, the joint higher-rank regime relevant here remains much less understood \cite{li2008generalized,afshin2018pauli}. In particular, for Pauli tuples and $K\ge 2$, there is no general structure theory predicting connectedness, interval behavior of natural scalar summaries, or sharp non-emptiness criteria. Exact Pauli detection therefore probes an operator-theoretic regime where any systematic regularity is mathematically interesting in its own right, not only for error correction.

For Hermitian Pauli observables, the same compression data has a direct variance interpretation, since $E^2=\mathbbm{1}$ implies
\begin{equation}
\mathrm{Var}_\rho(E)=1-\langle E\rangle_\rho^2.
\end{equation}
Thus each detecting projector $P$ determines, through the maximally mixed code state $\rho_P=P/K$, a joint profile of compressed Pauli expectations and variances on the code space. Fixing an ordered tuple $\boldsymbol{E}=(E_1,\dots,E_m)$ drawn from the detectable family, we define
\begin{equation}
\boldsymbol{\lambda}(P)=\bigl(\langle E_1\rangle_{\rho_P},\dots,\langle E_m\rangle_{\rho_P}\bigr),
\qquad
\lambda^*(P)=\|\boldsymbol{\lambda}(P)\|_2.
\end{equation}
The scalar $\lambda^*$ is therefore a compact, sign-insensitive summary of the code's simultaneous Pauli expectation profile, or equivalently its variance profile. In the exact-code setting of \cite{du2025klcoefficients}, the analogous scalar is invariant under local unitaries, and continuous families were found for the $((6,2,3))$ and $((7,2,3))$ code classes.

Once \(\lambda^*(P)\) is associated with an individual detecting projector, the natural global question is which values of \(\lambda^*\) are actually attainable. For a fixed detectable Pauli family \(\mathcal E\) and an ordered tuple \(\boldsymbol E=(E_1,\dots,E_m)\subseteq \mathcal E\), we define
\begin{equation}
\Sigma_K(\boldsymbol E)
:=
\left\{
\lambda^*(P):
\begin{array}{l}
\operatorname{rank}(P)=K,\; P~\text{detects}\\
\text{the prescribed Pauli set }\mathcal E
\end{array}
\right\}.
\end{equation}
We refer to \(\Sigma_K(\boldsymbol E)\) as the attainable scalar spectrum. While the full vector-valued feasible set can be complicated or even disconnected, \(\Sigma_K(\boldsymbol E)\) gives a one-dimensional order parameter that still captures the basic global questions of interest: feasibility, extremal values, and connectedness. It also provides a natural way to locate familiar algebraic subclasses inside the broader exact-code landscape.

From this perspective, stabilizer codes furnish a natural reference class, as illustrated schematically in Fig.~\ref{fig:sigma-schematic}(a)(b). Under a Pauli error model, each stabilizer signature coordinate is \(0\) or \(\pm1\), so stabilizer values of \(\lambda^*\) form a discrete set, equivalently \(\lambda^{*2}\in\mathbb N\). Thus, whenever \(\Sigma_K(\boldsymbol E)\) contains a nontrivial interval, stabilizer codes occupy only a discrete—and hence measure-zero within that interval—subset of the attainable exact-code landscape. The phenomenon observed here is stronger: in every Pauli error model studied, the entire attainable scalar spectrum \(\Sigma_K(\boldsymbol E)\) appears to be a single closed interval whenever it is nonempty. In this sense, stabilizer constructions appear not as isolated points in a fragmented solution set, but as discrete reference points embedded in a continuous interval of exact Pauli-detecting codes, most of which are nonadditive.

A further layer of structure comes from symmetry. Besides the choice of detectable Pauli family---full low-weight sets, biased families, or other channel-adapted restrictions \cite{IoffeMezard2007,Jackson2016CWSAsym}---one may impose cyclic or permutation symmetry on the code \cite{beigi_et_al:LIPIcs.TQC.2013.192,PollatsekRuskai2004,Ouyang2014PermutationInvariant,Ouyang2016PermutationInvariantMultiQubit}. Here it is crucial to distinguish symmetry-compatible detection problems, in which the detectable Pauli set is itself stable under the same group action, from externally imposed symmetry restrictions on an otherwise asymmetric tuple. It is equally important to distinguish symmetry of individual basis states from symmetry of the rank-$K$ projector. The schematic in Fig.~\ref{fig:sigma-schematic}(c) summarizes symmetry-compatible parts of this picture. Under symmetry-compatible restrictions, the same interval-like structure persists but may shrink in three possible ways: it may remain a smaller interval, collapse to a singleton, or become empty.  These correspond to the later \textbf{Case~1}--\textbf{Case~3} behaviors.  The contrasting \textbf{Case~4} behavior---a robust disconnected spectrum such as $\{0,1\}$---does not occur in the symmetry-compatible schematic of Fig.~\ref{fig:sigma-schematic}; it arises when symmetry is imposed externally on a Pauli tuple that is not stable under that symmetry, as shown in the random cyclic-$+1$ examples of Sec.~\ref{sec:n3_k2}.

The main message of this paper is that the scalar signature spectrum is far more regular than the underlying vector-valued compression geometry might suggest.  In every unrestricted Pauli-detecting problem studied here, $\Sigma_K(\boldsymbol E)$ is a single closed interval whenever it is nonempty, and the same interval behavior persists for all symmetry-compatible cyclic and permutation restrictions we analyze.  Symmetry can shrink this interval, collapse it to a singleton, or make it empty, but it does not break connectedness in the compatible setting.  By contrast, when symmetry is imposed externally on a Pauli tuple that is not stable under the symmetry action, this regularity can fail: random three-qubit examples exhibit disconnected spectra such as $\{0,1\}$.  A related theme is that the rank-$K$ projector, rather than a particular symmetric basis, is the natural symmetry carrier for exact detection; relaxing from state-level to projector-level symmetry can enlarge intervals or restore feasibility.  Thus $\lambda^*$ serves here not as an approximate-recovery benchmark, but as a geometric order parameter placing stabilizer, nonadditive, cyclic, permutation-invariant, and asymmetric exact codes within a common higher-rank variance framework.

The paper is organized as follows. Section~\ref{sec:preliminaries} fixes notation, reviews the operator-compression formulation of the Knill--Laflamme conditions, and recalls joint higher-rank numerical ranges. Section~\ref{sec:var} introduces the variance-based Pauli signatures, the signature norm \(\lambda^*\), and the attainable scalar spectrum \(\Sigma_K(\boldsymbol E)\), including the interval phenomenon studied throughout the paper. Section~\ref{sec:sym} develops the symmetry framework, distinguishing symmetry-compatible error models from externally imposed symmetry constraints and state-level from projector-level code symmetry. Section~\ref{sec:cases} gives the complete two-qubit rank-\(2\) classification, both without symmetry and under the swap constraint. Section~\ref{sec:opt} presents the Stiefel-manifold optimization framework and the symmetry-adapted parameterizations used in the numerical searches. Section~\ref{sec:n3_k2} analyzes the three-qubit \(n=3\), \(K=2\) setting, including unrestricted random Pauli tuples, externally imposed cyclic-\(+1\) restrictions with an exact disconnected example, and cyclic-stable Pauli families. Section~\ref{sec:state_vs_projector} extends the analysis to larger systems and examines cyclic and permutation state-versus-projector symmetry gaps. We conclude in Section~\ref{sec:discussion}.


\section{Preliminaries}
\label{sec:preliminaries}

This section fixes the notation and operator-compression viewpoint used throughout the paper. We first set conventions for qubit systems and code projectors, then recall joint higher-rank numerical ranges, and finally connect that language to the Knill--Laflamme conditions and stabilizer codes.

\subsection{Notation}

We work with an $n$-qubit Hilbert space $\mathcal H \cong (\mathbb C^2)^{\otimes n}$ of dimension $D=2^n$, and write $\mathcal B(\mathcal H)$ for the algebra of linear operators on $\mathcal H$. The $n$-qubit Pauli group is denoted by $\mathcal P_n$. When we speak of Pauli observables, we mean Hermitian elements of $\mathcal P_n$, namely tensor products of $I$, $X$, $Y$, and $Z$ up to an overall sign. For $E\in\mathcal P_n$, its weight $\mathrm{wt}(E)$ is the number of tensor factors different from $I$.

A quantum code is a subspace $\mathcal C\subset \mathcal H$ with orthogonal projector $P$, so $P=P^\dagger=P^2$ and $\mathcal C=\mathrm{Ran}(P)$. We write $K=\mathrm{rank}(P)$ for the code dimension and $\mathcal P_K$ for the set of rank-$K$ orthogonal projectors on $\mathcal H$; thus $\mathcal P_n$ refers to the Pauli group whereas $\mathcal P_K$ refers to projectors. The maximally mixed state supported on the code space is
\begin{equation}
\rho_P=\frac{P}{K}.
\end{equation}
For any state $\rho$ and operator $A$, we use the shorthand
\begin{equation}
\langle A\rangle_\rho:=\Tr(\rho A).
\end{equation}
The state $\rho_P$ will be used throughout as the basis-independent representative of the code. In the abstract numerical-range discussion below we use $A=(A_1,\dots,A_m)$ for a general Hermitian operator tuple. For Pauli operators, $\mathcal E$ denotes a finite error set to be detected, $\boldsymbol{E}=(E_1,\dots,E_m)$ a fixed ordered tuple drawn from~$\mathcal E$, and unsubscripted~$E$ a generic Hermitian element of~$\mathcal P_n$.

\subsection{Joint higher-rank numerical ranges}

For a single Hermitian operator $A\in\mathcal B(\mathcal H)$, its numerical range is
\begin{equation}
W(A)=\left\{
\langle \psi, A\psi\rangle :
\begin{array}{l}
\psi\in\mathcal H,\\
\|\psi\|=1
\end{array}
\right\}
\subseteq \mathbb R .
\end{equation}
namely the set of expectation values attainable on pure states. For an $m$-tuple of Hermitian operators $A=(A_1,\dots,A_m)$, the corresponding joint numerical range is
\begin{equation}
\Lambda_1(A)
=
\left\{
\left(
\begin{array}{c}
\langle \psi, A_1\psi\rangle,\dots, \\
\langle \psi, A_m\psi\rangle
\end{array}
\right)
:
\begin{array}{l}
\psi\in\mathcal H,\\
\|\psi\|=1
\end{array}
\right\}
\subseteq \mathbb R^m .
\end{equation}
To pass from vectors to $K$-dimensional subspaces, one asks for simultaneous scalar compressions of the operators onto a common rank-$K$ projector. This leads to the joint higher-rank numerical range
\begin{equation}
\Lambda_K(A)
=
\left\{
\lambda\in\mathbb R^m :
\begin{array}{l}
\exists\, P\in\mathcal P_K \text{ such that}\\
P A_j P=\lambda_j P,\quad j=1,\dots,m
\end{array}
\right\},
\label{eq:LambdaK}
\end{equation}
which reduces to the usual joint numerical range when $K=1$ \cite{afshin2018pauli, LiPoonSze2008JMAA, GauLiPoonSze2011, Woerdeman2008HigherRankConvex, LiSze2008ProcAMS, ChoiKribsZyczkowski2006LAA}. A point of $\Lambda_K(A)$ therefore encodes a $K$-dimensional subspace on which each $A_j$ acts as a scalar multiple of the identity. This compression viewpoint is the natural geometric language for the coding problems studied in the rest of the paper.

\subsection{Connection to quantum error detection and correction}

A quantum code is a rank-$K$ projector \(P\) on \(\mathcal H\), whose range is the encoded \(K\)-dimensional logical subspace. For a general error model, let \(\mathcal N=\{N_\alpha\}\subset\mathcal B(\mathcal H)\) be a finite set of error operators. Then \(P\) corrects \(\mathcal N\) precisely when there exists a Hermitian matrix \(c=(c_{\alpha\beta})\) such that
\begin{equation}
P N_\alpha^\dagger N_\beta P = c_{\alpha\beta} P,
\qquad \forall\, \alpha,\beta,
\label{eq:KL_correct}
\end{equation}
which is the Knill--Laflamme condition \cite{KnillLaflamme}. To express it in numerical-range language, choose Hermitian operators \(A_1,\dots,A_m\) spanning the real linear space generated by the Hermitian and skew-Hermitian parts of the products \(\{N_\alpha^\dagger N_\beta\}_{\alpha,\beta}\). Then \eqref{eq:KL_correct} holds if and only if
\begin{equation}
P A_j P=\lambda_j P,\qquad j=1,\dots,m
\end{equation}
for some real scalars \(\lambda_1,\dots,\lambda_m\), equivalently if and only if \((\lambda_1,\dots,\lambda_m)\in\Lambda_K(A)\). In this sense, the existence of a rank-$K$ quantum code correcting \(\mathcal N\) is equivalent to the non-emptiness of an appropriate joint higher-rank numerical range \cite{li2008generalized, du2025klcoefficients}. 

We now restrict to Pauli error models. For a Pauli family \(\mathcal E\subset\mathcal P_n\), the detection condition takes the simpler form
\begin{equation}
P E P=\alpha_E P,
\qquad \forall\, E\in\mathcal E.
\end{equation}
Whenever this holds, we call \(P\) an \(\mathcal E\)-detecting projector and \(\mathrm{Ran}(P)\) an \(\mathcal E\)-detecting code. In the standard distance-based setting, one takes
\begin{equation}
\mathcal E_d
:=
\{E\in\mathcal P_n:\mathrm{wt}(E)\le d-1\},
\label{eq:Ed-def}
\end{equation}
and says that \(P\) is an \(((n,K,d))\) code if it detects every Pauli operator in \(\mathcal E_d\). Thus the parameter \(d\) is simply a compact way of specifying the detectable Pauli set through a uniform weight cutoff.

For Pauli errors, the correction and detection equations are directly related. If a code corrects a Pauli set \(\mathcal E=\{E_\alpha\}\), then by \eqref{eq:KL_correct} it detects every operator in the product set
\begin{equation}
\mathcal E^\dagger \mathcal E
:=
\{E_\alpha^\dagger E_\beta:\ E_\alpha,E_\beta\in\mathcal E\},
\end{equation}
since each such product compresses to a scalar. Conversely, the Knill--Laflamme correction equations for \(\mathcal E\) are exactly the detection equations for \(\mathcal E^\dagger\mathcal E\). In particular, if \(t=\lfloor (d-1)/2\rfloor\), then correcting all Pauli errors of weight at most \(t\) is equivalent to detecting all Pauli operators of weight at most \(2t\); this is why an \(((n,K,d))\) code corrects arbitrary errors on up to \(\lfloor (d-1)/2\rfloor\) qubits.

In realistic devices, however, noise is often biased rather than depolarizing. Depending on the platform, mechanisms such as spontaneous emission, dephasing, or amplitude-damping-type processes can favor one Pauli direction over another; for example, phase-flip ($Z$) faults can dominate bit-flip ($X$) faults, and non-unitary damping effects can often be mapped to asymmetric Pauli channels with unequal $X$, $Y$, and $Z$ error rates \cite{IoffeMezard2007,Jackson2016CWSAsym}. This motivates \emph{asymmetric} or channel-adapted codes, by which we mean codes designed for a prescribed Pauli error set that is not simply the full family below a single weight threshold \cite{IoffeMezard2007,SarvepalliKlappeneckerRoetteler2008,FletcherShorWin2008,LangShor2007,ShorSmithSmolinZeng2011,EzermanGrassl2013}. In that setting, the usual distance parameter \(d\) no longer faithfully specifies the error model: Pauli operators of the same weight may be treated differently, and the relevant detectable set need not be determined by weight alone. For the asymmetric families studied in this paper, we specify the detectable Pauli set directly by augmenting the single-qubit Pauli family with successively higher-body Z-type correlators, thereby defining the \emph{hierarchical asymmetric family}
\begin{equation}
\begin{aligned}
\mathcal E^{\mathrm{asym}}_{n,r}
:=
&\{X_i,Y_i,Z_i\}_{i=1}^{n}\\
&\cup
\bigcup_{s=2}^{r}
\{Z_{i_1}Z_{i_2}\cdots Z_{i_s}\}_{1\le i_1<\cdots<i_s\le n},
\end{aligned}
\label{eq:asym-family}
\end{equation}
where $r\in\{2,3,\dots,n\}$ denotes the largest included body order. Thus $\mathcal E^{\mathrm{asym}}_{n,2}=\{X_i,Y_i,Z_i\}\cup\{Z_iZ_j\}_{i<j}$, and each increment of $r$ adds one further layer of $Z$-type strings. In addition, we will encounter mixed two-body families such as
\begin{equation}
\begin{aligned}
\mathcal E^{\mathrm{mix}}_{n}
:=
&\{X_i,Y_i,Z_i\}_{i=1}^{n}\\
&\cup
\{X_iX_j,\ Z_iZ_j,\ X_iZ_j,\ Z_iX_j\}_{1\le i<j\le n},
\end{aligned}
\label{eq:mix-family}
\end{equation}
which include both diagonal and off-diagonal two-body Pauli products.

Stabilizer codes provide a particularly structured family of such projectors. If \(\mathcal R\subset\mathcal P_n\) is an abelian subgroup that does not contain \(-\mathbbm{1}\), then the associated stabilizer code is the common \(+1\) eigenspace
\begin{equation}
\mathcal C
=
\{
|\psi\rangle\in\mathcal H :
R|\psi\rangle=|\psi\rangle,\ \forall\, R\in\mathcal R
\},
\end{equation}
with projector
\begin{equation}
P=\frac{1}{|\mathcal R|}\sum_{R\in\mathcal R}R.
\end{equation}
This projector has rank \(K=2^n/|\mathcal R|\) and satisfies \(PRP=P\) for all \(R\in\mathcal R\). For a Hermitian Pauli operator \(E\), the compression \(PEP\) is determined by its commutation relations with \(\mathcal R\): if \(E\) anticommutes with some element of \(\mathcal R\), then \(PEP=0\) and the error is detected; if \(E\) agrees with a stabilizer element up to sign, then \(PEP=\pm P\); and if \(E\) commutes with every stabilizer but is not equal to a stabilizer up to sign, then \(E\) lies in the normalizer outside the stabilizer and acts as a nontrivial logical operator, so it is not detectable. Stabilizer codes therefore appear in the present framework as distinguished higher-rank compression points whose Pauli compression coefficients lie in the discrete set \(\{0,\pm1\}\).

The compression formalism, however, is not limited to stabilizer projectors. Nonadditive codes satisfy the same Knill--Laflamme compression equations without arising from an abelian stabilizer subgroup, so in the present framework they appear as other feasible higher-rank compression points \cite{Rains1997Nonadditive, Cross2009CWS, Jackson2016CWSAsym}. From the viewpoint of \(\Lambda_K(A)\) and the variance-based quantities discussed later in Sec.~\ref{sec:var}, stabilizer codes should therefore be regarded not as exhausting the feasible geometry, but as a distinguished structured subset inside a generally larger landscape that can also contain genuinely nonadditive exact codes.

The geometry of the feasible higher-rank compression set acquires an additional layer of structure once symmetry restrictions are introduced. In particular, we will study symmetry-compatible detection problems in which cyclic or permutation symmetry preserves the detectable Pauli family and acts consistently on the code space \cite{beigi_et_al:LIPIcs.TQC.2013.192, Ouyang2014PermutationInvariant, Ouyang2016PermutationInvariantMultiQubit}. Within this framework, it will be important to distinguish symmetry of individual codewords from symmetry of the code space itself, or equivalently of its rank-\(K\) projector. This state-level versus projector-level distinction will be central in Sec.~\ref{sec:sym}.

\section{Variance-Based Signatures of Pauli Observables}
\label{sec:var}

In this section, we formulate a variance-based geometric framework for Pauli observables on error-detecting code spaces. Building on the compression perspective of Sec.~\ref{sec:preliminaries}, we express Pauli variances through squared expectation values, organize them into joint higher-rank variance ranges, and from these derive a scalar signature parameter $\lambda^*$, whose attainable values define the signature spectrum.

\subsection{Joint higher-rank variance ranges for Pauli observables}

While the joint higher-rank numerical range \(\Lambda_K\) records simultaneous scalar compressions of operators, our interest here is in the statistical data carried by Pauli observables on the maximally mixed code state \(\rho_P=P/K\).

For a Pauli observable \(E\in\mathcal P_n\), one has \(E^2=\mathbbm{1}\), and hence
\begin{equation}
\mathrm{Var}_\rho(E)
=
\langle E^2\rangle_\rho-\langle E\rangle_\rho^2
=
1-\langle E\rangle_\rho^2 .
\end{equation}
Thus, for Pauli observables, specifying the variance is equivalent to specifying the squared expectation value. The latter is invariant under the sign change \(E\mapsto -E\), so it provides a natural sign-insensitive coordinate for the geometry we wish to study.

For a single Pauli observable \(E\), we write
\begin{equation}
V(E)
:=
\left\{
\langle E\rangle_\rho^2
:\;
\rho\in\mathcal D(\mathcal H)
\right\},
\end{equation}
\begin{equation}
\mathcal D(\mathcal H)
=
\{\rho\in\mathcal B(\mathcal H):\rho\ge0,\ \Tr\rho=1\},
\end{equation}
and refer to this set as the variance range of \(E\). Since a Pauli observable has eigenvalues \(\pm1\), every value in \([0,1]\) is attainable, and therefore \(V(E)=[0,1]\). The nontrivial structure only appears after imposing higher-rank compression and error-detection constraints.

Motivated by recent work on Pauli expectation and variance geometry
\cite{xu2025simultaneous, Cabello2014Graph, XuSchwonnekWinter2024, BorelandTodorovWinter2022ConvexCorners, BorelandTodorovWinter2021, DuanSeveriniWinter2013},
we now fix a finite Pauli error set \(\mathcal E\subset\mathcal P_n\) and an ordered tuple \(\boldsymbol{E}=(E_1,\dots,E_m)\) selected from \(\mathcal E\). If \(P\in\mathcal P_K\) detects all operators in \(\mathcal E\), then in particular each \(E_\alpha\) compresses to a scalar on the code space:
\begin{equation}
P E_\alpha P=\lambda_\alpha(P)\,P,
\qquad \alpha=1,\dots,m .
\end{equation}
Since \(\rho_P=P/K\), these compression coefficients are exactly the corresponding code-state expectations,
\begin{equation}
\lambda_\alpha(P)
=
\Tr(\rho_P E_\alpha)
=
\langle E_\alpha\rangle_{\rho_P}.
\end{equation}
We therefore associate with \(P\) the variance profile
\begin{equation}
q(P)
:=
\bigl(
q_1(P),\dots,q_m(P)
\bigr),
\end{equation}
\begin{equation}
q_\alpha(P)
:=
\lambda_\alpha(P)^2
=
\langle E_\alpha\rangle_{\rho_P}^2,
\label{eq:variance-profile}
\end{equation}
and define the joint rank-$K$ variance range by
\begin{equation}
Q_K(\boldsymbol{E})
:=
\left\{
q(P)\in[0,1]^m
:\;
P\in\mathcal P_K \text{ is }\mathcal E\text{-detecting}
\right\},
\label{eq:QK-def}
\end{equation}
with the ambient detectable set \(\mathcal E\) understood from context. Equivalently, \(Q_K(\boldsymbol{E})\) is the set of all squared compression-coordinate vectors arising from rank-$K$ projectors that detect the prescribed Pauli set.

From this perspective, \(Q_K(\boldsymbol{E})\) is a higher-rank, code-constrained analogue of the joint variance body. Unlike the rank-$1$ setting, however, the admissible points are filtered by the requirement that one and the same rank-$K$ projector satisfy all of the Knill--Laflamme detection constraints. As a result, \(Q_K(\boldsymbol{E})\) is generally neither convex nor closed under arbitrary convex combinations, because mixtures of the maximally mixed code states \(\rho_P\) need not arise from a single rank-$K$ detecting projector.

\subsection{Signature vector and the parameter \texorpdfstring{$\lambda^*$}{lambda*}}

For each rank-$K$ projector \(P\) detecting \(\mathcal E\), the scalar compression coefficients naturally assemble into the signature vector
\begin{equation}
\boldsymbol{\lambda}(P)
:=
\bigl(\lambda_\alpha(P)\bigr)_{\alpha=1}^m
=
\bigl(
\langle E_\alpha\rangle_{\rho_P}
\bigr)_{\alpha=1}^m
\in \mathbb R^m .
\end{equation}
This vector records the simultaneous Pauli expectations visible on the code space. In general its admissible set can inherit the discontinuities and degeneracies of higher-rank compression problems. For example, for two qubits with Pauli observables
\(
E_1=X\otimes I,\;
E_2=X\otimes Z,\;
E_3=Y\otimes I,\;
E_4=Y\otimes Z,\;
E_5=Z\otimes I,
\)
the rank-$2$ compression admits only the two expectation profiles
\((0,0,0,0,1)\) and \((0,0,0,0,-1)\) \cite{du2025klcoefficients}, so the admissible signature vectors form a disconnected set.

To extract a scalar summary that is insensitive to sign changes and permutations of the coordinates, we define the signature norm by
\begin{equation}
\lambda^*(P)
:=
\|\boldsymbol{\lambda}(P)\|_2
=
\left(
\sum_{\alpha=1}^m
\langle E_\alpha\rangle_{\rho_P}^2
\right)^{1/2}.
\label{eq:lambda-star}
\end{equation}
Combining \eqref{eq:variance-profile} and \eqref{eq:lambda-star} gives
\begin{equation}
\lambda^{*2}(P)=\sum_{\alpha=1}^m q_\alpha(P).
\end{equation}
Hence \(\lambda^{*2}\) is exactly the image of \(Q_K(\boldsymbol{E})\) under the linear functional \(q\mapsto \sum_\alpha q_\alpha\). The quantity \(\lambda^*\) itself is the corresponding Euclidean radius, providing an isotropic one-dimensional summary of the variance geometry that is particularly convenient for analysis and numerical optimization.

\subsection{Problem statement and the interval phenomenon}

With \(\mathcal E\) and the tuple \(\boldsymbol{E}=(E_1,\dots,E_m)\) fixed, the set of all attainable signature norms is
\begin{equation}
\Sigma_K(\boldsymbol{E})
:=
\left\{
\lambda^*(P)
:\;
P\in\mathcal P_K \text{ is }\mathcal E\text{-detecting}
\right\},
\end{equation}
again with the background detectable set \(\mathcal E\) understood from context. Since \(|\lambda_\alpha(P)|\le1\) for every \(\alpha\), one always has
\begin{equation}
\Sigma_K(\boldsymbol{E})\subseteq[0,\sqrt m].
\end{equation}
Whenever \(Q_K(\boldsymbol{E})\) is non-empty, the spectrum \(\Sigma_K(\boldsymbol{E})\) gives a one-dimensional summary of the underlying higher-rank variance geometry. Our aim is to understand its extremal values
\(\lambda_{\min}:=\inf\Sigma_K(\boldsymbol{E})\) and
\(\lambda_{\max}:=\sup\Sigma_K(\boldsymbol{E})\), together with the overall shape of the set.

Stabilizer codes provide a useful discrete reference class. As recalled in Sec.~\ref{sec:preliminaries}, if \(P\) is a stabilizer projector and \(E_\alpha\) is a Hermitian Pauli operator that compresses scalarly on the code space, then the corresponding coefficient \(\lambda_\alpha(P)\) belongs to \(\{0,\pm1\}\): it is \(0\) for detected Paulis and \(\pm1\) when \(E_\alpha\) agrees with a stabilizer element up to sign. Consequently, every stabilizer signature vector has coordinates in \(\{0,\pm1\}\), and the associated signature norm belongs to a discrete subset of \([0,\sqrt m]\). Stabilizer codes therefore serve as natural reference points for the extremal structure of \(\Sigma_K(\boldsymbol{E})\).

A second structural feature comes from Clifford symmetry. Let \(U\) be an \(n\)-qubit Clifford unitary and set \(P'=UPU^\dagger\). Then
\begin{equation}
\lambda_\alpha(P')
=
\Tr(\rho_{P'}E_\alpha)
=
\Tr(\rho_P U^\dagger E_\alpha U).
\end{equation}
If the Clifford action preserves the chosen tuple up to signs and permutations, so that
\begin{equation}
U^\dagger E_\alpha U=\pm E_{\pi(\alpha)}
\end{equation}
for some permutation \(\pi\), then the signature vector transforms by a signed permutation,
\begin{equation}
\boldsymbol{\lambda}(P')=D\Pi\,\boldsymbol{\lambda}(P),
\end{equation}
where \(\Pi\) is the permutation matrix of \(\pi\) and \(D\) is a diagonal matrix with entries \(\pm1\). In particular, \(\lambda^*(P')=\lambda^*(P)\). Moreover, if the full detectable set \(\mathcal E\) is invariant under the same Clifford action, then Clifford conjugation preserves the admissible family of rank-$K$ projectors and hence leaves \(\Sigma_K(\boldsymbol{E})\) unchanged.

Extensive numerical computations for small system sizes and a variety of Pauli sets reveal a striking regularity, which we record as an empirical observation.

\begin{Observation}[Interval property for Pauli signature spectra]
\label{ob:interval}
Fix a finite Pauli set \(\mathcal E\), an ordered tuple \(\boldsymbol{E}=(E_1,\dots,E_m)\) selected from \(\mathcal E\), and an integer \(K\) for which \(Q_K(\boldsymbol{E})\) is non-empty. In all cases tested, the signature spectrum \(\Sigma_K(\boldsymbol{E})\) is a single closed interval,
\begin{equation}
\Sigma_K(\boldsymbol{E})=[\lambda_{\min},\lambda_{\max}]\subseteq[0,\sqrt m].
\end{equation}
\end{Observation}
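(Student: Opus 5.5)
The plan splits the statement into a part that can be proved in general and a part that is only conjectural. The general part is that $\Sigma_K(\boldsymbol E)$ is \emph{closed}, with $\lambda_{\min}$ and $\lambda_{\max}$ attained. The conjectural part is that $\Sigma_K(\boldsymbol E)$ is \emph{connected}. The introduction already says a general proof is open, and the externally imposed cyclic examples of Sec.~\ref{sec:n3_k2} show that connectedness cannot follow from soft arguments alone. So the aim is a rigorous proof of closedness plus a reduction of connectedness to a concrete structural claim about the feasible set, checked case by case where it cannot be proved.

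\textbf{Step 1 (compactness).} Consider the feasible set
\begin{equation}
\mathcal F_K(\mathcal E)=\{P\in\mathcal P_K:\ PEP=\Tr(PE)P/K\ \ \forall E\in\mathcal E\}.
\end{equation}
The set $\mathcal P_K$ is the Grassmannian $\mathrm{Gr}(K,2^n)$, which is compact. Each detection equation is polynomial in the entries of $P$, because the scalar $\alpha_E$ is forced to equal $\Tr(PE)/K$. Hence $\mathcal F_K(\mathcal E)$ is a compact real algebraic subset of $\mathrm{Gr}(K,2^n)$. The map $P\mapsto\lambda^*(P)$ is continuous, so $\Sigma_K(\boldsymbol E)$ is compact. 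Consequently $\lambda_{\min}$ and $\lambda_{\max}$ are attained, and $\Sigma_K(\boldsymbol E)$ is a closed interval if and only if it is connected.

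\textbf{Step 2 (reduction to components).} Since $\mathcal F_K(\mathcal E)$ is a compact real algebraic set, it has finitely many connected components $\mathcal F^{(1)},\dots,\mathcal F^{(r)}$, each path-connected. Each image $\lambda^*(\mathcal F^{(i)})$ is therefore a closed interval $I_i$. Connectedness of $\Sigma_K(\boldsymbol E)=\bigcup_i I_i$ then reduces to showing that the graph on $\{1,\dots,r\}$ with edges $\{i,j\}$ whenever $I_i\cap I_j\neq\varnothing$ is connected.

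I would try to establish this in two ways.

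\emph{First route: symmetry.} Use the group $G$ of Clifford unitaries and qubit permutations that preserve $\mathcal E$. It acts on $\mathcal F_K(\mathcal E)$, permutes the components, and by the signed-permutation identity $\boldsymbol\lambda(UPU^\dagger)=D\Pi\,\boldsymbol\lambda(P)$ preserves $\lambda^*$ whenever it also preserves $\boldsymbol E$. Components in the same $G$-orbit therefore have identical intervals. The two-qubit example with profiles $(0,0,0,0,\pm1)$ is exactly of this type: the vector set is disconnected but $\lambda^*$ is constant.

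\emph{Second route: gluing components.} For components in different $G$-orbits, look for a common boundary point, such as a stabilizer projector or a direct-sum projector $P_1\oplus P_2$ built from lower-rank codes on orthogonal Pauli-eigenspaces, that lies in the closure of both. Such a point links them directly.

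\textbf{Main obstacle.} The hard step is Step~2 for components not related by symmetry. Nothing in the general higher-rank numerical-range theory guarantees that their $\lambda^*$-intervals overlap. The symmetry-incompatible counterexample with spectrum $\{0,1\}$ shows the argument must use stability of $\mathcal E$ under the relevant group.

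What I would try first is a Morse-theoretic argument on $\lambda^{*2}$ restricted to the smooth stratum of $\mathcal F_K(\mathcal E)$, using the Stiefel-manifold gradient flow of Sec.~\ref{sec:opt}. The target is to show that every local maximum is a global maximum, with an analogous statement for minima, modulo the $G$-action. If the only local extrema are $G$-equivalent, every component's interval must contain both $\lambda_{\min}$ and $\lambda_{\max}$, which gives the interval property.

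Where this cannot be proved, I would certify it per case. The numerical searches would collect the critical values of $\lambda^{*}$, and interlacing bounds would show there are no gaps. This is consistent with the statement being recorded as an observation rather than a theorem.
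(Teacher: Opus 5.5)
Your Step~1 is correct, and the paper never states it. Because $\alpha_E$ is forced to equal $\Tr(PE)/K$, the feasible set $\mathcal F_K(\mathcal E)$ is closed in the compact Grassmannian, so $\Sigma_K(\boldsymbol E)$ is compact. The word ``closed'' in the Observation is therefore a theorem, and only connectedness is empirical. This matters because the paper's purely numerical endpoint searches could not by themselves distinguish an attained endpoint from a supremum. Your Step~2 reduction to an overlap graph of component images is also sound.

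The gap is in the step that carries all the empirical content: how connectedness is certified in the tested cases. Numerically collected critical values and interlacing bounds can locate or constrain endpoints, as in the interlacing bound $\lambda^*\le 1/\sqrt3$ for cyclic $\mathcal E_1$. Neither shows that the values in between are attained, which is exactly what connectedness requires. What is needed, and what the paper supplies, is attainment: a continuous family of exact detecting projectors running from a $\lambda_{\min}$-code to a $\lambda_{\max}$-code. The intermediate value theorem then fills the interval, with no control over any other component. The paper does this with:
\begin{itemize}
\item $P(a,b)$ with $\lambda^*=2|ab|$ for two qubits;
\item $P_1(\theta)$, $P_2(t,t)$, $P_3(t)$, which join stabilizer endpoints for $\mathcal E_1,\mathcal E_2,\mathcal E_3$;
\item $P(x)$ for $\mathcal E^{\mathrm{mix}}_5$;
\item for $\mathcal E^{\mathrm{asym}}_{5,2}$, two families whose $\lambda^{*2}$-ranges $[0,5/2]$ and $[2,3]$ overlap, which is precisely an edge of your overlap graph.
\end{itemize}
Where no closed form is available, as for the random three-qubit tuples, the paper probes interior values directly with dense target scans of \eqref{eq:Lscan}.

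Your general routes also misfire as stated.
\begin{itemize}
\item \emph{Gluing.} Distinct connected components of $\mathcal F_K(\mathcal E)$ are disjoint closed sets, so no projector can lie in the closure of two of them. A ``common boundary point'' only shows that two families were in one component all along. Distinct components can be linked only through overlapping $\lambda^*$-images.
\item \emph{Morse target.} Requiring every local extremum of $\lambda^{*2}$ to be global, so that every component maps onto $[\lambda_{\min},\lambda_{\max}]$, is strictly stronger than the Observation. It fails as soon as some component's image lies strictly inside the interval, for example an isolated rigid code at an interior value, which is simultaneously a local maximum and a local minimum. Nothing in the Observation excludes this.
\item \emph{Reading of the $\{0,1\}$ example.} It does not show that stability of $\mathcal E$ under a group is needed, since the Observation covers random tuples with no such stability. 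Writing $P=B_0\Phi\Phi^\dagger B_0^\dagger$, that example is an unrestricted rank-$2$ compression problem on $\mathbb C^4$ for the non-Pauli operators $B_0^\dagger E B_0$ (e.g.\ $\tfrac23 ZX+\tfrac13 YY$ in the paper's rotated basis). What it shows is that the interval property fails for general Hermitian tuples, so any proof must exploit properties of genuine Pauli tuples that are lost under compression to $\mathcal H_0$.
\end{itemize}
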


Concrete analytic examples of this phenomenon already appear in the distance-$3$ families studied in Ref.~\cite{du2025klcoefficients}.  For the $((6,2,3))$ family, the attainable signature norm was shown to fill the interval
$\lambda^*\in\left[\sqrt{0.6},\,1\right]$,
with the endpoint $\lambda^*=1$ realized by a degenerate stabilizer code and the interior realized by continuous families of nonadditive codes.  For the $((7,2,3))$ family, the corresponding interval is
$\lambda^*\in[0,\sqrt7]$,
where the lower endpoint is realized by the nondegenerate Steane code \cite{Steane1996ErrorCorrectingCodes}, the upper endpoint by the permutation-invariant Pollatsek--Ruskai code \cite{PollatsekRuskai2004}, and intermediate values by cyclic nonadditive codes.  Equivalently, in these distance-$3$ examples one is correcting single-qubit Pauli errors, or detecting the Pauli products of weight at most two that appear in the Knill--Laflamme products.  These two families provide concrete analytic instances of interval spectra, but the evidence developed below indicates that the phenomenon is much more general: for Pauli-detecting problems, including random and asymmetric Pauli error models, the attainable scalar spectrum appears to remain a single closed interval whenever it is nonempty.  Thus the interval behavior is not merely a special feature of these two distance-$3$ code families, but seems to reflect a broader regularity of exact Pauli-detecting code geometry.

At present, Observation~\ref{ob:interval} concerns only the one-dimensional scalar summary extracted from \(Q_K(\boldsymbol{E})\) through the map
\begin{equation}
q\longmapsto \left(\sum_{\alpha=1}^m q_\alpha\right)^{1/2}.
\end{equation}
In many examples, the full set \(Q_K(\boldsymbol{E})\) itself also appears to be connected inside \([0,1]^m\), but a systematic study of that higher-dimensional connectedness problem is beyond the scope of this paper. For the purposes of the present work, \(\lambda^*\) will therefore serve as the primary organizing parameter.


\section{Symmetry Restrictions}
\label{sec:sym}

With \(\lambda^*\) established as the organizing parameter of the signature spectrum, the next natural question is how symmetry constraints on the code space influence the attainable range of \(\lambda^*\). The symmetry considerations in this paper are most naturally formulated not by imposing cyclic or permutation invariance on an otherwise arbitrary detection problem, but by restricting attention from the outset to \emph{symmetry-compatible} detection problems, namely those for which the detectable Pauli family is itself stable under the same qubit-relabeling action. As discussed in Sec.~\ref{sec:preliminaries}, the detectable Pauli families considered here fall into two broad classes. The first is the standard distance-based weight-bounded family \(\mathcal E_d\) defined in \eqref{eq:Ed-def}. Because Pauli weight is preserved under qubit permutations, \(\mathcal E_d\) is invariant under the full symmetric group, and therefore under any subgroup, including the cyclic group. The second class consists of asymmetric or channel-adapted families specified directly rather than through a single parameter \(d\), as in \eqref{eq:asym-family} and \eqref{eq:mix-family}. Although such families are not determined by a uniform weight threshold, they are still organized into symmetry orbits and remain stable under the relevant group action.

Within this symmetry-compatible framework, we consider two symmetry settings---cyclic and permutation symmetry---and, for each, distinguish two ways of imposing symmetry on the code space: a weaker \emph{projector-level} condition, requiring invariance of the code space as a subspace, and a stronger \emph{state-level} condition, requiring each codeword to be individually fixed by the relevant group action. In all symmetry-compatible cases studied here, however, the interval phenomenon persists: whenever the attainable \(\lambda^*\)-spectrum is nonempty, it forms a single closed interval.

\subsection{Symmetry constraints on \texorpdfstring{rank-$K$}{rank-K} code projectors}

Let \(G\) be a finite group acting unitarily on \(\mathcal H\) through a representation
\begin{equation}
g\longmapsto U_g .
\end{equation}
We call the pair \((\mathcal E,G)\) \emph{symmetry-compatible} when the detectable Pauli set is \(G\)-stable, meaning
\begin{equation}
U_g\,\mathcal E\,U_g^\dagger=\mathcal E,
\qquad \forall\,g\in G.
\end{equation}
A rank-$K$ projector \(P\in\mathcal P_K\) will be called \(G\)-symmetric when
\begin{equation}
U_g P U_g^\dagger=P,
\qquad \forall\,g\in G,
\label{eq:G-inv}
\end{equation}
or equivalently when \(P\) commutes with every \(U_g\). This condition requires only that the code space \(\operatorname{Ran}(P)\) be invariant under the group action; it does not require individual codewords to be fixed by the symmetry. The usefulness of \eqref{eq:G-inv} is immediate at the level of detection constraints. If \(P\) is both \(G\)-symmetric and \(\mathcal E\)-detecting, then for every \(E\in\mathcal E\) and every \(g\in G\),
\begin{equation}
P(U_g E U_g^\dagger)P
=
U_g(PEP)U_g^\dagger
=
\alpha_E P.
\end{equation}
Hence all Pauli operators belonging to the same \(G\)-orbit have the same compression coefficient,
\begin{equation}
\alpha_{U_g E U_g^\dagger}=\alpha_E,
\end{equation}
so the detection equations need only be enforced on orbit representatives. When the chosen tuple \(\boldsymbol{E}=(E_1,\dots,E_m)\) is likewise preserved up to signs and permutations, the signature norm \(\lambda^*\) from Sec.~\ref{sec:var} is compatible with the same symmetry.

If \(\mathcal E\) is not \(G\)-stable, one may still intersect the family of \(\mathcal E\)-detecting projectors with the \(G\)-symmetric ones. In that case, however, the orbit-reduction interpretation described above no longer applies, and the problem is better understood as one with an externally imposed symmetry constraint rather than a genuinely symmetry-adapted setting.

The representation-theoretic content of \eqref{eq:G-inv} is most transparent after decomposing \(\mathcal H\) into isotypic components. Choose a decomposition
\begin{equation}
\mathcal H
\cong
\bigoplus_{\mu}
\mathcal M_\mu\otimes \mathcal R_\mu,
\qquad
U_g
=
\bigoplus_{\mu}
\left(
I_{\mathcal M_\mu}\otimes r_\mu(g)
\right),
\end{equation}
where each \(\mathcal R_\mu\) carries an irreducible representation \(r_\mu\) of \(G\), and \(\mathcal M_\mu\) is the corresponding multiplicity space. By Schur's lemma \cite{Serre1977LinearRep}, a projector commutes with the full \(G\)-action if and only if it has the form
\begin{equation}
P
=
\bigoplus_{\mu}
\left(
Q_\mu\otimes I_{\mathcal R_\mu}
\right),
\qquad
Q_\mu^\dagger=Q_\mu^2=Q_\mu,
\end{equation}
with rank constraint
\begin{equation}
\sum_{\mu}\operatorname{rank}(Q_\mu)\,\dim\mathcal R_\mu=K.
\end{equation}
Thus symmetry reduces the search for \(P\) to a family of projectors on multiplicity spaces. This makes precise the distinction between projector-level and state-level symmetry: projector-level symmetry only asks that the code subspace be invariant, whereas state-level symmetry would confine every codeword to the trivial invariant sector.

For the fixed tuple \(\boldsymbol{E}=(E_1,\dots,E_m)\subset\mathcal E\), we therefore consider the symmetry-restricted signature spectrum
\begin{equation}
\Sigma_K^{(G)}(\boldsymbol{E})
:=
\left\{
\lambda^*(P)
:
\begin{array}{l}
P\in\mathcal P_K \text{ is }\mathcal E\text{-detecting}\\ 
\text{and}~G\text{-symmetric}
\end{array}
\right\},
\end{equation}
which is a subset of the unrestricted spectrum \(\Sigma_K(\boldsymbol{E})\). Extensive computations indicate that the interval phenomenon from Sec.~\ref{sec:var} persists under symmetry restriction for the symmetry-compatible error sets considered here.

\begin{Observation}[Interval property under symmetry restriction]
\label{ob:interval-sym}
In all cases tested, whenever \(\Sigma_K^{(G)}(\boldsymbol{E})\) is non-empty, it is a closed interval,
\begin{equation}
\Sigma_K^{(G)}(\boldsymbol{E})
=
[\lambda_{\min}^{(G)},\lambda_{\max}^{(G)}]
\subseteq[0,\sqrt m].
\end{equation}
\end{Observation}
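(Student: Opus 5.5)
This is an empirical observation, so the realistic plan is to reduce it to a geometric statement about the restricted feasible set, prove the easy half, and identify where a general argument must come from. We write
\[
\mathcal F^{(G)}=\{P\in\mathcal P_K:\ P \text{ is } \mathcal E\text{-detecting and } G\text{-symmetric}\}.
\]
The map $P\mapsto\lambda^*(P)$ is a polynomial in the entries of $P$ (through $\Tr(PE_\alpha)/K$), hence continuous. The set $\mathcal F^{(G)}$ is cut out of the compact Grassmannian $\mathcal P_K$ by closed polynomial conditions: $PEP-\Tr(PE)P/K=0$ for all $E\in\mathcal E$, and $[P,U_g]=0$ for all $g\in G$. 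So $\mathcal F^{(G)}$ is a compact real algebraic set. Closedness and boundedness of $\Sigma_K^{(G)}(\boldsymbol E)$ follow at once, since it is the continuous image of a compact set. The endpoints $\lambda^{(G)}_{\min}$ and $\lambda^{(G)}_{\max}$ are therefore attained, and the containment in $[0,\sqrt m]$ is inherited from the unrestricted spectrum.

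The real content is connectedness of the image. Our approach is to show that the relevant part of $\mathcal F^{(G)}$ is path-connected, or at least that the fibres over the two extremal values can be joined by a path inside $\mathcal F^{(G)}$. Using the Schur decomposition $P=\bigoplus_\mu Q_\mu\otimes I_{\mathcal R_\mu}$, we first split $\mathcal F^{(G)}$ according to the rank vector $(\operatorname{rank}Q_\mu)_\mu$. Within a fixed rank vector the parameter space is a product of Grassmannians on the multiplicity spaces, and the detection equations reduce to orbit representatives. We then try to connect points by deforming one block $Q_\mu$ at a time while keeping every compression scalar. The natural device is the compression $Q\mapsto$ the numerical-range data of each orbit-averaged operator $\sum_{g}U_gEU_g^\dagger$ restricted to $\mathcal M_\mu$. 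Here we would invoke the convexity of the higher-rank numerical range of a \emph{single} Hermitian operator (Choi--Kribs--\.Zyczkowski, Li--Poon--Sze) to move one coefficient along a segment. For a general tuple, however, the joint range is not convex.

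This is why we would target the scalar $\lambda^{*2}=\sum_\alpha q_\alpha$ directly rather than the vector. The plan is to show that the set of attained values of the linear functional $q\mapsto\sum_\alpha q_\alpha$ on $Q_K$ has no gaps, by exhibiting, for any two feasible projectors, a one-parameter family of feasible projectors interpolating between them. Candidates are direct-sum interpolations $P_t$ built from a common refinement of the two code spaces, or a rotation inside a larger feasible subspace $V$ on which all $E\in\mathcal E$ compress to block-scalars. Along such a path $\lambda^*(P_t)$ is continuous, and the intermediate value theorem finishes the argument. Where the rank vectors of the two endpoints differ, we would need a second mechanism: show that the extremes are attained within a single rank vector, or that adjacent rank vectors share a common boundary point in $\mathcal P_K$.

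The main obstacle is this connectedness step. Joint higher-rank numerical ranges of Pauli tuples can be disconnected, as in the $(0,0,0,0,\pm1)$ example quoted before Observation~\ref{ob:interval}. The argument must therefore use either the sign-insensitivity of $\lambda^*$, which identified those two components, or the $G$-stability of $\mathcal E$. The Case~4 phenomenon shows that without stability the spectrum can genuinely be $\{0,1\}$. We would first attempt to prove that the Clifford and sign symmetries of the $G$-stable problem identify all components of $\mathcal F^{(G)}$ up to $\lambda^*$-preserving maps, so that $\lambda^*$ takes the same value set on each component. Failing a general proof, the fallback is to verify the statement case by case. Small systems would be handled via eigenvalue interlacing on the symmetry sectors, as the paper does for $n=2$ and $n=3$. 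Larger ones would be handled by a Stiefel-manifold continuation that tracks $\lambda^*$ along connected solution paths between the numerically found extremizers.
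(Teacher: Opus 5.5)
\textbf{Overall.} Your compactness half is correct and goes beyond what the paper writes. The connectedness half does not: the general route you sketch fails at specific steps, and only your case-by-case fallback actually supports the observation. That fallback is exactly how the paper supports it.

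\textbf{The compactness half.} The paper introduces the endpoints as an infimum and a supremum and never notes that $\mathcal F^{(G)}$ is a closed subset of $\mathcal P_K$, so it never observes that both endpoints are attained. You can push your argument one step further. A compact real algebraic set has finitely many connected components, each path-connected. Hence $\Sigma_K^{(G)}(\boldsymbol E)$ is automatically a finite union of closed intervals (possibly points), and the whole content of the observation is that these pieces overlap in a chain.

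\textbf{What the paper actually does.} On that content the paper gives no general argument; it explicitly calls the interval phenomenon evidence, not a theorem. Its support is precisely your fallback:
\begin{itemize}
\item per-case bounds, e.g.\ interlacing on the $n=3$ collective-spin block giving $\lambda^*\le 1/\sqrt3$ for $\mathcal E_1$;
\item explicit feasible families swept by a continuous parameter: the cyclic $((5,2,2))$ family $P(t)$, the $\mathcal E^{\mathrm{mix}}_5$ family $P(x)$, and the $c_0^2$-parametrized permutation-invariant solutions for $\mathcal E^{\mathrm{asym}}_{5,2}$;
\item branchwise Stiefel searches with target grids in $\lambda^{*2}$ for the purely numerical entries.
\end{itemize}
Your path-continuation variant would be slightly stronger evidence than a grid, since it exhibits a path, but it is equally numerical.

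\textbf{Where the general route breaks.} It should not be presented as a viable proof strategy without a new idea, for four reasons.
\begin{itemize}
\item \emph{Rank vectors cannot be joined.} Let $\Pi_\mu$ project onto the isotypic component $\mathcal M_\mu\otimes\mathcal R_\mu$. The map $P\mapsto\Tr(P\Pi_\mu)=\operatorname{rank}(Q_\mu)\dim\mathcal R_\mu$ is continuous and discrete-valued on $G$-symmetric projectors. So each rank-vector stratum is open and closed, and two strata never share a boundary point.
\item \emph{Symmetries do not identify all components.} Your hope that sign and Clifford symmetries give every component the same $\lambda^*$-image is ruled out by the paper's $((5,3,2))$ cyclic data. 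The $(3,0,0,0,0)$ branch is frozen at $\{1/\sqrt5\}$, while the $(1,1,0,0,1)$ branch fills $\approx[0,0.7575]$. The union is an interval only because one image contains the other. A general argument must prove such nesting or overlap, which is not a symmetry statement.
\item \emph{Orbit averaging loses constraints.} The reduction to orbit averages on $\mathcal M_\mu$ is exact only when $\mathrm{Ran}(P)$ lies in the trivial isotypic component, where $U_gP=P$ gives $PEP=P\bar EP$. With nontrivial sectors, $P\bar EP=\alpha P$ is strictly weaker than $PEP=\alpha P$, because it discards cross-sector blocks. For example, the swap-invariant projector onto $\mathrm{span}\{|01\rangle,|10\rangle\}$ satisfies $P\,\tfrac12(Z_1+Z_2)\,P=0$, yet $PZ_1P$ is not scalar.
\item \emph{Single-operator convexity does not help.} Convexity of $\Lambda_K$ for one Hermitian operator says nothing about moving one coefficient while the others stay scalar. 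That is exactly the joint problem where convexity is lost.
\end{itemize}

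\textbf{Net.} Your proposal establishes closedness in general and connectedness only case by case, which is also where the paper stands.
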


Thus, within a symmetry-compatible coding problem, imposing symmetry typically shrinks or preserves the interval endpoints but does not break the interval into disconnected pieces.

\subsection{Cyclic symmetry}

Let \(T\) denote the unitary implementing a one-step cyclic shift of qubits,
\begin{equation}
T\,|b_1 b_2 \cdots b_n\rangle
=
|b_n b_1 \cdots b_{n-1}\rangle,
\qquad b_i\in\{0,1\},
\end{equation}
so that \(T^n=\mathbbm{1}\). The cyclic group \(C_n=\langle T\rangle\cong\mathbb Z_n\) acts unitarily on \(\mathcal H\), and a rank-$K$ projector \(P\) is cyclic precisely when
\begin{equation}
TPT^\dagger=P,
\end{equation}
or equivalently when \([P,T]=0\).

Because \(C_n\) is abelian, all of its irreducible representations are one-dimensional. Writing \(\omega=e^{2\pi i/n}\), the corresponding character sectors are
\begin{equation}
\mathcal H_\ell
:=
\{\,|\psi\rangle\in\mathcal H : T|\psi\rangle=\omega^\ell |\psi\rangle\,\},
\qquad
\ell=0,1,\dots,n-1,
\end{equation}
and they yield the decomposition
\begin{equation}
\mathcal H=\bigoplus_{\ell=0}^{n-1}\mathcal H_\ell.
\end{equation}
The projector onto \(\mathcal H_\ell\) is the Fourier projector
\begin{equation}
\Pi_\ell
=
\frac{1}{n}\sum_{t=0}^{n-1}\omega^{-\ell t}T^t.
\label{eq:cyclic-sector-proj}
\end{equation}
A projector is cyclic if and only if it is block diagonal with respect to this decomposition, equivalently
\begin{equation}
P=\sum_{\ell=0}^{n-1}\Pi_\ell P \Pi_\ell.
\end{equation}
Hence a cyclic projector may contain an arbitrary rank-$k_\ell$ subprojector inside each sector \(\mathcal H_\ell\), provided that \(\sum_{\ell=0}^{n-1}k_\ell=K\). In particular, projector-level cyclic symmetry still leaves substantial internal freedom whenever the sectors \(\mathcal H_\ell\) have dimension larger than one.

Literal state-level cyclic invariance is more restrictive. Requiring each codeword to satisfy
\begin{equation}
T|\psi\rangle=|\psi\rangle
\end{equation}
confines the code to the trivial character sector \(\mathcal H_0\). Projector-level cyclic symmetry, by contrast, only asks that the code space be invariant as a whole, so the projector may have support on one or several character sectors. This distinction will be important in the analytic examples and in the symmetry-adapted optimization schemes later in the paper.

\subsection{Permutation invariance}

For the full symmetric group \(S_n\), let \(U_\pi\) denote the unitary that permutes tensor factors according to \(\pi\in S_n\). A rank-$K$ projector \(P\in\mathcal P_K\) is permutation-invariant precisely when
\begin{equation}
U_\pi P U_\pi^\dagger=P,
\qquad \forall\,\pi\in S_n.
\end{equation}

The \(S_n\)-action admits a richer decomposition because \(S_n\) is non-abelian. For \(n\) qubits, Schur--Weyl duality gives \cite{FultonHarris1991}
\begin{equation}
\mathcal H
\cong
\bigoplus_{\lambda\vdash n,\ \text{at most two rows}}
\mathcal M_\lambda\otimes \mathcal R_\lambda,
\end{equation}
\begin{equation}
U_\pi
=
\bigoplus_{\lambda}
\left(
I_{\mathcal M_\lambda}\otimes r_\lambda(\pi)
\right),
\end{equation}
where \(\mathcal R_\lambda\) is the irreducible \(S_n\)-module labeled by the partition \(\lambda\), and \(\mathcal M_\lambda\) is the corresponding multiplicity space. Accordingly, a permutation-invariant projector has the commutant form
\begin{equation}
P
=
\bigoplus_{\lambda}
\left(
Q_\lambda\otimes I_{\mathcal R_\lambda}
\right),
\end{equation}
where each \(Q_\lambda\) is a projector on \(\mathcal M_\lambda\) and
\begin{equation}
\sum_{\lambda}\operatorname{rank}(Q_\lambda)\,\dim\mathcal R_\lambda=K.
\end{equation}
Projector-level permutation symmetry therefore permits code spaces supported on nontrivial \(S_n\) sectors, as long as they are assembled in this representation-theoretic way.

The most familiar special case is the trivial representation sector, namely the fully symmetric subspace
\begin{equation}
\mathcal H_{\mathrm{sym}}
=
\{\,|\psi\rangle\in\mathcal H : U_\pi|\psi\rangle=|\psi\rangle\ \forall\,\pi\in S_n\,\}.
\end{equation}
This subspace is spanned by the Dicke states \cite{Dicke1954}
\begin{equation}
|D_{n,w}\rangle
:=
\binom{n}{w}^{-1/2}
\sum_{\substack{x\in\{0,1\}^n\\ |x|=w}}
|x\rangle,
\qquad
w=0,1,\dots,n,
\end{equation}
so \(\dim\mathcal H_{\mathrm{sym}}=n+1\). Any projector supported on \(\mathcal H_{\mathrm{sym}}\) is automatically permutation-invariant, and many explicit code constructions exploit exactly this sector.

However, this familiar state-level model does not exhaust projector-level permutation invariance. Requiring each codeword to be permutation-invariant forces the code into \(\mathcal H_{\mathrm{sym}}\), whereas requiring only
\begin{equation}
U_\pi P U_\pi^\dagger=P
\end{equation}
allows the code space to occupy additional nontrivial \(S_n\) sectors. The latter condition is strictly weaker and, in practice, often yields a much larger feasible family of \(\mathcal E\)-detecting codes. We will keep these two notions separate throughout the remainder of the paper.

\section{Classification for the Two-qubit Case}
\label{sec:cases}

We now illustrate the geometric mechanisms behind the signature spectrum through explicit classification of two-qubit case. In these low-dimensional settings, the Knill--Laflamme compression conditions $P E P=\alpha_E P$ can often be analyzed exactly, allowing us to identify when $\Sigma_K(\boldsymbol{E})$ forms a continuous interval and when symmetry constraints instead produce rigidity or nonexistence. These examples also prepare the ground for the more subtle existence gaps between state-level and projector-level symmetries discussed in Section~\ref{sec:state_vs_projector}.


We begin with the smallest nontrivial setting: two qubits ($n=2$) and a rank-$2$ code projector ($K=2$). Throughout, $\mathcal{H}\cong(\mathbb{C}^2)^{\otimes 2}$ with computational basis $\{\lvert 00\rangle,\lvert 01\rangle,\lvert 10\rangle,\lvert 11\rangle\}.$ We use the shorthand
\begin{equation}
\begin{aligned}
X_1&:=X\otimes I, & Y_1&:=Y\otimes I, & Z_1&:=Z\otimes I,\\
X_2&:=I\otimes X, & Y_2&:=I\otimes Y, & Z_2&:=I\otimes Z.
\label{eq:twoqubit_paulis}
\end{aligned}
\end{equation}
The projector $P$ \emph{detects} $\mathcal{E}$ iff
\begin{equation}
P E P=\alpha_E P, \qquad \forall\,E\in\mathcal{E},
\label{eq:twoqubit_KL}
\end{equation}
for scalars $\alpha_E\in\mathbb{R}$, and the corresponding signature coordinates are $\lambda_E(P)=\Tr(\rho_P E)=\alpha_E$ with $\rho_P=P/2$.

\subsection{Unconstrained classification.}
We first analyze the problem without imposing any symmetry constraint on the code space. In this unconstrained setting the Knill--Laflamme conditions are invariant under global unitary conjugation: if $(\mathcal{E},P)$ satisfies~\eqref{eq:twoqubit_KL}, then so does $(U\mathcal{E} U^\dagger ,\, U P U^\dagger)$. Since the Pauli operators are closed under Clifford conjugation, the signature spectrum depends only on the Clifford equivalence class of $\mathcal{E}$, and solutions may therefore be organized up to two-qubit Clifford operations and qubit relabeling.

Exhaustive enumeration over Pauli subsets $\mathcal{E}\subset \mathcal{P}_2\setminus\{I\}$, organized up to two-qubit Clifford conjugation and qubit relabeling, shows that the admissible instances are precisely those for which $Q_2(\boldsymbol{E})\neq\emptyset$, and that these fall into a finite collection of equivalence classes. For every such admissible $\mathcal{E}$ the signature spectrum takes one of the three forms
\begin{equation}
\Sigma_2(\boldsymbol{E}) \in \bigl\{ \{0\},\ \{1\},\ [0,1] \bigr\},
\label{eq:n2_three_forms}
\end{equation}
and hence is always a single closed interval, in agreement with Observation~\ref{ob:interval}.

These three possibilities are fully consistent with our numerical optimization. Classifying all $35$ representative Pauli error sets $\mathcal{E}$ in the unconstrained two-qubit problem yields exactly these three categories. The corresponding sets are listed below:

\begin{widetext}
\raggedright
\noindent\textbf{Continuous Interval Group ($\Sigma_2(\boldsymbol{E}) = [0,1]$):} \\
$\{IX\}$, $\{IX, IY\}$, $\{IX, IY, XX\}$, $\{IX, IY, XZ\}$, $\{IX, IY, XZ, YZ\}$, $\{IX, IY, XI, YZ\}$, $\{IX, IY, XI, XZ, YZ\}$.

\vspace{0.5em}
\noindent\textbf{Singleton Zero Group ($\Sigma_2(\boldsymbol{E}) = \{0\}$):} \\
$\{IX, XI\}$, $\{IX, IY, XI\}$, $\{IX, IY, XI, XZ\}$, $\{IX, IY, XI, YI\}$, $\{IX, IY, XI, YX\}$, $\{IX, IY, XX, XY\}$, $\{IX, IY, XI, XZ, YI\}$, $\{IX, IY, XI, YX, YY\}$, $\{IX, IY, XI, XZ, YI, YZ\}$, $\{IX, IY, XI, XZ, YI, ZX\}$, $\{IX, IY, XI, YX, YY, ZZ\}$, $\{IX, IY, XI, XZ, YI, YZ, ZX\}$, $\{IX, IY, XI, XZ, YI, YZ, ZX, ZY\}$.

\vspace{0.5em}
\noindent\textbf{Singleton One Group ($\Sigma_2(\boldsymbol{E}) = \{1\}$):} \\
$\{IX, IY, IZ\}$, $\{IX, IY, IZ, XX\}$, $\{IX, IY, IZ, XX, XY\}$, $\{IX, IY, IZ, XX, YX\}$, $\{IX, IY, IZ, XX, YY\}$, $\{IX, IY, XI, YI, ZZ\}$, $\{IX, IY, XZ, YZ, ZZ\}$, $\{IX, IY, IZ, XX, XY, YX\}$, $\{IX, IY, IZ, XX, YX, ZX\}$, $\{IX, IY, IZ, XX, YX, ZY\}$, $\{IX, IY, IZ, XX, XY, YX, YY\}$, $\{IX, IY, IZ, XX, XY, YX, ZX\}$, $\{IX, IY, IZ, XX, XY, YX, ZY\}$, $\{IX, IY, IZ, XX, XY, YX, YY, ZX\}$, $\{IX, IY, IZ, XX, XY, YX, YY, ZX, ZY\}$.
\end{widetext}

We now describe analytic mechanisms behind the three possibilities in \eqref{eq:n2_three_forms}. For continuous intervals we exhibit parametric code families whose signature sweeps $[0,1]$; for the two singleton cases we show that the Knill--Laflamme conditions force a unique value of $\lambda^*$.

\subsubsection{Continuous interval.}
Consider the rank-$2$ family of codewords
\begin{equation}
\begin{aligned}
\lvert 0_L\rangle &= a\,\lvert 00\rangle + b\,\lvert 01\rangle, \\
\lvert 1_L\rangle &= X_1 \,\lvert 0_L\rangle
= a\,\lvert 10\rangle + b\,\lvert 11\rangle,
\label{eq:n2_interval_family_states}
\end{aligned}
\end{equation}
where $a\in\mathbb{R}$, $b\in\mathbb{C}$, and $|a|^2+|b|^2=1$. Define
\begin{equation}
P(a,b):=\lvert 0_L\rangle\!\langle 0_L\rvert+\lvert 1_L\rangle\!\langle 1_L\rvert.
\label{eq:n2_interval_family_proj}
\end{equation}
A direct computation gives the scalar compressions
\begin{equation}
\begin{aligned}
P(a,b)\,X_2\,P(a,b) &= 2\,\Re(\bar a b)\,P(a,b), \\
P(a,b)\,Y_2\,P(a,b) &= 2\,\Im(\bar a b)\,P(a,b).
\label{eq:n2_interval_family_compressions}
\end{aligned}
\end{equation}
Consequently, for the single-observable tuple $(X_2)$ one has
$\lambda^*(P(a,b))=\bigl|2\,\Re(\bar a b)\bigr|$, while for the two-observable tuple $(X_2,Y_2)$
\begin{equation}
\lambda^*(P(a,b))
=\sqrt{\bigl(2\,\Re(\bar a b)\bigr)^2+\bigl(2\,\Im(\bar a b)\bigr)^2}
=2|ab|.
\label{eq:n2_interval_family_lambda}
\end{equation}
As $(a,b)$ ranges over the unit sphere, $2|ab|$ fills the full interval $[0,1]$. Thus the interval behavior comes from a continuous family of admissible code subspaces. Other interval examples arise from analogous constructions together with additional Pauli constraints that vanish on this family.

\subsubsection{Extremal singleton values.}
At the opposite extreme, some admissible error sets force all detectable signature coordinates to vanish, yielding $\Sigma_2(\boldsymbol{E})=\{0\}$. A canonical example is the even-parity projector
\begin{equation}
P_{\mathrm{even}}
:=\lvert 00\rangle\!\langle 00\rvert+\lvert 11\rangle\!\langle 11\rvert.
\label{eq:n2_Peven}
\end{equation}
Whenever a Pauli observable $E$ flips the parity subspace, i.e.\ maps $\mathrm{Ran}(P_{\mathrm{even}})$ into its orthogonal complement, one has $P_{\mathrm{even}} E P_{\mathrm{even}}=0$, so the corresponding signature coordinate vanishes. This is the basic mechanism behind the singleton $\{0\}$.

Conversely, singleton-$\{1\}$ behavior occurs when at least one Pauli in the tuple acts as $\pm I$ on the code space, fixing the scalar compression to unit magnitude. In the two-qubit setting the remaining coordinates are then forced to vanish. For example,
\begin{equation}
P_{Z_2=-1}
:=\lvert 01\rangle\!\langle 01\rvert+\lvert 11\rangle\!\langle 11\rvert
\label{eq:n2_PZ1minus}
\end{equation}
satisfies
\begin{equation}
\begin{aligned}
P_{Z_2=-1}\,Z_2\,P_{Z_2=-1} &= -\,P_{Z_2=-1},\\
P_{Z_2=-1}\,X_2\,P_{Z_2=-1} &= 0,\\
P_{Z_2=-1}\,Y_2\,P_{Z_2=-1} &= 0,
\label{eq:n2_Zanchored}
\end{aligned}
\end{equation}
so for the tuple $(X_2,Y_2,Z_2)$ one has $\lambda^*(P_{Z_2=-1})=1$.

\subsection{With swap symmetry}
We now impose the $n=2$ cyclic symmetry, which in this case coincides with the qubit swap. Let $T$ be the swap unitary
\begin{equation}
T\lvert ab\rangle=\lvert ba\rangle,
\qquad
C_2=\langle T\rangle .
\label{eq:n2_swap_def}
\end{equation}
Unlike the symmetry-compatible families studied later in the paper, this two-qubit subsection should be viewed primarily as a toy model for an externally imposed projector-level symmetry constraint on the detectable sets classified above. Its purpose is to illustrate how rigidity may already emerge when symmetry is imposed only on the projector, even if the error set \(\mathcal{E}\) is not itself stable under the same group action. The stronger compatibility requirement, in which the symmetry acts consistently on both the code and the detectable Pauli family, will be developed in later sections.

This toy setting also clarifies why the unconstrained classification no longer applies once symmetry is imposed. The classification developed above relies essentially on the absence of symmetry restrictions: admissible code spaces can then be moved freely by Clifford conjugation. By contrast, after fixing a symmetry group \(G\), admissible code spaces must be \(G\)-invariant, and admissibility is preserved only by unitaries that respect this symmetry constraint. Accordingly, the spectrum is no longer fully Clifford invariant. In general, one has only the covariance relation 
\begin{equation}
\Sigma^{(G)}_K(\boldsymbol{E})=\Sigma^{(U^\dagger GU)}_K(U^\dagger \boldsymbol{E} U),
\end{equation}
rather than the stronger invariance 
\begin{equation}
\Sigma^{(G)}_K(\boldsymbol{E})=\Sigma^{(G)}_K(U^\dagger \boldsymbol{E} U)
\end{equation}
for arbitrary \(U\). In particular, under swap symmetry the restricted two-qubit classification can no longer be organized by full Clifford equivalence, since unitaries outside the normalizer of the symmetry group need not preserve admissibility.

The Hilbert space decomposes into the $T$-invariant sectors defined by the projectors
\begin{equation}
\Pi_0=\frac12(I+T),
\qquad
\Pi_1=\frac12(I-T),
\label{eq:n2_swap_sectors}
\end{equation}
so that $\mathcal{H}=\mathcal{H}_0\oplus\mathcal{H}_1$ with $\mathcal{H}_\ell=\mathrm{Ran}(\Pi_\ell)$. Here $\dim\mathcal{H}_0=3$ is the symmetric sector and $\dim\mathcal{H}_1=1$ is the antisymmetric sector. An orthonormal basis is
\begin{equation}
\begin{aligned}
\mathcal{H}_0&=\text{span}\Bigl\{\lvert 00\rangle,\ \lvert 11\rangle,\
\lvert\psi^+\rangle:=\frac1{\sqrt2}(\lvert 01\rangle+\lvert 10\rangle)\Bigr\}, \\
\mathcal{H}_1&=\text{span}\Bigl\{\lvert\psi^-\rangle:=\frac1{\sqrt2}(\lvert 01\rangle-\lvert 10\rangle)\Bigr\}.
\label{eq:n2_swap_basis}
\end{aligned}   
\end{equation}

We distinguish two notions of swap symmetry: the stronger \emph{swap-basis} condition \(\operatorname{Ran}(P)\subseteq \mathcal H_0\), and the weaker \emph{swap-projector} condition \(TPT^\dagger=P\), which allows support in both sectors. For \(K=2\), since \(\dim \mathcal H_1=1\), any projector satisfying the swap-projector condition must decompose as $P=\Pi_0 P \Pi_0+\Pi_1 P \Pi_1$, so the possible sector ranks are restricted to
\begin{equation}
(K_0,K_1)\in\{(2,0),(1,1)\},
\qquad
K_\ell:=\operatorname{rank}(\Pi_\ell P\Pi_\ell).
\end{equation}
The case \((2,0)\) is exactly the swap-basis setting. In this setting, the rank-$2$ compression inside the three-dimensional sector $\mathcal H_0$ is governed by eigenvalue interlacing, which forces each scalar compression to equal the middle eigenvalue of the restricted operator. This rigidity reduces the attainable swap-basis spectra to $\{\emptyset,\{0\},\{1\}\}$. The remaining case \((1,1)\) is analyzed in the swap-projector subsection, where we show that it produces no additional attainable spectra beyond the swap-basis case.

\subsubsection{Swap-basis case}
In the swap-basis setting $\mathrm{Ran}(P)\subseteq\mathcal{H}_0$, the compression of a Pauli observable is a rank-$2$ Hermitian compression inside the three-dimensional invariant subspace $\mathcal{H}_0$. A fundamental constraint is eigenvalue interlacing: if $E$ has eigenvalues $\lambda_1\ge\lambda_2\ge\lambda_3$ on $\mathcal{H}_0$, then any scalar compression $PEP=\alpha P$ must satisfy $\alpha=\lambda_2.$

Many Pauli restrictions to $\mathcal{H}_0$ have spectrum $\{1,0,-1\}$, which therefore forces $\alpha=0$. For example, in the ordered basis $\{\lvert 00\rangle,\lvert\psi^+\rangle,\lvert 11\rangle\}$,
\begin{equation}
\bar X_2:=\Pi_0 X_2 \Pi_0
=\frac1{\sqrt2}
\begin{pmatrix}
0 & 1 & 0\\
1 & 0 & 1\\
0 & 1 & 0
\end{pmatrix}.
\label{eq:n2_X1_restricted}
\end{equation}
Since $\operatorname{spec}(\bar X_2)=\{1,0,-1\}$, any swap-basis code detecting $X_2$ must satisfy $P X_2 P=0$. By the same interlacing argument and unitary equivalences within $\cH_0$ induced by Clifford symmetries, the same conclusion applies to the restrictions of $X_1$, $Y_1$, $Y_2$, $Z_1$, $Z_2$ and their $Z$-dressed single-qubit variants that remain three-level on $\cH_0$. Therefore, for any swap-basis detectable error set $\mathcal{E}$ whose members all restrict to $\{1,0,-1\}$ on $\cH_0$, the swap-basis signature spectrum collapses to
\begin{equation}
\Sigma^{\mathrm{swap\text{-}basis}}_2(\boldsymbol{E})=\{0\}.
\label{eq:n2_swap_basis_zero}
\end{equation}

The only way to avoid the forced zero compression is when the middle eigenvalue equals $\pm1$. This occurs for the parity operator $Z_1Z_2$, whose restriction to $\mathcal{H}_0$ is
\begin{equation}
\overline{ZZ}:=\Pi_0(Z_1Z_2)\Pi_0
=
\begin{pmatrix}
1 & 0 & 0\\
0 & -1 & 0\\
0 & 0 & 1
\end{pmatrix}.
\label{eq:n2_ZZ_restricted}
\end{equation}
Since $\operatorname{spec}(\overline{ZZ})=\{1,1,-1\}$, the middle eigenvalue is $1$, and eigenvalue interlacing implies that any rank-$2$ swap-basis code detecting $Z_1Z_2$ must satisfy $P(Z_1Z_2)P=P$ with $\mathrm{Ran}(P)\subseteq\mathcal{H}_0$. Hence the scalar compression is fixed to magnitude $1$. Whenever an error set $\mathcal{E}$ contains $Z_1Z_2$ together with additional Pauli operators whose swap-basis compressions vanish, the signature norm is uniquely determined:
\begin{equation}
\Sigma^{\mathrm{swap\text{-}basis}}_2(\boldsymbol{E})=\{1\}.
\label{eq:n2_swap_basis_one}
\end{equation}
The even-parity projector $P_{\mathrm{even}}$ in~\eqref{eq:n2_Peven} realizes this situation, since it is swap-invariant and satisfies $P_{\mathrm{even}}(Z_1Z_2)P_{\mathrm{even}}=P_{\mathrm{even}}$ while the compressions of single-qubit $X/Y$ operators vanish.

Because a scalar compression $PEP=\alpha P$ fixes the code space to a specific two-dimensional eigenspace inside $\mathcal{H}_0$, two independent Pauli constraints will in general select incompatible subspaces. In such cases no swap-symmetric projector satisfies the same Knill--Laflamme conditions, $Q_2(\boldsymbol{E})\neq\emptyset$ but
\begin{equation}
Q_2(\boldsymbol{E})\cap\{P:[P,T]=0\}=\emptyset.
\end{equation}
Thus the system of scalar compression equations becomes overdetermined once the symmetry restriction is imposed. Consequently the continuous family of solutions present in the unconstrained case disappears, and the only possible swap-basis spectra at $n=2$ are $\emptyset, \{0\},\{1\}.$

\subsubsection{Swap-projector case}
We now show that, for $n=2$, enforcing swap symmetry at the projector level does not enlarge the attainable signature norms beyond the swap-basis case. Since every swap-basis projector is in particular swap-projector-symmetric, it remains only to prove the reverse inclusion for attainable norms. Let
\begin{equation}
\ket{\phi^\pm}:=\frac{1}{\sqrt2}(\ket{00}\pm\ket{11}),
\end{equation}
so that
\begin{equation}
\begin{aligned}
\mathcal H_0&=\mathrm{span}\{\ket{\phi^-},\ket{\phi^+},\ket{\psi^+}\},\\
\mathcal H_1&=\mathrm{span}\{\ket{\psi^-}\}.
\end{aligned}
\end{equation}
If $(K_0,K_1)=(2,0)$, then $\operatorname{Ran}(P)\subseteq\mathcal H_0$ and we are already in the swap-basis setting. It therefore suffices to consider $(K_0,K_1)=(1,1)$, in which case
\begin{equation}
P=\ket{\xi}\!\bra{\xi}+\ket{\psi^-}\!\bra{\psi^-}
\end{equation}
for some unit vector $\ket{\xi}\in\mathcal H_0$. Define the rank-$2$ swap-basis projector
\begin{equation}
\widetilde{P}:=\Pi_0-\ket{\xi}\!\bra{\xi},
\qquad
\operatorname{Ran}(\widetilde{P})=\mathcal H_0\cap\ket{\xi}^{\perp}.
\end{equation}
We claim that for every Pauli observable $E$ satisfying $PEP=\alpha_E P$, the projector $\widetilde{P}$ also satisfies a scalar compression equation
\begin{equation}
\widetilde{P}E\widetilde{P}=\beta_E \widetilde{P}
\end{equation}
with $|\beta_E|=|\alpha_E|$. This proves that every signature norm attained by a swap-projector-symmetric code is also attained by a swap-basis code.

We verify the claim by treating two cases separately.

First, consider the three Pauli observables $X_1X_2$, $Y_1Y_2$, and $Z_1Z_2$. In the Bell basis
\begin{equation}
\{\ket{\phi^+},\ket{\phi^-},\ket{\psi^+},\ket{\psi^-}\},
\end{equation}
they are diagonal:
\begin{equation}
\begin{aligned}
X_1X_2&=\mathrm{diag}(1,-1,1,-1),\\
Y_1Y_2&=\mathrm{diag}(-1,1,1,-1),\\
Z_1Z_2&=\mathrm{diag}(1,1,-1,-1). 
\end{aligned}
\end{equation}
Hence each of them preserves $\mathcal H_1$, acts on $\ket{\psi^-}$ with eigenvalue $-1$, and has a one-dimensional $(-1)$-eigenspace in $\mathcal H_0$, namely
\begin{equation}
\begin{aligned}
\mathcal{V}_{-1}(X_1X_2|_{\mathcal H_0})&=\mathrm{span}\{\ket{\phi^-}\},\\
\mathcal{V}_{-1}(Y_1Y_2|_{\mathcal H_0})&=\mathrm{span}\{\ket{\phi^+}\},\\
\mathcal{V}_{-1}(Z_1Z_2|_{\mathcal H_0})&=\mathrm{span}\{\ket{\psi^+}\}.    
\end{aligned}
\end{equation}
Therefore, if $PEP=\alpha_E P$ for one of these three operators, then necessarily $\alpha_E=-1$ and $\ket{\xi}$ is the corresponding $(-1)$-eigenvector in $\mathcal H_0$. It follows that $\widetilde{P}$ is the orthogonal complement of this $(-1)$-eigenvector inside $\mathcal H_0$, hence the $+1$-eigenspace of $E|_{\mathcal H_0}$. Thus $\widetilde{P}E\widetilde{P}=\widetilde{P}$, so in this case $\beta_E=1$ and $|\beta_E|=|\alpha_E|=1$.

Second, consider any other nonidentity two-qubit Pauli observable $E$. A direct inspection in the Bell basis shows that such an $E$ maps $\mathcal H_1$ into $\mathcal H_0$; equivalently,
\begin{equation}
E\ket{\psi^-}\in \mathcal H_0,
\qquad
\braket{\psi^-|E|\psi^-}=0.
\end{equation}
Hence the scalar compression equation $PEP=\alpha_E P$ forces
\begin{equation}
\alpha_E=0,
\qquad
\braket{\xi|E|\psi^-}=0,
\qquad
\braket{\xi|E|\xi}=0.
\end{equation}
Let $\ket{v}:=E\ket{\psi^-}\in\mathcal H_0$. Then $\braket{\xi|v}=0$, so $\ket{v}\in\operatorname{Ran}(\widetilde{P})$. Since $E^2=I$, we have
\begin{equation}
E\ket{v}=\ket{\psi^-}\notin\operatorname{Ran}(\widetilde{P}),
\qquad
\widetilde{P}E\ket{v}=0.
\end{equation}
Let $\ket{\eta}$ be a unit vector spanning $\operatorname{Ran}(\widetilde{P})\ominus\mathbb C\ket{v}$. Because $E$ maps $\ket{v}\mapsto\ket{\psi^-}$, Hermiticity forces $E$ to preserve
\begin{equation}
\operatorname{span}\{\ket{\xi},\ket{\eta}\}=\ket{v}^{\perp}\cap\mathcal H_0.
\end{equation}
Together with $\braket{\xi|E|\xi}=0$, this gives $E\ket{\eta}\in\mathbb C\ket{\xi}$, hence $\widetilde{P}E\ket{\eta}=0$. Therefore
\begin{equation}
\widetilde{P}E\widetilde{P}=0,
\qquad
\beta_E=0=\alpha_E
\end{equation}
for every Pauli observable outside $\{X_1X_2,Y_1Y_2,Z_1Z_2\}$.

We have shown that for each Pauli observable $E$, whenever $PEP=\alpha_E P$ with $P$ of type $(1,1)$, the swap-basis projector $\widetilde{P}$ satisfies $\widetilde{P}E\widetilde{P}=\beta_E \widetilde{P}$ with $|\beta_E|=|\alpha_E|$. Since the same $\widetilde{P}$ is constructed from the given $P$ and works simultaneously for every $E\in\mathcal E$, it follows that every signature norm attained in the swap-projector setting is also attained in the swap-basis setting. The converse inclusion is immediate because every swap-basis projector satisfies the swap-projector condition. Hence
\begin{equation}
\Sigma^{\mathrm{swap\text{-}proj}}_2(\boldsymbol{E})
=
\Sigma^{\mathrm{swap\text{-}basis}}_2(\boldsymbol{E})
\in \bigl\{\emptyset,\{0\},\{1\}\bigr\}.
\label{eq:n2_swap_proj_equals_basis}
\end{equation}

Therefore, for $n=2$ the symmetry restriction collapses the continuous interval spectrum to the discrete set $\{\emptyset,\{0\},\{1\}\}$.

\section{Optimization-based Search Framework with Symmetry}
\label{sec:opt}

The exact two-qubit analysis of Sec.~\ref{sec:cases} makes the small-system geometry transparent, but the three-qubit random instances and the larger symmetry-compatible families studied later require numerical search \cite{CaoZhangWuGrasslZeng2022VarQEC}. In this section we describe the Stiefel-manifold framework used to explore the unrestricted spectrum $\Sigma_K(\boldsymbol{E})$ and its symmetry-restricted analogues $\Sigma_K^{(G)}(\boldsymbol{E})$. The framework combines a penalty-based search over rank-$K$ projectors with the symmetry-adapted decompositions developed in Sec.~\ref{sec:sym}. It is the method used throughout the numerical parts of Secs.~\ref{sec:n3_k2} and \ref{sec:state_vs_projector}.

\subsection{Stiefel parameterization and loss functions}
\label{subsec:stiefel-loss}

Fix a detectable Pauli family $\mathcal E$ and, when evaluating $\lambda^*$, an ordered tuple
\begin{equation}
\boldsymbol{E}=(E_1,\dots,E_m)\subseteq \mathcal E.  
\end{equation}
Let $D=2^n$. A rank-$K$ code projector is represented as
\begin{equation}
P(\Psi)=\Psi\Psi^\dagger,
\qquad
\Psi\in \mathrm{St}(D,K), 
\end{equation}
where $\Psi^\dagger\Psi=I_K$. We parameterize $\Psi$ by a free matrix $\theta\in\mathbb{C}^{D\times K}$ through the polar map
\begin{equation}
f:\mathbb{C}^{D\times K}\ni\theta\longmapsto \theta(\theta^\dagger\theta)^{-1/2}\in\mathrm{St}(D,K),
\label{eq:polar-map}
\end{equation}
and optimize over $\theta$ with gradient-based methods (Adam in our implementation) \cite{AbsilMahonySepulchre2008,KingmaBa2015Adam}.

For each $F\in\mathcal E$, define the code-space compression matrix
\begin{equation}
\begin{aligned}
M_F(\Psi)&:=\Psi^\dagger F\Psi\in\mathbb{C}^{K\times K},\\
\overline{\kappa}_F(\Psi)&:=\frac{1}{K}\operatorname{Tr} M_F(\Psi).
\end{aligned}
\end{equation}
The exact detection condition
\begin{equation}
PFP=\alpha_F P
\end{equation}
is equivalent to
\begin{equation}
M_F(\Psi)=\overline{\kappa}_F(\Psi)\,I_K.
\end{equation}
This motivates the Knill--Laflamme (KL) feasibility loss
\begin{equation}
\begin{aligned}
\mathcal{L}_{\mathrm{KL}}(\theta)
&=
\sum_{F\in\mathcal E}
\Biggl[
\sum_{i\neq j}\bigl|[M_F(\Psi)]_{ij}\bigr|^2 \\
&\quad
+\sum_i\bigl([M_F(\Psi)]_{ii}-\overline{\kappa}_F(\Psi)\bigr)^2
\Biggr],
\end{aligned}
\label{eq:LKL}
\end{equation}
where $\Psi = f(\theta)$. This loss vanishes if and only if $P(\Psi)$ detects every operator in $\mathcal E$.

For the ordered tuple $\boldsymbol E=(E_1,\dots,E_m)$ used in the signature norm, define
\begin{equation}
\begin{aligned}
\overline{\lambda}_\alpha(\Psi)&:=\frac{1}{K}\operatorname{Tr}(\Psi^\dagger E_\alpha\Psi),\\
\|\widetilde{\boldsymbol\lambda}(\Psi)\|_2^2&:=\sum_{\alpha=1}^m \overline{\lambda}_\alpha(\Psi)^2.
\end{aligned}
\end{equation}
On an exact detecting projector this quantity is precisely $\lambda^{*2}(P(\Psi))$. 
Given a penalty parameter $\mu>0$, we associate to the lower-endpoint search the objective
\begin{equation}
\mathcal{L}_{\underline{\lambda}}(\theta;\mu)
=\mu\,\mathcal{L}_{\mathrm{KL}}(\theta)
+\|\widetilde{\boldsymbol\lambda}(\Psi)\|_2^2.
\label{eq:Lmin}
\end{equation}
For the upper-endpoint search, we use the objective
\begin{equation}
\mathcal{L}_{\overline{\lambda}}(\theta;\mu)
=\mu\,\mathcal{L}_{\mathrm{KL}}(\theta)
-\|\widetilde{\boldsymbol\lambda}(\Psi)\|_2^2.
\label{eq:Lmax}
\end{equation}
For a prescribed target value $\lambda^*$, we further use the scanning objective
\begin{equation}
\mathcal{L}(\theta;\mu,\lambda^*)
=\mu\,\mathcal{L}_{\mathrm{KL}}(\theta)
+\bigl(\|\widetilde{\boldsymbol\lambda}(\Psi)\|_2^2-\lambda^{*2}\bigr)^2.
\label{eq:Lscan}
\end{equation}
For sufficiently large $\mu$, the optimizers of \eqref{eq:Lmin} and \eqref{eq:Lmax} approximate the spectral endpoints $\lambda_{\min}$ and $\lambda_{\max}$. 
A target grid for $\lambda^*$ in \eqref{eq:Lscan} is then used to probe the interior of $\Sigma_K(\boldsymbol E)$.

\subsection{Symmetry-adapted parameterizations}
\label{subsec:sym-param}

Whenever an exact symmetry reduction is available, we impose symmetry by construction rather than by penalty. This replaces the full ambient Stiefel manifold by smaller symmetry-adapted ones derived from the decompositions in Sec.~\ref{sec:sym}. We use three such reductions.

\paragraph{State-level reduction (embedding model).}
Let $B\in\mathbb{C}^{D\times d}$ be an isometry whose columns span the relevant invariant subspace under the group action, for example $\mathcal H_0$ in the cyclic case or $\mathcal H_{\mathrm{sym}}$ in the permutation-invariant case. Every state-level symmetric code frame has the form
\begin{equation}
\Psi=B\,\Phi,
\qquad
\Phi\in\mathrm{St}(d,K),
\label{eq:embedding}
\end{equation}
and all losses are evaluated on the reduced operators
\begin{equation}
\widetilde{E}_\alpha:=B^\dagger E_\alpha B\in\mathbb{C}^{d\times d}.
\end{equation}
This replaces the ambient dimension $D$ by $d$, which can be much smaller; for instance, $\dim\mathcal H_{\mathrm{sym}}=n+1$ whereas $D=2^n$.

\paragraph{Projector-level cyclic reduction.}
By Sec.~\ref{sec:sym}, a cyclic-invariant projector is block diagonal with respect to the character sectors
\begin{equation}
\mathcal H=\bigoplus_{\ell=0}^{n-1}\mathcal H_\ell.
\end{equation}
Fix a rank allocation $(K_0,\dots,K_{n-1})$ satisfying
\begin{equation}
\sum_{\ell=0}^{n-1}K_\ell=K,
\qquad
0\le K_\ell\le d_\ell:=\dim\mathcal H_\ell. 
\end{equation}
For each sector we optimize an independent block
\begin{equation}
\Phi_\ell\in\mathrm{St}(d_\ell,K_\ell).
\end{equation}
If $B_\ell\in\mathbb{C}^{D\times d_\ell}$ is a basis for $\mathcal H_\ell$, the assembled frame is
\begin{equation}
\Psi=\bigl[\,B_0\Phi_0\;\;B_1\Phi_1\;\;\cdots\;\;B_{n-1}\Phi_{n-1}\,\bigr].
\label{eq:cyclic-proj-frame}
\end{equation}
Because the sectors are mutually orthogonal, this $\Psi$ is automatically an isometry. The cyclic projector-level spectrum is obtained by taking the union over all admissible rank allocations.

\paragraph{Projector-level permutation reduction.}
For permutation symmetry we use the Schur--Weyl decomposition from Sec.~\ref{sec:sym}. Writing the Hilbert space as a direct sum of spin blocks, a permutation-invariant projector takes the form
\begin{equation}
P=\bigoplus_j \bigl(P_j\otimes I_{\mathcal M_j}\bigr),
\qquad
P_j=V_jV_j^\dagger,
\end{equation}
where \(V_j\in \mathrm{St}(2j+1,r_j)\), and the multiplicities $m_j:=\dim \mathcal M_j$ are constrained by
\begin{equation}
\sum_j r_j\,m_j=K.  
\end{equation}
If $B_j\in\mathbb{C}^{D\times(2j+1)m_j}$ is a basis for the $j$th block, arranged in multiplicity-major order, then the code frame is
\begin{equation}
\Psi=\bigl[\,\cdots\;\;B_j\,(V_j\otimes I_{\mathcal M_j})\;\;\cdots\,\bigr].
\label{eq:pi-proj-frame}
\end{equation}
The optimization variables therefore live on much smaller Stiefel manifolds, and the full permutation projector-level spectrum is obtained as the union over all admissible rank profiles $\{r_j\}$.

\paragraph{Soft symmetry penalty.}
When an exact symmetry reduction is unavailable or inconvenient, we keep the full $D$-dimensional search space and add a symmetry penalty. If $\mathcal G$ is a generating set for the symmetry group $G$, define
\begin{equation}
\mathcal{L}_{\mathrm{sym}}(\Psi):=
\sum_{g\in\mathcal G}
\bigl\|P(\Psi)-U_gP(\Psi)U_g^\dagger\bigr\|_{\mathrm F}^2.
\label{eq:Lsym}
\end{equation}
The task objective is then replaced by
\begin{equation}
\mathcal L_{\mathrm{tot}}
=
\mathcal L_{\mathrm{task}}
+\mu\,\mathcal L_{\mathrm{KL}}
+\mu_{\mathrm{sym}}\,\mathcal L_{\mathrm{sym}}.
\end{equation}
This does not reduce the search dimension, but it still biases the optimizer toward the required symmetry class.

\subsection{Optimization workflow}
\label{subsec:opt-workflow}

All numerical experiments in this paper follow the same branchwise protocol. A \emph{branch} means either a single unrestricted/state-level search, or a fixed sector/rank allocation in a projector-level symmetry reduction. For each branch, we first estimate the spectral endpoints by optimizing \eqref{eq:Lmin} and \eqref{eq:Lmax} with multiple random restarts. We then optionally probe the interior of the spectrum by scanning a prescribed target grid for $\lambda^*$, implemented numerically as a grid in $\lambda^{*2}$, using \eqref{eq:Lscan}. Every accepted candidate is validated by direct evaluation of the KL residuals and, when relevant, the symmetry residuals. The resulting validated $\lambda^*$-values give an empirical reconstruction of $\Sigma_K(\boldsymbol E)$ or $\Sigma_K^{(G)}(\boldsymbol E)$.

\begin{algorithm}[t]
\caption{Symmetry-adapted reconstruction of $\Sigma_K(\boldsymbol{E})$}
\label{alg:symmetry-spectrum}
\DontPrintSemicolon
\KwIn{$(n,K,\mathcal E,\boldsymbol E)$, search mode, penalties $(\mu,\mu_{\mathrm{sym}})$, restart count $N_{\mathrm{init}}$, optional target grid $\Gamma$, validation tolerances $(\varepsilon_{\mathrm{KL}},\varepsilon_{\mathrm{sym}})$.}
\KwOut{Validated set $S\subseteq \mathbb R_{\ge 0}$ of attainable $\lambda^*$-values.}
Construct the search branches: a single branch in the unconstrained and state-level modes, and all admissible sector/rank allocations in the cyclic or permutation projector-level modes\;
$S\leftarrow \emptyset$\;
\ForEach{branch $b$}{
  Run $N_{\mathrm{init}}$ random restarts of $\mathcal L_{\underline{\lambda}}$ and $\mathcal L_{\overline{\lambda}}$ on $b$\;
  Validate the best endpoint candidates and add their $\lambda^*$-values to $S$\;
}
\If{$\Gamma\neq \emptyset$}{
  \ForEach{target $\lambda^*\in \Gamma$}{
    \ForEach{branch $b$}{
      Run $N_{\mathrm{init}}$ random restarts of $\mathcal L(\theta;\mu,\lambda^*)$ on $b$\;
      \If{a converged candidate satisfies the KL and symmetry tolerances}{
        Add its $\lambda^*$-value to $S$ and stop scanning this target\;
      }
    }
  }
}
Deduplicate $S$ up to numerical tolerance and report its interval/singleton/empty form\;
\end{algorithm}

Our current scans are implemented in Python using \texttt{numqi} \cite{numqi2025}. In every mode we use multiple random restarts together with post-optimization validation, and the numerical table entries reported later should be understood as empirical outputs of this validated search protocol.

\section{The Three-qubit Case (\texorpdfstring{$n=3$, $K=2$}{n=3, K=2})}
\label{sec:n3_k2}

At $n=3$ and $K=2$, the signature geometry already separates into two qualitatively different regimes. For genuinely asymmetric random Pauli tuples, the unrestricted problem continues to exhibit the interval behavior of Observation~\ref{ob:interval}. By contrast, if one imposes a cyclic symmetry constraint on the code while keeping the Pauli tuple random and asymmetric, then the behavior can change drastically: the spectrum may shrink, collapse to a singleton, become empty, or even become disconnected. For symmetry-compatible Pauli families, however, cyclic symmetry acts by orbit reduction rather than by an external slicing condition, and the reduced problem remains highly structured. We discuss these two regimes separately.

\subsection{Random Pauli error sets}

Random three-qubit Pauli tuples provide a useful test of Observation~\ref{ob:interval} beyond the symmetry-compatible families discussed elsewhere in the paper. In this subsection, each tuple $\boldsymbol{E}=(E_1,\dots,E_m)$ is chosen directly from the non-identity Hermitian elements of $\mathcal P_3$, and $\Sigma_2(\boldsymbol{E})$ is explored numerically by the Stiefel-manifold procedure of Sec.~\ref{sec:opt}.

\subsubsection{Unrestricted random tuples}

We first impose no symmetry constraint on the code projector. Thus $P=\Psi\Psi^\dagger$ ranges over all rank-$2$ projectors with $\Psi\in St(8,2)$, and detectability means $P E_a P=\alpha_a P$ for every listed Pauli operator. We sampled $1002$ random tuples: $334$ with $m=6$, $334$ with $m=7$, and $334$ with $m=8$.

The resulting picture is strikingly uniform. In every case the lower endpoint was numerically $0$, while the upper endpoint lay extremely close to $1$, $\sqrt2$, or $\sqrt3$; equivalently, $\lambda_{\max}^{*2}$ was very close to $1$, $2$, or $3$. Concretely, for $m=6$ the upper endpoint was near $1$ in $50$ cases, near $\sqrt2$ in $258$ cases, and near $\sqrt3$ in $26$ cases; for $m=7$ the corresponding counts were $43$, $248$, and $43$; and for $m=8$ they were $45$, $249$, and $40$. Across all $1002$ tuples, $\lambda_{\max}$ was near $1$ in $138$ cases, near $\sqrt2$ in $755$ cases, and near $\sqrt3$ in $109$ cases.

Representative dense rescans further corroborate this picture. Specifically, we use a uniform $21$-point grid in $\lambda^{*2}$ between the endpoint estimates $\lambda_{\min}^{*2}$ and $\lambda_{\max}^{*2}$, and probe each target via the scanning objective~\eqref{eq:Lscan}. Table~\ref{tab:app_n3_dense_unrestricted} lists one representative tuple for each of the three observed upper-endpoint values. In all three examples, every grid target was numerically reached. The unrestricted random data therefore further support Observation~\ref{ob:interval}: for the tested random three-qubit tuples, the scalar signature spectrum continues to behave like a single interval.

\begin{table}[ht]
    \centering
    \caption{Representative dense rescans for unrestricted random three-qubit tuples. The second column reports the numerically observed $\lambda^{*}$-spectrum.}
    \label{tab:app_n3_dense_unrestricted}
    \footnotesize
    \setlength{\tabcolsep}{3pt}
    \renewcommand{\arraystretch}{1.3}
    \begin{tabular}{|c|c|}
        \hline
        \textbf{Error set} & \textbf{$\lambda^{*}$-spectrum} \\
        \hline
        $\{YXY, ZXZ, IZX, ZYZ, YIZ, XXZ\}$ & $[0,1]$ \\
        \hline
        $\{ZZY, IIZ, IXZ, IIX, ZZX, IYY, XIX\}$ & $[0,\sqrt2]$ \\
        \hline
        $\{ZXX, IXY, XIX, YIY, YIZ, ZIY, YZI, IIZ\}$ & $[0,\sqrt3]$ \\
        \hline
    \end{tabular}
\end{table}

\subsubsection{Random tuples with an external cyclic-\texorpdfstring{$+1$}{+1} restriction}

We next impose \emph{state-level} cyclic symmetry on the code while keeping the Pauli tuple random and asymmetric. Let $T$ be the cyclic shift,
\begin{equation}
T\lvert b_1b_2b_3\rangle=\lvert b_3b_1b_2\rangle,
\end{equation}
and let $\mathcal H_0$ be the $+1$ eigenspace of $T$. For $n=3$ this is the four-dimensional space
\begin{equation}
\begin{aligned}
\mathcal H_0=\operatorname{span}\Bigl\{
&\lvert 000\rangle,\;
\lvert W\rangle:=\frac{\lvert001\rangle+\lvert010\rangle+\lvert100\rangle}{\sqrt3},\\
&\lvert\overline W\rangle:=\frac{\lvert011\rangle+\lvert101\rangle+\lvert110\rangle}{\sqrt3},\;
\lvert111\rangle
\Bigr\}.
\end{aligned}
\label{eq:n3_random_H0}
\end{equation}
Note that $\mathcal{H}_0$ coincides with the fully symmetric subspace $\mathcal{H}_{\mathrm{sym}}$ discussed in Sec.~\ref{sec:sym}. Since every permutation-invariant vector is in particular cyclic-invariant, we have $\mathcal{H}_{\mathrm{sym}} \subseteq \mathcal{H}_0$. For $n=3$, the Dicke states $\{|D_{3,w}\rangle\}_{w=0}^{3}$ span an $(n+1)=4$ dimensional space; since $\dim \mathcal{H}_0 = 4$, the equality follows:
\begin{equation}
\mathcal{H}_{\mathrm{sym}} = \mathcal{H}_0 \qquad (n=3).
\label{eq:n3_sym_equals_H0}
\end{equation}
This inclusion is strict ($\mathcal{H}_{\mathrm{sym}} \subsetneq \mathcal{H}_0$) for $n \ge 4$, as $\dim \mathcal{H}_0$ then exceeds $\dim \mathcal{H}_{\mathrm{sym}}$. Consequently, at $n=3$, state-level cyclic symmetry and full permutation invariance impose identical constraints on the code. We maintain the cyclic perspective throughout this section, as it generalizes more naturally to the projector-level analysis and the associated character-sector decomposition.

Restricting to cyclic-$+1$ codewords means that the projector has the form
\begin{equation}
P=B_0\Phi\Phi^\dagger B_0^\dagger,
\qquad \Phi\in St(4,2),
\end{equation}
where the columns of $B_0$ are the basis vectors in \eqref{eq:n3_random_H0}. The conceptual point is that this is not the symmetry-compatible situation of Sec.~\ref{sec:sym}, because a random tuple $\boldsymbol{E}$ is typically not stable under cyclic relabeling. The cyclic condition is therefore external to the error model rather than induced by it.

We tested $2000$ such restricted random tuples: $1000$ with $m=5$ and $1000$ with $m=6$. In this externally constrained problem the universal interval pattern disappears. Among the tested tuples we observed four behaviors:
\begin{enumerate}[
    label=\textbf{Case~\arabic*},
    leftmargin=5.0em,
    labelsep=0.8em,
    itemsep=0.45em
]
    \item Interval spectra;
    \item Singleton spectra;
    \item Empty spectra;
    \item Robust disconnected spectra, behaving like $\{0,1\}$.
\end{enumerate}

The first three cases are precisely the kinds of behavior already anticipated in the symmetry-compatible setting summarized in Fig.~\ref{fig:sigma-schematic}: a symmetry restriction may make the spectrum empty, collapse it to a singleton, or leave a genuine interval.  \textbf{Case~4}, however, is qualitatively different.  It reflects the fact that the cyclic-$+1$ condition is being imposed externally on a Pauli tuple that is not itself cyclic-stable, so the restriction can slice the feasible set into disconnected pieces.  In the next subsection we exhibit a concrete random tuple of this type and prove analytically that its cyclic-$+1$ restricted spectrum is exactly $\{0,1\}$.

The numerical breakdown across the $1000$ tuples for each $m$ is as follows.
For $m=5$: $253$ tuples had empty spectra, $511$ had singleton spectra, $132$ had genuine interval spectra, and $104$ had disconnected spectra.
For $m=6$: $372$ tuples had empty spectra, $486$ had singleton spectra, $78$ had genuine interval spectra, and $64$ had disconnected spectra.
Table~\ref{tab:app_n3_cyclic_m56} lists one representative tuple for each behavior and each value of $m$.

\begin{table}[ht]
    \centering
    \caption{Representative cyclic-$+1$ restricted random tuples illustrating each of the four observed spectral behaviors, for both $m=5$ and $m=6$. The second column reports the numerically observed $\lambda^{*}$-spectrum.}
    \label{tab:app_n3_cyclic_m56}
    \footnotesize
    \setlength{\tabcolsep}{3pt}
    \renewcommand{\arraystretch}{1.3}
    \begin{tabular}{|c|c|c|}
        \hline
        \textbf{$m$} & \textbf{Error set} & \textbf{$\lambda^{*}$-spectrum} \\
        \hline
        $5$ & $\{XXX,IZX,IXI,YYI,YIZ\}$ & $\emptyset$ \\
        \hline
        $6$ & $\{XXI,YXZ,ZZZ,YIX,IZI,IYZ\}$ & $\emptyset$ \\
        \hline
        $5$ & $\{XIY,ZXY,YYZ,XXY,ZXI\}$ & $\{0\}$ \\
        \hline
        $6$ & $\{IIX,YIX,IYX,YYZ,XIX,XXY\}$ & $\{0\}$ \\
        \hline
        $5$ & $\{IZX,XIY,XIZ,IXZ,IYI\}$ & $[0,\sqrt{0.75}]$ \\
        \hline
        $6$ & $\{IZX,XIY,XIZ,IXZ,IYI,YXX\}$ & $[0,\sqrt{0.0022}]$ \\
        \hline
        $5$ & $\{YXX,XXI,YXZ,YIX,IZI\}$ & $\{0,1\}$ \\
        \hline
        $6$ & $\{YXX,YYI,ZII,IIZ,IZY,YIX\}$ & $\{0,1\}$ \\
        \hline
    \end{tabular}
\end{table}

\subsubsection{An exact disconnected cyclic-$+1$ spectrum}

A representative disconnected example in Table~\ref{tab:app_n3_cyclic_m56} is the tuple
\begin{equation}
\boldsymbol{E}_{\mathrm{disc}}=(YXX,XXI,YXZ,YIX,IZI)
\label{eq:n3_disc_tuple}
\end{equation}
which has an \emph{exactly} disconnected cyclic-$+1$ spectrum: the only attainable values are $0$ and $1$. Thus, already at $n=3$, the difference between symmetry-compatible and externally imposed symmetry constraints is visible at the level of the scalar signature spectrum.

We now prove that the random tuple in \eqref{eq:n3_disc_tuple} has an exactly disconnected cyclic-$+1$ spectrum.

\begin{proposition}
Let
\begin{equation}
\boldsymbol{E}_{\mathrm{disc}}=(YXX,XXI,YXZ,YIX,IZI)
\end{equation}
and restrict to rank-$2$ projectors $P$ supported on the cyclic-$+1$ sector $\mathcal H_0$. Then the attainable signature spectrum is
\begin{equation}
\begin{aligned}
\left\{
\lambda^*(P)
:
\begin{array}{l}
P\subseteq\mathcal H_0,P E P=\alpha_E P\\
\text{for all } E\in\boldsymbol{E}_{\mathrm{disc}}
\end{array}
\right\}
=\{0,1\}.
\end{aligned}
\end{equation}
In particular, the cyclic-$+1$ spectrum is nonempty and disconnected.
\end{proposition}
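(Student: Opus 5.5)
The plan is to reduce everything to $4\times4$ linear algebra on the sector $\mathcal H_0$ of \eqref{eq:n3_random_H0}. In the ordered basis $\{\lvert000\rangle,\lvert W\rangle,\lvert\overline W\rangle,\lvert111\rangle\}$ I will compute the five compressed operators $\widetilde E:=B_0^\dagger E B_0$ for $E\in\boldsymbol E_{\mathrm{disc}}$. These are sparse, since each Pauli either preserves or shifts Hamming weight by a fixed pattern. For example, $\widetilde{IZI}=\mathrm{diag}(1,\tfrac13,-\tfrac13,-1)$, while operators containing $X$ or $Y$ factors connect only weight sectors whose weights differ by the appropriate amounts, with explicit rational and $\sqrt3$ entries. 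A rank-$2$ projector $P=B_0\Phi\Phi^\dagger B_0^\dagger$, with $\Phi\in\mathrm{St}(4,2)$, detects $E$ iff $\Phi^\dagger(\widetilde E-\alpha_E)\Phi=0$. In words, $\mathcal C:=\mathrm{Ran}\,\Phi$ must be a totally isotropic $2$-plane for each Hermitian form $\widetilde E-\alpha_E$.

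The second step is to extract necessary conditions one operator at a time. For a $4\times4$ Hermitian $A$ with eigenvalues $a_1\ge a_2\ge a_3\ge a_4$, Cauchy interlacing (the same tool used in the swap-basis analysis of Sec.~\ref{sec:cases}) gives $\alpha\in[a_3,a_2]$. If $\alpha$ lies strictly inside this interval, the isotropic planes form a continuous family. If $\alpha$ is an endpoint, $\mathcal C$ is pinned to contain or lie in specific eigenspaces. Applying this to $\widetilde{IZI}$ gives $\alpha_{IZI}\in[-\tfrac13,\tfrac13]$, and similarly for the others. I will then combine the constraints rather than treat them separately. The natural route is to parametrize $\mathcal C$ by its Plücker coordinates, or by a reduced $2\times2$ block form $\Phi=\binom{I}{M}$ after choosing a chart. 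In these coordinates each condition $P\widetilde EP=\alpha_EP$ becomes three real quadratic equations (two diagonal entries equal, one complex off-diagonal entry zero). The goal is to show that the resulting real algebraic variety splits into components on which the signature vector is locally constant. I expect one component family to have $\boldsymbol\lambda=0$, for instance planes on which the weight-changing operators all compress to zero and $\alpha_{IZI}=0$. I expect the others to have exactly $\sum_\alpha\alpha_\alpha^2=1$, plausibly with $\alpha_{YIX}$ or another operator whose restriction has a degenerate $\pm1$ eigenvalue forcing $\pm1$, all remaining coordinates vanishing as in the singleton-$\{1\}$ mechanism.

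Third, I will exhibit explicit witnesses: a concrete $\Phi_0$ with all five compressions zero, and a concrete $\Phi_1$ with $\lambda^*=1$, verified by direct multiplication. Together these give $\{0,1\}\subseteq$ spectrum and in particular nonemptiness. The converse inclusion follows from the component analysis. A cleaner way to organize it is to show that any intermediate value would require $\alpha_{IZI}\neq0$ or a nonzero off-endpoint coefficient, and that this is incompatible with the remaining isotropy equations. The incompatibility would come from a determinant or resultant obstruction in the chart variables.

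The main obstacle is this exhaustive exclusion step: proving that no detecting plane realizes $\lambda^*\notin\{0,1\}$ across all charts of $\mathrm{Gr}(2,4)$. I would first try to exploit the weight grading. The operators $\widetilde{XXI},\widetilde{YIX},\dots$ shift weight by $\pm2$ or mix adjacent weights in a fixed pattern, so a $\mathbb Z_2$ weight-parity split of $\mathcal H_0$ into $\mathrm{span}\{\lvert000\rangle,\lvert\overline W\rangle\}$ and $\mathrm{span}\{\lvert W\rangle,\lvert111\rangle\}$ should block-structure several of the matrices. I would then case-split on $\dim(\mathcal C\cap\text{even})\in\{0,1,2\}$. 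In each case only a small number of parameters survive, and the equations can be solved by hand, with the values of $\lambda^{*2}$ read off directly.
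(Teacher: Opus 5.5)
\textbf{Verdict: the proof has a gap.} The inclusion $\{0,1\}\subseteq$ spectrum is only promised, since no witnesses are written down. More importantly, the reverse inclusion is the actual content of the proposition, and you leave it as ``the main obstacle'' with an unexecuted chart/resultant search. The reduction you do propose does not close it. Splitting on $\dim(\mathcal C\cap\text{even})\in\{0,1,2\}$ leaves the generic case $\dim=0$ as the whole open chart of $\mathrm{Gr}(2,4)$: $\mathcal C$ is the graph of an arbitrary complex $2\times2$ matrix from odd to even, which is $8$ real parameters. The case $\dim=1$ is still a three-complex-dimensional cell. So ``only a small number of parameters survive'' is false without further input.

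Your guess about where the unit coordinate comes from is also off. $\alpha_{YIX}$ is in fact forced to vanish. The value $\lambda^*=1$ comes from $\alpha_{XXI}=1$: $\widetilde{XXI}$ has spectrum $\{1,1,-\tfrac13,-\tfrac13\}$, and the $\lambda^*=1$ code is its $+1$ eigenspace. Finally, the feasible set is two isolated planes, not positive-dimensional component families.

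\textbf{The missing idea.} Use $YIX$ and $YXZ$ together. Encode $\mathcal H_0$ as two qubits via $\lvert000\rangle,\lvert W\rangle,\lvert\overline W\rangle,\lvert111\rangle\leftrightarrow\lvert00\rangle,\lvert01\rangle,\lvert10\rangle,\lvert11\rangle$, so your weight-parity split is the second qubit. Then
\[
\widetilde{YIX}=\tfrac{1}{\sqrt3}\,Y\otimes I,\qquad \widetilde{YXZ}=\tfrac{1}{\sqrt3}\,Y\otimes Z,
\]
hence
\[
\tfrac{\sqrt3}{2}\bigl(\widetilde{YIX}\pm\widetilde{YXZ}\bigr)=Y\otimes\lvert0\rangle\langle0\rvert,\ \ Y\otimes\lvert1\rangle\langle1\rvert ,
\]
each with spectrum $\{1,-1,0,0\}$.

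\begin{itemize}
\item Your interlacing bound, applied to these combinations rather than to single operators, gives $\alpha\in[a_3,a_2]=[0,0]$, so both compressions vanish.
\item Write the isometry as $V$ with even-row block $V_e$ and odd-row block $V_o$. The conditions $V_e^\dagger YV_e=0=V_o^\dagger YV_o$ give $\operatorname{rank}V_e,\operatorname{rank}V_o\le1$, because $Y$ has signature $(1,1)$.
\item Since $V_e^\dagger V_e+V_o^\dagger V_o=I_2$, both ranks equal $1$. Hence $\mathcal C=\mathrm{span}\{e,o\}$ with $e$ even, $o$ odd, each having zero $Y$-expectation. This is a two-angle torus.
\end{itemize}

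This is exactly the paper's step. The paper first conjugates by $(SH)\otimes I$ so these two operators become $\tfrac1{\sqrt3}ZI$ and $\tfrac1{\sqrt3}ZZ$. It then argues that a difference of rank-$\le1$ PSD $2\times2$ matrices cannot be a nonzero multiple of $I_2$, which gives the same torus.

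On the torus the remaining checks are short:
\begin{itemize}
\item Off-diagonals of the parity-preserving $IZI$ and $XXI$ vanish automatically, as do the diagonals of the parity-flipping $YXX$.
\item The $IZI$ diagonal equality and the complex $YXX$ off-diagonal equation give three real equations in two angles, with exactly two solutions.
\item Evaluating $XXI$ at these two solutions gives compression $1$ and $0$ respectively, with every other coordinate zero.
\end{itemize}
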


\begin{proof}
Write the ordered basis of $\mathcal H_0$ as
\begin{equation}
\begin{aligned}
\lvert000\rangle&\leftrightarrow\lvert00\rangle,\qquad
\lvert W\rangle\leftrightarrow\lvert01\rangle,\\
\lvert\overline W\rangle&\leftrightarrow\lvert10\rangle,\qquad
\lvert111\rangle\leftrightarrow\lvert11\rangle.
\end{aligned}
\end{equation}
so that the reduced problem becomes a $4$-dimensional one. Replacing each reduced operator by \(U^\dagger \widetilde{E}\, U\) with \(U=(SH)\otimes I\) leaves the Knill--Laflamme equations and the compression scalars unchanged. In the resulting basis, the five reduced observables take the effective two-qubit Pauli form
\begin{equation}
\begin{aligned}
\widetilde E_{YXX}'&=\frac23 ZX+\frac13 YY,\\
\widetilde E_{XXI}'&=\frac13 I-\frac13 XZ+\frac1{\sqrt3}YI,\\
\widetilde E_{YXZ}'&=\frac1{\sqrt3}ZZ,\\
\widetilde E_{YIX}'&=\frac1{\sqrt3}ZI,\\
\widetilde E_{IZI}'&=\frac23 XI+\frac13 IZ.
\end{aligned}
\label{eq:app_disc_effective_ops}
\end{equation}

Let $P=VV^\dagger$ with $V\in\C^{4\times2}$ an isometry, and let $r_j$ be the rows of $V$. Define $R_j:=r_j^\dagger r_j$ for $j=1,2,3,4$. Then each $R_j\succeq0$ has rank at most one and
\begin{equation}
R_1+R_2+R_3+R_4=I_2.
\label{eq:app_disc_row_sum}
\end{equation}
The scalar-compression equations for the two diagonal observables $ZI$ and $ZZ$ in \eqref{eq:app_disc_effective_ops} read
\begin{equation}
\begin{aligned}
R_1+R_2-R_3-R_4&=\beta I_2,\\
R_1-R_2-R_3+R_4&=\gamma I_2  
\end{aligned}
\label{eq:app_disc_diag_eqs}
\end{equation}
for some real $\beta,\gamma$. Adding and subtracting gives
\begin{equation}
R_1-R_3=\frac{\beta+\gamma}{2}I_2,
\qquad
R_2-R_4=\frac{\beta-\gamma}{2}I_2.
\end{equation}
But the difference of two positive semidefinite matrices of rank at most one cannot be a nonzero multiple of $I_2$. Hence $\beta=\gamma=0$, and therefore
\begin{equation}
R_1=R_3,
\qquad
R_2=R_4,
\qquad
R_1+R_2=\frac12 I_2.
\label{eq:app_disc_row_reduction}
\end{equation}
Since $R_1$ and $R_2$ are rank-one positive semidefinite matrices summing to $I_2/2$, they are orthogonal rank-one projectors of weight $1/2$. After right-multiplying $V$ by a $2\times2$ unitary acting only inside the code space, we may assume
\begin{equation}
V(\phi,\psi)=\frac1{\sqrt2}
\begin{pmatrix}
1&0\\
0&1\\
e^{i\phi}&0\\
0&e^{i\psi}
\end{pmatrix}
\end{equation}
for some phases $\phi,\psi\in\R$. Equivalently, the code basis vectors are
\begin{equation}
v_1=\frac{\lvert00\rangle+e^{i\phi}\lvert10\rangle}{\sqrt2},
\qquad
v_2=\frac{\lvert01\rangle+e^{i\psi}\lvert11\rangle}{\sqrt2}.
\label{eq:app_disc_code_basis}
\end{equation}

We now impose the remaining three scalar-compression conditions.

For $\widetilde E_{IZI}'$, the off-diagonal code matrix element vanishes automatically, while equality of the two diagonal entries gives
\begin{equation}
\begin{aligned}
\langle v_1|\widetilde E_{IZI}'|v_1\rangle&=\frac{1+2\cos\phi}{3},\\
\langle v_2|\widetilde E_{IZI}'|v_2\rangle&=\frac{-1+2\cos\psi}{3}.
\end{aligned}
\end{equation}
and therefore
\begin{equation}
\cos\psi-\cos\phi=1.
\label{eq:app_disc_cos_eq}
\end{equation}

For $\widetilde E_{YXX}'$, the diagonal entries vanish automatically, and the off-diagonal one is
\begin{equation}
\langle v_1|\widetilde E_{YXX}'|v_2\rangle
=\frac{2e^{i\phi}-2e^{i\psi}-e^{i(\phi+\psi)}+1}{6e^{i\phi}}.
\end{equation}
Hence exact detectability forces
\begin{equation}
2e^{i\phi}-2e^{i\psi}-e^{i(\phi+\psi)}+1=0.
\label{eq:app_disc_exp_eq}
\end{equation}
Introduce
\begin{equation}
u:=\frac{\phi+\psi}{2},
\qquad
v:=\frac{\psi-\phi}{2}.
\end{equation}
Then \eqref{eq:app_disc_exp_eq} is equivalent to
\begin{equation}
\sin u=-2\sin v,
\label{eq:app_disc_sin1}
\end{equation}
while \eqref{eq:app_disc_cos_eq} becomes
\begin{equation}
-2\sin u\,\sin v=1.
\label{eq:app_disc_sin2}
\end{equation}
Substituting \eqref{eq:app_disc_sin1} into \eqref{eq:app_disc_sin2} gives
\begin{equation}
4\sin^2 v=1,
\end{equation}
so $\sin v=\pm1/2$ and $\sin u=\mp1$. Modulo $2\pi$, this leaves exactly two phase pairs:
\begin{equation}
(\phi,\psi)=\left(\frac{2\pi}{3},\frac{\pi}{3}\right)
\qquad\text{or}\qquad
(\phi,\psi)=\left(\frac{4\pi}{3},\frac{5\pi}{3}\right).
\label{eq:app_disc_two_phase_pairs}
\end{equation}

It remains to evaluate the compression scalar for $\widetilde E_{XXI}'$. One finds
\begin{equation}
\langle v_1|\widetilde E_{XXI}'|v_1\rangle
=\frac13-\frac13\cos\phi+\frac1{\sqrt3}\sin\phi,
\end{equation}
\begin{equation}
\langle v_2|\widetilde E_{XXI}'|v_2\rangle
=\frac13+\frac13\cos\psi+\frac1{\sqrt3}\sin\psi.
\end{equation}
For the first phase pair in \eqref{eq:app_disc_two_phase_pairs}, both expressions equal $1$; for the second, both equal $0$. Thus the only feasible compression vectors are
\begin{equation}
(\alpha_{YXX},\alpha_{XXI},\alpha_{YXZ},\alpha_{YIX},\alpha_{IZI})=(0,1,0,0,0)
\end{equation}
and
\begin{equation}
(\alpha_{YXX},\alpha_{XXI},\alpha_{YXZ},\alpha_{YIX},\alpha_{IZI})=(0,0,0,0,0).
\end{equation}
Consequently the only attainable signature norms are
\begin{equation}
\lambda^*=1\qquad\text{and}\qquad \lambda^*=0.
\end{equation}
This proves that the cyclic-$+1$ spectrum is exactly $\{0,1\}$.
\end{proof}

The two corresponding cyclic-$+1$ code spaces can be written explicitly in the original three-qubit basis as
\begin{equation}
\mathcal C_1=\operatorname{span}\left\{
\begin{aligned}
&\frac{\lvert000\rangle+\lvert011\rangle+\lvert101\rangle+\lvert110\rangle}{2},\\
&\frac{\lvert001\rangle+\lvert010\rangle+\lvert100\rangle+\lvert111\rangle}{2}
\end{aligned}
\right\},
\end{equation}
for which $\lambda^*=1$, and
\begin{equation}
\mathcal C_0=\operatorname{span}\left\{
\begin{aligned}
&\frac{\lvert000\rangle-\lvert011\rangle-\lvert101\rangle-\lvert110\rangle}{2},\\
&\frac{\lvert001\rangle+\lvert010\rangle+\lvert100\rangle-\lvert111\rangle}{2}
\end{aligned}
\right\},
\end{equation}
for which $\lambda^*=0$.

\subsection{Pauli error sets with symmetry}

We now return to four representative three-qubit Pauli families that are stable under cyclic relabeling. For these families, symmetry acts by orbit reduction and the reduced problem can be analyzed explicitly. Let
\begin{equation}
\begin{aligned}
\mathcal E_1&:=\{X_i,Z_i\}_{i=1}^3,\\
\mathcal E_2&:=\{X_i,Y_i\}_{i=1}^3\cup\{Z_iZ_j\}_{1\le i<j\le 3},\\
\mathcal E_3&:=\mathcal E_2\cup\{Z_1Z_2Z_3\},\\
\mathcal E_4&:=\{X_i,Z_i\}_{i=1}^3\cup\{X_iX_j,Z_iZ_j\}_{i<j}\cup\{X_iZ_j\}_{i\neq j}.
\end{aligned}
\end{equation}
For each family, let $\boldsymbol{E}_i$ be any fixed ordering of the elements of $\mathcal E_i$. Since $\lambda^*$ is invariant under permutations of the tuple, the resulting spectra do not depend on the chosen ordering. The unconstrained and cyclic spectra are summarized in Table~\ref{tab:n3_cyclic_examples}. For these four families, the projector-level cyclic spectrum agrees with the state-level cyclic spectrum supported on $\mathcal H_0$, so a single cyclic column suffices.

\begin{table}[ht]
    \centering
    \caption{Representative cyclic-stable three-qubit Pauli families. For these four families, state-level and projector-level cyclic symmetry give the same signature spectrum.}
    \label{tab:n3_cyclic_examples}
    \footnotesize
    \renewcommand{\arraystretch}{1.3}
    \begin{tabular}{|c|c|c|}
        \hline
        \textbf{Error set} & \textbf{Unconstrained} $\Sigma_2(\boldsymbol{E}_i)$ & \textbf{Cyclic} $\Sigma_2^{(C_3)}(\boldsymbol{E}_i)$ \\
        \hline
        $\mathcal E_1$ & $[0,1]$ & $[0,1/\sqrt3]$ \\
        \hline
        $\mathcal E_2$ & $[0,\sqrt3]$ & $\{\sqrt3\}$ \\
        \hline
        $\mathcal E_3$ & $[0,\sqrt2]$ & $\emptyset$ \\
        \hline
        $\mathcal E_4$ & $\{0\}$ & $\{0\}$ \\
        \hline
    \end{tabular}
\end{table}

\subsubsection{Code projector with no symmetry}

Without imposing any symmetry on the code projector, the four families above already realize the basic interval and rigidity patterns that recur throughout the paper. In each case, the endpoint values are realized by explicit rank-$2$ stabilizer codes detecting the listed Pauli errors. For a stabilizer
\begin{equation}
\mathsf S=\langle g_1,g_2\rangle
\end{equation}
with commuting Pauli generators $g_1,g_2$, we write
\begin{equation}
P_{\mathsf S}:=\frac14(I+g_1)(I+g_2)
\end{equation}
for the associated stabilizer projector.

\paragraph{$\mathcal E_1=\{X_i,Z_i\}_{i=1}^3$.}
The unrestricted spectrum is
\begin{equation}
\Sigma_2(\boldsymbol{E}_1)=[0,1].
\end{equation}
A convenient choice of endpoint stabilizers is
\begin{equation}
\begin{aligned}
\mathsf S_1^{\min}&=\langle Y_1,\;Y_2Y_3\rangle,\\
\mathsf S_1^{\max}&=\langle X_1,\;Y_2Y_3\rangle.
\end{aligned}
\end{equation}
Thus $P_{\mathsf S_1^{\min}}$ realizes $\lambda^*=0$, while $P_{\mathsf S_1^{\max}}$ realizes $\lambda^*=1$.

An exact interpolating family is
\begin{equation}
\begin{aligned}
P_1(\theta)
&=\bigl(\lvert\phi(\theta)\rangle\!\langle\phi(\theta)\rvert\bigr)_1\otimes \frac{I+Y_2Y_3}{2},\\
\lvert\phi(\theta)\rangle
&=\frac{e^{i\theta/2}\lvert0\rangle+i e^{-i\theta/2}\lvert1\rangle}{\sqrt2},
\end{aligned}
\end{equation}
where $\theta\in[0,\pi/2]$.
A direct computation gives
\begin{equation}
\begin{aligned}
P_1(\theta)X_1P_1(\theta)&=\sin\theta\,P_1(\theta),\\
P_1(\theta)EP_1(\theta)&=0,
\qquad \forall\,E\in\mathcal E_1\setminus\{X_1\}.
\end{aligned}
\end{equation}
Hence
\begin{equation}
\lambda^*(P_1(\theta))=\sin\theta,
\end{equation}
so the full interval $[0,1]$ is attained, with
\begin{equation}
P_1(0)=P_{\mathsf S_1^{\min}},\qquad P_1(\pi/2)=P_{\mathsf S_1^{\max}}.
\end{equation}

\paragraph{$\mathcal E_2=\{X_i,Y_i\}_{i=1}^3\cup\{Z_iZ_j\}_{1\le i<j\le 3}$.}
Here the unrestricted spectrum expands to
\begin{equation}
\Sigma_2(\boldsymbol{E}_2)=[0,\sqrt3].
\end{equation}
A convenient choice of endpoint stabilizers is
\begin{equation}
\begin{aligned}
\mathsf S_2^{\min}&=\langle X_1Z_2Z_3,\;Z_1X_2Z_3\rangle,\\
\mathsf S_2^{\max}&=\langle Z_1Z_2,\;Z_2Z_3\rangle.
\end{aligned}
\end{equation}
Thus $P_{\mathsf S_2^{\min}}$ realizes $\lambda^*=0$, while
\begin{equation}
P_{\mathsf S_2^{\max}}
=
\lvert000\rangle\!\langle000\rvert+\lvert111\rangle\!\langle111\rvert
\end{equation}
realizes $\lambda^*=\sqrt3$.

A convenient exact two-parameter family is obtained from the orthonormal codewords
\begin{widetext}
\begin{equation}
\begin{aligned}
\lvert0_L(u,v)\rangle
&= \frac{\sqrt{(1+u)(1+v)}}{2}\lvert000\rangle +\frac{\sqrt{(1+u)(1-v)}}{2}\lvert100\rangle +\frac{\sqrt{(1-u)(1+v)}}{2}\lvert010\rangle -\frac{\sqrt{(1-u)(1-v)}}{2}\lvert110\rangle,\\
\lvert1_L(u,v)\rangle
&= \frac{\sqrt{(1+u)(1+v)}}{2}\lvert111\rangle +\frac{\sqrt{(1+u)(1-v)}}{2}\lvert011\rangle +\frac{\sqrt{(1-u)(1+v)}}{2}\lvert101\rangle -\frac{\sqrt{(1-u)(1-v)}}{2}\lvert001\rangle,
\end{aligned}
\end{equation}
\end{widetext}
with $u,v\in[0,1]$, and the projector
\begin{equation}
P_2(u,v)=\lvert0_L(u,v)\rangle\!\langle0_L(u,v)\rvert+\lvert1_L(u,v)\rangle\!\langle1_L(u,v)\rvert.
\end{equation}
For this family, the only nonzero compression coefficients are
\begin{equation}
\begin{aligned}
P_2(u,v)X_1P_2(u,v)&=u\sqrt{1-v^2}\,P_2(u,v),\\
P_2(u,v)X_2P_2(u,v)&=v\sqrt{1-u^2}\,P_2(u,v),
\end{aligned}
\end{equation}
and
\begin{equation}
\begin{aligned}
P_2(u,v)Z_1Z_2P_2(u,v)&=uv\,P_2(u,v),\\
P_2(u,v)Z_1Z_3P_2(u,v)&=v\,P_2(u,v),\\
P_2(u,v)Z_2Z_3P_2(u,v)&=u\,P_2(u,v),
\end{aligned}
\end{equation}
while
\begin{equation}
\begin{aligned}
P_2(u,v)X_3P_2(u,v)&=0,\\
P_2(u,v)Y_iP_2(u,v)&=0 \qquad (i=1,2,3).
\end{aligned}
\end{equation}
Hence
\begin{equation}
\begin{aligned}
\lambda^*(P_2(u,v))^2
&= u^2(1-v^2)+v^2(1-u^2)+(uv)^2\\
&\quad+u^2+v^2\\
&= 2u^2+2v^2-u^2v^2.
\end{aligned}
\end{equation}
In particular, the diagonal branch $u=v=t$ connects the two endpoint stabilizers:
\begin{equation}
\lambda^*(P_2(t,t))=t\sqrt{4-t^2},
\qquad t\in[0,1],
\end{equation}
with
\begin{equation}
P_2(0,0)=P_{\mathsf S_2^{\min}},\qquad
P_2(1,1)=P_{\mathsf S_2^{\max}}.
\end{equation}

\paragraph{$\mathcal E_3=\mathcal E_2\cup\{Z_1Z_2Z_3\}$.}
Adding the three-body parity operator reduces the unrestricted spectrum to
\begin{equation}
\Sigma_2(\boldsymbol{E}_3)=[0,\sqrt2].
\end{equation}
The lower endpoint can again be taken to be
\begin{equation}
\mathsf S_3^{\min}=\mathsf S_2^{\min}=\langle X_1Z_2Z_3,\;Z_1X_2Z_3\rangle,
\end{equation}
while a convenient upper endpoint is
\begin{equation}
\mathsf S_3^{\max}=\langle X_1,\;Z_2Z_3\rangle.
\end{equation}
Thus $P_{\mathsf S_3^{\min}}$ realizes $\lambda^*=0$, and $P_{\mathsf S_3^{\max}}$ realizes $\lambda^*=\sqrt2$.

The family in part (b) detects the additional operator $Z_1Z_2Z_3$ exactly only on the branches $uv=0$, since in the basis
\(
\{\lvert0_L(u,v)\rangle,\lvert1_L(u,v)\rangle\}
\)
its compressed matrix is $\operatorname{diag}(uv,-uv)$. Choosing the branch $v=0$ and writing $u=t\in[0,1]$ gives the exact interpolation
\begin{equation}
\begin{aligned}
\lvert0_L(t)\rangle
&= \frac{\sqrt{1+t}}{2}\bigl(\lvert000\rangle+\lvert100\rangle\bigr)\\
&\quad+\frac{\sqrt{1-t}}{2}\bigl(\lvert010\rangle-\lvert110\rangle\bigr),\\
\lvert1_L(t)\rangle
&= \frac{\sqrt{1+t}}{2}\bigl(\lvert111\rangle+\lvert011\rangle\bigr)\\
&\quad+\frac{\sqrt{1-t}}{2}\bigl(\lvert101\rangle-\lvert001\rangle\bigr).
\end{aligned}
\end{equation}
with projector
\begin{equation}
P_3(t)=\lvert0_L(t)\rangle\!\langle0_L(t)\rvert+\lvert1_L(t)\rangle\!\langle1_L(t)\rvert.
\end{equation}
For this branch, the only nonzero compression coefficients are
\begin{equation}
\begin{aligned}
P_3(t)X_1P_3(t)&=t\,P_3(t),\\
P_3(t)Z_2Z_3P_3(t)&=t\,P_3(t),
\end{aligned}
\end{equation}
while
\begin{equation}
P_3(t)EP_3(t)=0,
\qquad
\forall\,E\in\mathcal E_3\setminus\{X_1,Z_2Z_3\}.
\end{equation}
Hence
\begin{equation}
\lambda^*(P_3(t))=\sqrt2\,t,
\qquad t\in[0,1],
\end{equation}
with
\begin{equation}
P_3(0)=P_{\mathsf S_3^{\min}},\qquad
P_3(1)=P_{\mathsf S_3^{\max}}.
\end{equation}

\paragraph{$\mathcal E_4=\{X_i,Z_i\}_{i=1}^3\cup\{X_iX_j,Z_iZ_j\}_{i<j}\cup\{X_iZ_j\}_{i\neq j}$.}
In this case detectability is already rigid in the unrestricted problem:
\begin{equation}
\Sigma_2(\boldsymbol{E}_4)=\{0\}.
\end{equation}
Both endpoints therefore coincide. A convenient stabilizer witness is
\begin{equation}
\mathsf S_4^{\min}=\mathsf S_4^{\max}=\langle Y_1Y_2,\;Y_2Y_3\rangle.
\end{equation}
Every operator in $\mathcal E_4$ anticommutes with at least one generator of $\mathsf S_4^{\min}$, so
\begin{equation}
P_{\mathsf S_4^{\min}}\,E\,P_{\mathsf S_4^{\min}}=0,
\qquad
\forall\,E\in\mathcal E_4.
\end{equation}
Accordingly, the interpolation degenerates to the constant path
\begin{equation}
P_4(t)=P_{\mathsf S_4^{\min}},
\qquad
 t\in[0,1],
\end{equation}
for which
\begin{equation}
\lambda^*(P_4(t))\equiv 0.
\end{equation}

\subsubsection{Code projector with cyclic symmetry}

We now impose cyclic symmetry on these same four families. Let $T$ be the one-step cyclic shift and write $\omega=e^{2\pi i/3}$. The Hilbert space decomposes as
\begin{equation}
\mathcal H=\mathcal H_0\oplus\mathcal H_{\omega}\oplus\mathcal H_{\omega^2},
\end{equation}
with $\dim\mathcal H_0=4$ and $\dim\mathcal H_{\omega}=\dim\mathcal H_{\omega^2}=2$. For the present four families, the nontrivial charge sectors do not contribute additional cyclic detecting projectors: once the tuple contains one-body operators $X_i$, $Y_i$, or $Z_i$, the corresponding neutral one-body restrictions on $\mathcal H_{\omega}$ and $\mathcal H_{\omega^2}$ are already nonscalar. Hence the cyclic projector problem effectively reduces to the $\mathcal H_0$ block.

Inside $\mathcal H_0$, one passes from individual Pauli operators to their neutral orbit averages. The relevant reduced blocks are
\begin{equation}
\begin{aligned}
\bar X&:=\frac{X_1+X_2+X_3}{3},\\
\bar Y&:=\frac{Y_1+Y_2+Y_3}{3},\\
\bar Z&:=\frac{Z_1+Z_2+Z_3}{3},
\end{aligned}
\end{equation}
\begin{equation}
\begin{aligned}
\overline{ZZ}&:=\frac{Z_1Z_2+Z_2Z_3+Z_3Z_1}{3},\\
\overline{XX}&:=\frac{X_1X_2+X_2X_3+X_3X_1}{3},
\end{aligned}
\end{equation}
\begin{equation}
\begin{aligned}
\overline{XZ}
&:=\frac{X_1Z_2+X_2Z_3+X_3Z_1}{3}\\
&=\frac{Z_1X_2+Z_2X_3+Z_3X_1}{3}
\end{aligned}
\end{equation}
on $\mathcal H_0$.Their explicit $4\times4$ matrices are listed below. 

For the symmetry-compatible three-qubit families from Sec.~\ref{sec:n3_k2}, it is useful to record the explicit neutral cyclic blocks on the $+1$ sector $\mathcal H_0$. In the ordered basis $(\lvert000\rangle,\lvert W\rangle,\lvert\overline W\rangle,\lvert111\rangle)$ one has
\begingroup
\setlength{\arraycolsep}{3pt}
\begin{equation}
\bar X=
\begin{pmatrix}
0&1/\sqrt3&0&0\\
1/\sqrt3&0&2/3&0\\
0&2/3&0&1/\sqrt3\\
0&0&1/\sqrt3&0
\end{pmatrix}.
\end{equation}
\begin{equation}
\bar Y=
\begin{pmatrix}
0&-i/\sqrt3&0&0\\
i/\sqrt3&0&-2i/3&0\\
0&2i/3&0&-i/\sqrt3\\
0&0&i/\sqrt3&0
\end{pmatrix}.
\end{equation}
\begin{equation}
\begin{aligned}
\bar Z&=\operatorname{diag}\!\left(1,\frac13,-\frac13,-1\right),\\
\overline{ZZ}&=\operatorname{diag}\!\left(1,-\frac13,-\frac13,1\right),\\
Z_1Z_2Z_3&=\operatorname{diag}(1,-1,1,-1).
\end{aligned}
\end{equation}
\begin{equation}
\overline{XX}=
\begin{pmatrix}
0&0&1/\sqrt3&0\\
0&2/3&0&1/\sqrt3\\
1/\sqrt3&0&2/3&0\\
0&1/\sqrt3&0&0
\end{pmatrix}.
\end{equation}
\begin{equation}
\overline{XZ}=
\begin{pmatrix}
0&1/\sqrt3&0&0\\
1/\sqrt3&0&0&0\\
0&0&0&-1/\sqrt3\\
0&0&-1/\sqrt3&0
\end{pmatrix}.
\end{equation}
\endgroup
These matrices satisfy the collective-spin identities quoted in \eqref{eq:n3_collective_first} and \eqref{eq:n3_collective_second}:
\begin{equation}
\bar X=\frac23J_x, \qquad \bar Y=\frac23J_y, \qquad \bar Z=\frac23J_z,
\label{eq:n3_collective_first}
\end{equation}
\begin{equation}
\begin{aligned}
\overline{ZZ}&=\frac23J_z^2-\frac12I,\\
\overline{XX}&=\frac23J_x^2-\frac12I,\\
\overline{XZ}&=\frac13\{J_x,J_z\}.
\end{aligned}
\label{eq:n3_collective_second}
\end{equation}
Thus, the cyclic reduction of the three-qubit problem is governed not by a generic Pauli tuple, but by a small collective-spin algebra acting on the spin-3/2 sector. This is precisely why the symmetry-compatible three-qubit examples retain a rigid orbit-reduced structure, in sharp contrast with the externally constrained random tuples from Sec.~\ref{sec:n3_k2}.

\paragraph{$\mathcal E_1$.}
For a cyclic rank-$2$ projector $P\subseteq\mathcal H_0$, the compression coefficients are site-independent:
\begin{equation}
\begin{aligned}
\lambda_{X_1}&=\lambda_{X_2}=\lambda_{X_3}=:\lambda_X,\\
\lambda_{Z_1}&=\lambda_{Z_2}=\lambda_{Z_3}=:\lambda_Z.
\end{aligned}
\end{equation}
For any real $u,v$ with $u^2+v^2=1$, the observable $u\bar X+v\bar Z=2J_{\hat n}/3$ has eigenvalues $\{-1,-1/3,1/3,1\}$. Eigenvalue interlacing for rank-$2$ scalar compressions therefore gives
\begin{equation}
|u\lambda_X+v\lambda_Z|\le \frac13,
\qquad \forall\,u^2+v^2=1,
\end{equation}
and hence
\begin{equation}
\lambda_X^2+\lambda_Z^2\le \frac19.
\end{equation}
Since $\lambda^*(P)^2=3\lambda_X^2+3\lambda_Z^2$, every cyclic code satisfies
\begin{equation}
\lambda^*(P)\le \frac1{\sqrt3}.
\end{equation}
The full interval $[0,1/\sqrt3]$ is realized, so
\begin{equation}
\Sigma_2^{(C_3)}(\boldsymbol{E}_1)=[0,\frac{1}{\sqrt3}].
\end{equation}

\paragraph{$\mathcal E_2$.}
The symmetric two-dimensional subspace
\begin{equation}
\operatorname{span}\{\lvert000\rangle,\lvert111\rangle\}
\end{equation}
is an immediate cyclic code. On it, $\overline{ZZ}=I$ while $\bar X$ and $\bar Y$ map the subspace into $\operatorname{span}\{\lvert W\rangle,\lvert\overline W\rangle\}$ and hence compress to $0$. Therefore
\begin{equation}
(\lambda_X,\lambda_Y,\lambda_{ZZ})=(0,0,1), \qquad \lambda^*(P)=\sqrt3.
\end{equation}
The reduced cyclic problem admits no intermediate values, and one finds
\begin{equation}
\Sigma_2^{(C_3)}(\boldsymbol{E}_2)=\{\sqrt3\}.
\end{equation}

\paragraph{$\mathcal E_3$.}
Adding $Z_1Z_2Z_3$ destroys the previous cyclic code, because on $\operatorname{span}\{\lvert000\rangle,\lvert111\rangle\}$ one has
\begin{equation}
Z_1Z_2Z_3=\operatorname{diag}(1,-1),
\end{equation}
which is not a scalar compression. The other natural two-dimensional block $\operatorname{span}\{\lvert W\rangle,\lvert\overline W\rangle\}$ already fails because $\bar X$ and $\bar Y$ are nonscalar there. The reduced cyclic equations are therefore infeasible, so
\begin{equation}
\Sigma_2^{(C_3)}(\boldsymbol{E}_3)=\emptyset.
\end{equation}

\paragraph{$\mathcal E_4$.}
The cyclicly reduced tuple is, up to affine rescaling, the family
\begin{equation}
J_x,\qquad J_z,\qquad J_x^2,\qquad \{J_x,J_z\},\qquad J_z^2.
\end{equation}
Simultaneous scalar compression forces vanishing first moments and isotropic second moments in the $x$-$z$ plane. Translating back to the Pauli orbit averages gives zero compression coefficients for all listed observables, hence
\begin{equation}
\Sigma_2^{(C_3)}(\boldsymbol{E}_4)=\{0\}.
\end{equation}

The contrast with the random cyclic-$+1$ experiments from the previous subsection is therefore structural rather than merely numerical. For the symmetry-compatible families $\mathcal E_1,\dots,\mathcal E_4$, cyclic symmetry reorganizes the Knill--Laflamme equations into neutral orbit averages and reduces the problem to a low-dimensional collective-spin block. For a random asymmetric tuple, by contrast, imposing $P\subseteq\mathcal H_0$ is an external constraint on the code and can destroy the interval behavior seen in the unrestricted problem.

\section{Larger Systems, Asymmetric Codes and Symmetry Constraints}
\label{sec:state_vs_projector}

Sections~\ref{sec:cases} and~\ref{sec:n3_k2} established the basic interval, singleton, and emptiness patterns for two- and three-qubit codes. We now extend this analysis to larger systems, where both the choice of detectable Pauli family and the level at which symmetry is imposed on the code give rise to a richer variety of behaviors. Table~\ref{tab:symmetry_gaps} summarizes the corresponding signature spectrum. In every row of Table~\ref{tab:symmetry_gaps}, the error family is invariant under qubit permutations, and hence also under cyclic shifts. Thus each pair $(\mathcal E,G)$ is symmetry-compatible in the sense of Sec.~\ref{sec:sym}. Moreover, in all symmetry-compatible cases studied here, the interval phenomenon persists: whenever the attainable signature spectrum is nonempty, it is a single closed interval.

The column $d$ in Table~\ref{tab:symmetry_gaps} distinguishes distance-$2$ codes, with error set $\{X_i,Y_i,Z_i\}$, from asymmetric families built from the hierarchical family $\mathcal E^{\mathrm{asym}}_{n,r}$ or the mixed family $\mathcal E^{\mathrm{mix}}_{n}$ introduced in Sec.~\ref{sec:preliminaries}. The label $r$ records the largest body order among the included $Z$-type strings: $r=2$ includes all $Z_iZ_j$, $r=3$ further includes all $Z_iZ_jZ_k$, and so on; $r=-$ indicates that the error set does not belong to this hierarchy.

Recall from Sec.~\ref{sec:sym} that \emph{state-level} symmetry requires every basis vector to be individually $G$-invariant, $U_g|v_i\rangle=|v_i\rangle$, whereas \emph{projector-level} symmetry requires only $U_g P U_g^\dagger=P$ for the code projector. The former implies the latter, but not conversely: a $G$-invariant projector may support a code space that transforms under a nontrivial representation of~$G$. For a fixed row of Table~\ref{tab:symmetry_gaps}, let $\mathcal{E}$ denote the corresponding detectable set from the definitions above, and let $\boldsymbol{E}$ be any fixed ordering of the elements of $\mathcal E$. We then let
$\mathcal P^{\mathrm{det}}_{K,\mathcal E}:=\{P\in\mathcal P_K : P \text{ is } \mathcal E\text{-detecting}\}$,
$\mathcal P^{\mathrm{basis}}_{K,G}:=\{P\in\mathcal P_K : P \text{ is state-level } G\text{-symmetric}\}$,
and
$\mathcal P^{\mathrm{proj}}_{K,G}:=\{P\in\mathcal P_K : P \text{ is projector-level } G\text{-symmetric}\}$.
Using these sets, we define the symmetry-restricted spectra
\begin{equation}
\Sigma^{\mathrm{basis}}_{K,G}(\boldsymbol{E})
:= \{\lambda^*(P): P\in\mathcal P^{\mathrm{det}}_{K,\mathcal E}\cap\mathcal P^{\mathrm{basis}}_{K,G}\},
\label{eq:sigma_models}
\end{equation}
\begin{equation}
\Sigma^{\mathrm{proj}}_{K,G}(\boldsymbol{E})
:= \{\lambda^*(P): P\in\mathcal P^{\mathrm{det}}_{K,\mathcal E}\cap\mathcal P^{\mathrm{proj}}_{K,G}\}.
\end{equation}
By construction, $\Sigma^{\mathrm{proj}}_{K,G}(\boldsymbol{E})$ coincides with the symmetry-restricted spectrum $\Sigma_K^{(G)}(\boldsymbol{E})$ defined in Sec.~\ref{sec:sym}. The more explicit superscript is adopted here to contrast with the state-level variant.
By definition,
\begin{equation}
\Sigma^{\mathrm{basis}}_{K,G}(\boldsymbol{E})
\subseteq
\Sigma^{\mathrm{proj}}_{K,G}(\boldsymbol{E})
\subseteq
\Sigma_K(\boldsymbol{E}).
\end{equation}
Hence the key question in this section is how the spectrum changes when the code-side symmetry is relaxed from basis level to projector level within a fixed symmetry-compatible problem. Three qualitative effects are observed: interval expansion, singleton collapse, and non-emptiness recovery. Table~\ref{tab:symmetry_gaps} summarizes these effects across parameters. In the spectrum columns, we write intervals as $[a,b]$, singletons as $\{c\}$, and $\emptyset$ for empty spectrum.

\begin{table*}[t]
    \centering
    \caption{Signature spectrum configurations for various codes under symmetry-compatible cyclic and permutation restrictions, comparing state-level and projector-level code-side symmetry. The spectrum values reported in the table are numerical estimates.}
    \label{tab:symmetry_gaps}
    \renewcommand{\arraystretch}{1.3} 
    \resizebox{\textwidth}{!}{%
    \begin{tabular}{|c|c|c|c|c|c|c|c|c|c|}
        \hline
        \textbf{r} & \textbf{n} & \textbf{K} & \textbf{d} & \textbf{Error set} & \textbf{Unconstrained} & \textbf{Cyclic basis} & \textbf{Cyclic projector} & \textbf{PI basis} & \textbf{PI projector} \\
        \hline
        - & 4 & 2 & 2 & $\{X_i, Y_i, Z_i\}$ & $\{0\}$ & $\{0\}$ & $\{0\}$ & $\{0\}$ & $\{0\}$ \\
        \hline
        - & 4 & 3 & 2 & $\{X_i, Y_i, Z_i\}$ & $\{0\}$ & $\{0\}$ & $\{0\}$ & $\emptyset$ & $\{0\}$ \\
        \hline
        - & 4 & 4 & 2 & $\{X_i, Y_i, Z_i\}$ & $\{0\}$ & $\emptyset$ & $\{0\}$ & $\emptyset$ & $\{0\}$ \\
        \hline
        - & 5 & 2 & 2 & $\{X_i, Y_i, Z_i\}$ & $[0, 1]$ & $[0, \sqrt{0.42}]$ & $[0, \sqrt{0.64}]$ & $[0, \sqrt{0.34}]$ & $[0, \sqrt{0.34}]$ \\
        \hline
        - & 5 & 3 & 2 & $\{X_i, Y_i, Z_i\}$ & $[0, 1]$ & $\{\sqrt{0.2}\}$ & $[0, \sqrt{0.57}]$ & $\emptyset$ & $\emptyset$ \\
        \hline
        - & 5 & 4 & 2 & $\{X_i, Y_i, Z_i\}$ & $[0, 1]$ & $\emptyset$ & $[0, \sqrt{0.2}]$ & $\emptyset$ & $\{0\}$ \\
        \hline
        - & 5 & 5 & 2 & $\{X_i, Y_i, Z_i\}$ & $\{0\}$ & $\emptyset$ & $\{0\}$ & $\emptyset$ & $\{0\}$ \\
        \hline
        - & 5 & 6 & 2 & $\{X_i, Y_i, Z_i\}$ & $\{0\}$ & $\emptyset$ & $\{0\}$ & $\emptyset$ & $\emptyset$ \\
        \hline
        - & 5 & 2 & - & $\{X_i, Y_i, Z_i, X_iX_j, Z_iZ_j, X_iZ_j, Z_iX_j\}$ & $[0, \sqrt{1.25}]$ & $[0, \sqrt{1.25}]$ & $[0, \sqrt{1.25}]$ & $\{\sqrt{1.25}\}$ & $\{\sqrt{1.25}\}$ \\
        \hline
        2 & 5 & 2 & - & $\{X_i, Y_i, Z_i, Z_iZ_j\}$ & $[0, \sqrt{3}]$ & $[0, \sqrt{2.5}]$ & $[0, \sqrt{2.5}]$ & $[\sqrt{0.31}, \sqrt{2.5}]$ & $[\sqrt{0.31}, \sqrt{2.5}]$ \\
        \hline
        3 & 5 & 2 & - & $\{X_i, Y_i, Z_i, Z_iZ_j, Z_iZ_jZ_k\}$ & $[0, 1]$ & $[0, \sqrt{0.5}]$ & $[0, \sqrt{0.5}]$ & $\emptyset$ & $\emptyset$ \\
        \hline
        4 & 5 & 2 & - & $\{X_i, Y_i, Z_i, Z_iZ_j, Z_iZ_jZ_k, Z_iZ_jZ_kZ_l\}$ & $\{0\}$ & $\{0\}$ & $\{0\}$ & $\emptyset$ & $\emptyset$ \\
        \hline
        2 & 5 & 3 & - & $\{X_i, Y_i, Z_i, Z_iZ_j\}$ & $\{0\}$ & $\emptyset$ & $\emptyset$ & $\emptyset$ & $\emptyset$ \\
        \hline
        2 & 5 & 4 & - & $\{X_i, Y_i, Z_i, Z_iZ_j\}$ & $\{0\}$ & $\emptyset$ & $\emptyset$ & $\emptyset$ & $\emptyset$ \\
        \hline
    \end{tabular}%
    } 
\end{table*}

\subsection{Cyclic symmetry}

Within a fixed cyclic-stable error model, relaxing to projector-level invariance allows the basis vectors to acquire complex phases under the cyclic shift, thereby enlarging the symmetry-restricted spectrum from $\Sigma^{\mathrm{basis}}_{K,C_n}(\boldsymbol{E})$ to $\Sigma^{\mathrm{proj}}_{K,C_n}(\boldsymbol{E})$.

\subsubsection{Interval expansion in $((5,2,2))$ codes.}
Let $n=5$ and $K=2$, and denote by $T$ the cyclic shift, a $5$-cycle acting on $(\mathbb{C}^2)^{\otimes 5}$. We write $\sum_{\mathrm{cyc}}$ for the sum over the five cyclic shifts of a computational basis string. When each logical basis vector is required to lie in the $+1$ eigenspace of $T$, so that $T|0_L\rangle=|0_L\rangle$ and $T|1_L\rangle=|1_L\rangle$, the attainable signature norm is analytically confined to the bounded interval
\begin{equation}
\lambda^*(P)
\in \left[0,\frac{\sqrt6-1}{\sqrt5}\right]
   \approx [0,0.6482].
\label{eq:522spectrum}
\end{equation}
and this entire interval is realized by a \emph{single} continuous cyclic-invariant family.

Specifically, for a real parameter $t$ in the range
\begin{equation}
0\le t\le \frac{\sqrt6-1}{5},
\label{eq:t_range_522}
\end{equation}
define the cyclic-invariant codewords
\begin{equation}
\begin{aligned}
|0_L(t)\rangle
&= a_0\,|00000\rangle + a_1\,|11111\rangle \\
&\quad + \frac{a_2}{\sqrt5}\sum_{\mathrm{cyc}}
\bigl(|00011\rangle - |00101\rangle\bigr),\\[2pt]
|1_L(t)\rangle
&= \frac{b_1}{\sqrt5}\sum_{\mathrm{cyc}} |00001\rangle \\
&\quad + \frac{b_2}{\sqrt5}\sum_{\mathrm{cyc}}
\bigl(|01011\rangle - |00111\rangle\bigr).
\end{aligned}
\label{eq:family_522_onepiece}
\end{equation}
with real non-negative coefficients satisfying
\begin{equation}
\begin{aligned}
a_0^2 &= \frac{3-2t-5t^2}{4(3+5t)},\qquad
b_1^2 = \frac{1+5t}{4},\\
a_1^2 &= \frac{1-2t-5t^2}{3+5t},\qquad
b_2^2 = \frac{3-5t}{8},\\
a_2^2 &= \frac{5(5t^2+6t+1)}{8(3+5t)}.
\end{aligned}
\label{eq:coeffs_522_onepiece}
\end{equation}
These choices ensure that $|0_L(t)\rangle$ and $|1_L(t)\rangle$ are normalized, mutually orthogonal, and each lies in the $+1$ eigenspace of $T$.

Let $P(t)=|0_L(t)\rangle\langle 0_L(t)|+|1_L(t)\rangle\langle 1_L(t)|$. A direct check of the Knill--Laflamme (KL) constraints for $\mathcal E=\{X_i,Y_i,Z_i\}_{i=1}^5$ shows that, for every site $i$,
\begin{equation}
\begin{aligned}
P(t)\,X_i\,P(t) &= P(t)\,Y_i\,P(t)=0,\\
P(t)\,Z_i\,P(t) &= t\,P(t).
\end{aligned}
\end{equation}
Hence the signature norm is
\begin{equation}
\lambda^*(P(t))=\left(\sum_{i=1}^{5}|t|^2\right)^{1/2}=\sqrt5\,t,
\end{equation}
so that as $t$ varies over \eqref{eq:t_range_522} the family \eqref{eq:family_522_onepiece} sweeps \emph{continuously} and \emph{exactly} the full cyclic-basis interval \eqref{eq:522spectrum}.

Finally, if one relaxes the requirement that each basis vector be $T$-invariant and instead requires only projector-level cyclic symmetry, 
\begin{equation}
TPT^\dagger=P, \qquad P=|0_L\rangle\langle 0_L|+|1_L\rangle\langle 1_L|,
\end{equation}
then $|0_L\rangle$ and $|1_L\rangle$ may transform with nontrivial phases under $T$ while the projector remains cyclic-symmetric. Writing $\omega=e^{2\pi i/5}$, a convenient cyclic-eigenvector construction is
\begin{equation}
|0_L\rangle\propto\sum_{k=0}^{4}\omega^{k}T^{k}|\psi_0\rangle,
\qquad
|1_L\rangle\propto\sum_{k=0}^{4}\omega^{-k}T^{k}|\psi_1\rangle. 
\end{equation}
With this relaxation, the feasible range extends beyond the analytic cyclic-basis bound, yielding the numerically observed interval
$\lambda^*(P)\in[0,\approx 0.8055]$. Equivalently,
\begin{equation}
\begin{aligned}
\Sigma^{\mathrm{basis}}_{2,C_5}(\boldsymbol{E})&=\left[0,\frac{\sqrt6-1}{\sqrt5}\right],\\
\Sigma^{\mathrm{proj}}_{2,C_5}(\boldsymbol{E})&\approx[0,0.8055].\\
\end{aligned}
\end{equation}


\subsubsection{Expansion from an isolated point in $((5,3,2))$.}
\label{sec:532_cyclic_gap}
For $\mathcal E=\{X_i,Y_i,Z_i\}_{i=1}^5$ and any fixed ordering $\boldsymbol E$, the cyclic-basis spectrum behaves differently from the interval expansion seen for $((5,2,2))$. Here, imposing \emph{state-level} cyclic symmetry forces each logical basis vector to lie in the $+1$ eigenspace of the cyclic shift $T$, and the attainable value of $\lambda^*(P)$ collapses to a single point. One explicit cyclic-invariant orthonormal basis is
\begin{equation}
\begin{aligned}
|0_L\rangle &= \sqrt{\frac35}\,|00000\rangle+\sqrt{\frac25}\,|11111\rangle,\\
|1_L\rangle &= \pm\frac{1}{\sqrt5}\sum_{\mathrm{cyc}}|00101\rangle,\\
|2_L\rangle &= \pm\frac{1}{\sqrt5}\sum_{\mathrm{cyc}}|00011\rangle.
\end{aligned}
\end{equation}
For each of these codewords, every qubit $i\in\{1,\dots,5\}$ is $1$ with probability $2/5$ and $0$ with probability $3/5$, so
\begin{equation}
\langle j_L|Z_i|j_L\rangle=\frac35-\frac25=\frac15,
\qquad j\in\{0,1,2\}.
\end{equation}
Moreover, $\langle j_L|X_i|j_L\rangle=\langle j_L|Y_i|j_L\rangle=0$, $\langle j_L|E|j'_L\rangle=0$ for $j\neq j'$ and all $E\in\{X_i,Y_i,Z_i\}_{i=1}^5$, since $X_i$ and $Y_i$ change Hamming weight and the three codewords occupy orthogonal cyclic weight sectors. Hence the only nonzero signature components are $\langle Z_i\rangle_{\rho_P}=1/5$, giving
\begin{equation}
\lambda^*(P)=\Bigl(\sum_{i=1}^5\Bigl|\frac15\Bigr|^2\Bigr)^{1/2}=\frac{1}{\sqrt5}.
\end{equation}
Thus
\begin{equation}
\Sigma^{\mathrm{basis}}_{3,C_5}(\boldsymbol E)=\left\{\frac{1}{\sqrt5}\right\}.
\end{equation}

To understand the projector-level expansion, it is useful to resolve the Hilbert space into $C_5$ character sectors. Let $\omega=e^{2\pi i/5}$ and let $T$ act by cyclic shift on computational basis states. Besides the two fixed strings $00000$ and $11111$, the remaining basis states split into six length-$5$ orbits generated by
$00001,\ 00011,\ 00101,\ 00111,\ 01011,\ 01111.$
For each nontrivial orbit representative $v$, define the Fourier-orbit states
\begin{equation}
|v;\ell\rangle=\frac{1}{\sqrt5}\sum_{r=0}^{4}\omega^{-\ell r}T^r|v\rangle,
\qquad \ell=0,1,2,3,4,
\end{equation}
so that $T|v;\ell\rangle=\omega^\ell |v;\ell\rangle$. This yields the decomposition
\begin{equation}
\begin{aligned}
(\mathbb C^2)^{\otimes 5}
&=
\bigoplus_{\ell=0}^{4}\mathcal H_\ell,\\
\bigl(\dim \mathcal H_\ell\bigr)_{\ell=0}^{4}
&=
(8,6,6,6,6).
\end{aligned}
\end{equation}
Hence every cyclic rank-$3$ projector is block diagonal,
\begin{equation}
\begin{aligned}
P &= \sum_{\ell=0}^{4} P_\ell,
\qquad
P_\ell := \Pi_\ell P \Pi_\ell,\\
r_\ell &:= \operatorname{rank}(P_\ell),
\qquad
\sum_{\ell=0}^{4} r_\ell = 3.
\end{aligned}
\end{equation}
where $\Pi_\ell$ is the projector onto $\mathcal H_\ell$. We write
\begin{equation}
\Sigma^{\mathrm{proj}}_{3,C_5}(\boldsymbol E;r_0,r_1,r_2,r_3,r_4) 
\end{equation}
for the projector-level spectrum inside a fixed support type $(r_0,r_1,r_2,r_3,r_4)$.

For $d=2$, the KL conditions only involve weight-$1$ Pauli operators. By cyclicity it is enough to impose the single-site compressions
\begin{equation}
\begin{aligned}
PX_1P&=\lambda_XP,\\
PY_1P&=\lambda_YP,\\
PZ_1P&=\lambda_ZP, 
\end{aligned}
\end{equation}
which then hold on every site. Consequently,
\begin{equation}
\lambda^{*2}(P)=5\bigl(|\lambda_X|^2+|\lambda_Y|^2+|\lambda_Z|^2\bigr).
\label{eq:532_irrep_lambda}
\end{equation}

The support type
\begin{equation}
(r_0,r_1,r_2,r_3,r_4)=(3,0,0,0,0)
\end{equation}
corresponds exactly to restricting $\mathrm{Ran}(P)\subseteq\mathcal H_0$, i.e. to the $+1$ eigenspace of $T$. This is the natural projector-level counterpart of the state-level cyclic condition. Numerically, both direct minimization and direct maximization on this branch collapse to the same value $\lambda_{\min}^{*2}\approx 0.2$, with $L_{\mathrm{KL}}\approx 6.67\times10^{-17}$ at both endpoints. Thus the entire projector search inside the $+1$ sector remains frozen at the same isolated point:
\begin{equation}
\Sigma^{\mathrm{proj}}_{3,C_5}(\boldsymbol E;3,0,0,0,0)
\approx
\left\{\frac{1}{\sqrt5}\right\}.
\end{equation}

The interval expansion appears only after support on nontrivial momentum sectors is allowed. Already the mixed support type
\begin{equation}
(r_0,r_1,r_2,r_3,r_4)=(1,1,0,0,1)  
\end{equation}
with one rank-$1$ block in each of $\mathcal H_0,\mathcal H_1,\mathcal H_4$ realizes the full projector-level interval found numerically. $\lambda_{\min}^{*2}$ and $\lambda_{\max}^{*2}$ are given by
\begin{equation}
\lambda_{\min}^{*2}\approx 3.7567\times10^{-12},
\qquad
\lambda_{\max}^{*2}\approx 0.5737,
\end{equation}
with corresponding residuals
\begin{equation}
L_{\mathrm{KL}}^{\min}\approx 3.98\times10^{-23},
\qquad
L_{\mathrm{KL}}^{\max}\approx 1.16\times10^{-16}.
\end{equation}
Equivalently,
\begin{equation}
\Sigma^{\mathrm{proj}}_{3,C_5}(\boldsymbol E;1,1,0,0,1)
\approx
[\,0,\;0.7575\,],
\label{eq:532_branch11001}
\end{equation}
or, in terms of $\lambda^{*2}$,
\begin{equation}
\lambda^{*2}\in[\,0,\;0.5737\,].
\end{equation}

To test whether \eqref{eq:532_branch11001} is genuinely an interval rather than just two isolated endpoint solutions, we also ran target optimizations for $\lambda^{*2}=t$ within the $(1,1,0,0,1)$ branch. Across the sampled range, the realized values track the targets to about $10^{-6}$ or better, while $L_{\mathrm{KL}}$ remains between about $10^{-19}$ and $10^{-21}$. This gives strong numerical evidence that the branch $(1,1,0,0,1)$ fills the entire interval in \eqref{eq:532_branch11001}.

Scanning all admissible sector-rank allocations gives the same global maximum, already attained on the $(1,1,0,0,1)$ branch, and a lower endpoint numerically indistinguishable from zero. Accordingly, the refined cyclic spectra are
\begin{equation}
\begin{aligned}
\Sigma^{\mathrm{basis}}_{3,C_5}(\boldsymbol E)
&=
\left\{\frac{1}{\sqrt5}\right\},\\
\Sigma^{\mathrm{proj}}_{3,C_5}(\boldsymbol E)
&\approx
[\,0,\;0.7575\,].
\end{aligned}
\end{equation}
No exact closed form is claimed for the upper endpoint. The structural point is that the interval beyond $1/\sqrt5$ is created entirely by support outside the trivial sector $\mathcal H_0$: even at projector level, the branch $\mathrm{Ran}(P)\subseteq\mathcal H_0$ remains a singleton at $\lambda^*=1/\sqrt5$, and the additional interval is produced by admitting nontrivial cyclic-momentum support, already through the conjugate pair $\mathcal H_1\oplus\mathcal H_4$.

\subsubsection{State--projector existence gap in $((4,4,2))$ and $((5,4,2))$.}
For $((4,4,2))$ and $((5,4,2))$, the $+1$ cyclic sector does not appear to contain a rank-$4$ projector satisfying the KL constraints, so no state-level cyclic basis code is obtained. Nevertheless, a cyclic-symmetric projector may still exist. For $((4,4,2))$, one such exact cyclic-symmetric projector at $\lambda^*(P)=0$ is
\begin{equation}
P=\frac{1}{4}(1+XXXX)(1+ZZZZ).
\end{equation}
For $((5,4,2))$, while no cyclic-invariant orthonormal basis is found, our computations indicate that cyclic-symmetric projectors exist throughout the interval
$\lambda^*(P)\in[0,1/\sqrt5]$. In spectrum language, this gives an existence gap with
\begin{equation}
\Sigma^{\mathrm{basis}}_{4,C_5}(\boldsymbol{E})=\emptyset,
\qquad
\Sigma^{\mathrm{proj}}_{4,C_5}(\boldsymbol{E})=\left[0,\frac{1}{\sqrt5}\right].
\end{equation}

\subsubsection{Asymmetric interval in $((5,2))$ codes with $\mathcal E^{\mathrm{asym}}_{5,2}$.}
\label{sec:52_asym_r2_cyclic}
At $n=5$ and $K=2$, we consider the hierarchical asymmetric family
\begin{equation}
\boldsymbol{E}:=\mathcal E^{\mathrm{asym}}_{5,2}
=\{X_i,Y_i,Z_i\}_{i=1}^5\cup\{Z_iZ_j\}_{1\le i<j\le 5}.
\end{equation}
Unlike the distance-$2$ case above, this error set includes all two-body $Z$-type correlators, so the KL conditions simultaneously constrain single-qubit and two-qubit $Z$-parity compressions. The unrestricted spectrum is $\Sigma_2(\boldsymbol{E})=[0,\sqrt3]$, realized by two complementary code families. 

For $\lambda^{*2}\in[0,5/2]$, a cyclic-invariant construction of the form
\begin{equation}
\begin{aligned}
|0_L\rangle &= a_0|00000\rangle
 + \frac{a_1}{\sqrt{5}}\sum_{\mathrm{cyc}}|00011\rangle\\
&\quad + \frac{a_2}{\sqrt{5}}\sum_{\mathrm{cyc}}|00101\rangle
 + \frac{a_3}{\sqrt{5}}\sum_{\mathrm{cyc}}|01111\rangle, \\
|1_L\rangle &= X^{\otimes 5}|0_L\rangle,
\end{aligned}
\label{eq:52_r2_cyc_ansatz}
\end{equation}
with $a_0,a_1,a_2\in\mathbb{R}$ and $a_3\in\mathbb{C}$ satisfies the full set of KL conditions. The associated projector is $P=|0_L\rangle\langle 0_L|+|1_L\rangle\langle 1_L|$. Solving the resulting constraint equations reduces all coefficients to a single free parameter $a_0$:
\begin{equation}
\begin{aligned}
a_1&=-a_2=\pm\sqrt{\frac{3}{8}-a_0^2},\\
|a_3|^2&=a_0^2+\frac{1}{4},\\
\Re(a_3)&=-\frac{1}{\sqrt{5}\,a_0}\Bigl(\frac{3}{8}-a_0^2\Bigr),
\end{aligned}
\end{equation}
and the signature norm takes the closed form
\begin{equation}
\lambda^{*2}=\frac{1}{10}\bigl(16a_0^2-1\bigr)^2.
\label{eq:52_r2_cyc_lambda}
\end{equation}
Here $a_0^2\in[1/16,\,3/8]$, and as $a_0$ varies over this range $\lambda^{*2}\in[0,\,5/2]$, so $\lambda^*\in[0,\sqrt{5/2}\,]$.

For $\lambda^{*2}\in[2,3]$, the remaining portion of the unrestricted spectrum is covered by a non-cyclic family.  Let $t\in[0,1/4]$ be a free parameter, and define the codewords
\begin{equation}
\begin{aligned}
|0_L(t)\rangle &= b_0|00110\rangle+b_1|01001\rangle+b_2|01100\rangle\\
&\quad +b_3|10001\rangle+b_4|10100\rangle+b_5|11011\rangle, \\
|1_L(t)\rangle &= Z_5\,X^{\otimes 5}|0_L(t)\rangle,
\end{aligned}
\label{eq:52_r2_noncyc}
\end{equation}
with real coefficients satisfying
\begin{equation}
\begin{aligned}
b_0^2&=b_5^2=\frac14,\\ 
b_1^2&=b_4^2=t,\\
b_2^2&=b_3^2=\frac14-t.
\end{aligned}
\label{eq:52_r2_noncyc_coeffs}
\end{equation}

Let $P(t)=|0_L(t)\rangle\langle 0_L(t)|+|1_L(t)\rangle\langle 1_L(t)|$. One verifies that $P(t)$ satisfies the KL conditions, and the signature norm evaluates to
\begin{equation}
\lambda^{*2}=1+2(1-4t)^2+2(4t)^2,
\end{equation}
which sweeps $\lambda^{*2}\in[2,3]$ as $t$ ranges over $[0,1/4]$.

Because the cyclic-invariant family \eqref{eq:52_r2_cyc_ansatz} already fills the interval $[0,\sqrt{5/2}\,]$, imposing \emph{projector-level} cyclic symmetry produces no further expansion: the numerical search over cyclic-symmetric projectors recovers the same interval. Thus both cyclic symmetry levels coincide:
\begin{equation}
\Sigma^{\mathrm{basis}}_{2,C_5}(\boldsymbol{E})
=\Sigma^{\mathrm{proj}}_{2,C_5}(\boldsymbol{E})
=\left[0,\sqrt{\frac{5}{2}}\right].
\end{equation}
The effect of replacing cyclic symmetry by full permutation invariance on the same error family $\mathcal{E}^{\mathrm{asym}}_{5,2}$ is analysed in Sec.~\ref{sec:52_asym_r2_perm}, where the spectrum acquires a strictly positive lower endpoint.

\subsubsection{A closed interval for $((5,2))$ code with $\mathcal{E}^{\mathrm{mix}}_5$}
\label{sec:52_full55_cyclic}

We next consider the symmetry-compatible detectable family
\begin{equation}
\begin{aligned}
\boldsymbol{E}:=\mathcal E^{\mathrm{mix}}_{5}
=
&\{X_i,Y_i,Z_i\}_{i=1}^5\\
&\cup
\{X_iX_j,\ Z_iZ_j,\ X_iZ_j,\ Z_iX_j\}_{1\le i<j\le 5},
\end{aligned}
\label{eq:52_full55_errors}
\end{equation}
which consists of $55$ distinct Hermitian Pauli observables.  For this family there is no distance parameter $d$; rather, the detectable set is specified directly.  Unrestricted Stiefel-manifold optimization over $St(32,2)$ gives the endpoint values
\begin{equation}
\lambda^{*2}_{\min}=0,
\qquad
\lambda^{*2}_{\max}\approx \frac54,
\end{equation}
with KL residuals at machine precision.  More importantly, the entire interval is already realized analytically inside the trivial cyclic sector \(\mathcal H_0\), so the cyclic restriction does not shrink the numerically observed range.

Let
\begin{equation}
|v\rangle_{\mathrm{cyc}}
:=
\frac{1}{\sqrt5}\sum_{r=0}^{4}T^r|v\rangle
\end{equation}
denote the normalized orbit sum under the five-cycle \(T\).  Inside the cyclic \(+1\) sector \(\mathcal H_0\), a convenient orthonormal basis is
\begin{equation}
\label{eq:52_full55_H0_basis}
\begin{aligned}
|e_k\rangle &= |D_{5,k-1}\rangle, \qquad k=1,\ldots,6,\\
|f_1\rangle &= \frac{|00011\rangle_{\mathrm{cyc}}-|00101\rangle_{\mathrm{cyc}}}{\sqrt2},\\
|f_2\rangle &= \frac{|00111\rangle_{\mathrm{cyc}}-|01011\rangle_{\mathrm{cyc}}}{\sqrt2},
\end{aligned}
\end{equation}
where the cyclic Dicke states are
\begin{equation}
\begin{aligned}
|D_{5,0}\rangle &= |00000\rangle,\\
|D_{5,1}\rangle &= |00001\rangle_{\mathrm{cyc}},\\
|D_{5,2}\rangle &= \frac{|00011\rangle_{\mathrm{cyc}}+|00101\rangle_{\mathrm{cyc}}}{\sqrt2},\\
|D_{5,3}\rangle &= \frac{|00111\rangle_{\mathrm{cyc}}+|01011\rangle_{\mathrm{cyc}}}{\sqrt2},\\
|D_{5,4}\rangle &= |01111\rangle_{\mathrm{cyc}},\\
|D_{5,5}\rangle &= |11111\rangle.
\end{aligned}
\end{equation}
Thus
\begin{equation}
\begin{aligned}
\mathcal H_0 &= V_{5/2}\oplus V_{1/2},\\
V_{5/2} &= \mathrm{span}\{|e_1\rangle,\dots,|e_6\rangle\},\\
V_{1/2} &= \mathrm{span}\{|f_1\rangle,|f_2\rangle\}.
\end{aligned}
\end{equation}
The global bit flip \(X^{\otimes 5}\) exchanges the pairs
\begin{equation}
\begin{aligned}
|e_1\rangle &\leftrightarrow |e_6\rangle,\qquad 
|e_5\rangle \leftrightarrow |e_2\rangle,\\
|e_3\rangle &\leftrightarrow |e_4\rangle,\qquad
|f_1\rangle \leftrightarrow |f_2\rangle.
\end{aligned}
\end{equation}
Representation-theoretically, the nontrivial interpolation is produced by mixing the symmetric block \(V_{5/2}\) with the extra \(V_{1/2}\) block inside \(\mathcal H_0\).

An exact one-parameter family is obtained as follows.  For \(x\in[0,1]\), define
\begin{equation}
\begin{aligned}
\alpha(x) &=
-\frac{\sqrt{2+x}}{4\sqrt2}
+i\,\frac{\sqrt{x}}{2\sqrt2},\\[1mm]
\beta(x) &=
\frac{\sqrt5}{4\sqrt2}\sqrt{2+x},\\[1mm]
\gamma(x) &=
-\,i\,\frac{\sqrt5}{4}\sqrt{x},\\[1mm]
\delta(x) &=
\frac{\sqrt5}{4}\sqrt{\frac{1-x}{1+x}}
\bigl(\sqrt{2+x}+i\sqrt{x}\bigr),
\end{aligned}
\label{eq:52_full55_coeffs}
\end{equation}
and set
\begin{equation}
\begin{aligned}
|0_L(x)\rangle
&=
\alpha(x)|e_1\rangle
+\beta(x)|e_5\rangle
+\gamma(x)|e_3\rangle
+\delta(x)|f_1\rangle,\\
|1_L(x)\rangle
&=
X^{\otimes 5}|0_L(x)\rangle\\
&=
\alpha(x)|e_6\rangle
+\beta(x)|e_2\rangle
+\gamma(x)|e_4\rangle
+\delta(x)|f_2\rangle.
\end{aligned}
\label{eq:52_full55_family}
\end{equation}
A direct substitution shows that \(\{|0_L(x)\rangle,|1_L(x)\rangle\}\) is orthonormal for every \(x\in[0,1]\), and that the corresponding projector
\begin{equation}
P(x)=|0_L(x)\rangle\langle 0_L(x)|+|1_L(x)\rangle\langle 1_L(x)|
\end{equation}
satisfies the full KL compression equations
\begin{equation}
P(x)\,O\,P(x)=\lambda_O(x)\,P(x),
\qquad
O\in\mathcal E.
\end{equation}

Along this family all one-body terms and all mixed \(XZ/ZX\) terms vanish:
\begin{equation}
\begin{aligned}
\lambda_{X_i}(x) &= \lambda_{Y_i}(x)=\lambda_{Z_i}(x)=0,\\
\lambda_{X_iZ_j}(x) &= \lambda_{Z_iX_j}(x)=0.
\end{aligned}
\end{equation}
Hence only the \(XX\) and \(ZZ\) coefficients contribute.  Writing
\begin{equation}
\eta(x):=\sqrt{\frac{1-x}{1+x}},
\end{equation}
one finds that the nonzero coefficients depend only on distance on the $5$-cycle:
\begin{equation}
\begin{aligned}
\lambda_{X_iX_{i+1}}(x)
&=
\lambda_{Z_iZ_{i+2}}(x)
=
\frac{x}{4}\bigl(1+\eta(x)\bigr),\\
\lambda_{X_iX_{i+2}}(x)
&=
\lambda_{Z_iZ_{i+1}}(x)
=
\frac{x}{4}\bigl(1-\eta(x)\bigr),
\end{aligned}
\label{eq:52_full55_lambda_coeffs}
\end{equation}
for all \(i\in\mathbb Z_5\), with indices understood modulo \(5\). Therefore
\begin{equation}
\lambda^{*2}(x)
=
5\sum_{r=1}^{2}
\Bigl(
\lambda_{X_1X_{1+r}}(x)^2
+
\lambda_{Z_1Z_{1+r}}(x)^2
\Bigr)
=
\frac{5}{2}\frac{x^2}{1+x}.
\label{eq:52_full55_lambda}
\end{equation}
This is strictly increasing on \([0,1]\), so the family fills the entire interval
\begin{equation}
\lambda^{*2}\in\left[0,\frac54\right],
\qquad
\lambda^*\in\left[0,\frac{\sqrt5}{2}\right].
\label{eq:52_full55_interval}
\end{equation}
Equivalently, for any target \(t\in[0,5/4]\), choosing
\begin{equation}
x=\frac{t+\sqrt{t(t+10)}}{5}
\end{equation}
gives an exact cyclic code with \(\lambda^{*2}=t\).

The endpoints have a transparent interpretation.  At \(x=0\),
\begin{equation}
\begin{aligned}
|0_L(0)\rangle
&=
-\frac14|e_1\rangle
+\frac{\sqrt5}{4}|e_5\rangle
+\frac{\sqrt{10}}{4}|f_1\rangle,\\
|1_L(0)\rangle
&=
X^{\otimes 5}|0_L(0)\rangle,
\end{aligned}
\end{equation}
which is the familiar \(\lambda^*=0\) endpoint.  At \(x=1\), one has \(\delta(1)=0\), so the \(V_{1/2}\) component disappears and the family lies entirely inside the symmetric block \(V_{5/2}\subset\mathcal H_0\).  In that case
\begin{equation}
\begin{aligned}
\lambda_{X_iX_j} &= \lambda_{Z_iZ_j} = \frac14,\\
\lambda_{X_i} &= \lambda_{Y_i} = \lambda_{Z_i} =
\lambda_{X_iZ_j}
=
\lambda_{Z_iX_j}
= 0,
\end{aligned}
\end{equation}
hence
\begin{equation}
\lambda^{*2}=10\left(\frac14\right)^2+10\left(\frac14\right)^2=\frac54.
\end{equation}

Consequently,
\begin{equation}
\left[0,\frac{\sqrt5}{2}\right]
\subseteq
\Sigma^{\mathrm{basis}}_{2,C_5}(\boldsymbol E)
\subseteq
\Sigma^{\mathrm{proj}}_{2,C_5}(\boldsymbol E)
\subseteq
\Sigma_2(\boldsymbol E),
\end{equation}
and our numerical searches over both \(St(8,2)\) inside \(\mathcal H_0\) and unrestricted \(St(32,2)\) recover the same endpoint values, so no further enlargement is observed beyond \eqref{eq:52_full55_interval}. The permutation-invariant restriction of this family is analysed in Sec.~\ref{sec:52_full55_perm}.

\subsection{Permutation invariance}
Under full permutation symmetry $S_n$ the state--projector gap is especially pronounced.
State-level permutation invariance confines each basis vector to the fully symmetric subspace, which is often too small to support a code of the required dimension, whereas requiring only projector-level invariance allows the code space to carry a nontrivial $S_n$-representation and can thereby accommodate exact error detection.
We first exhibit this existence gap for the single-qubit error set $\mathcal{E}=\{X_i,Y_i,Z_i\}$ across four parameter sets, then analyse the asymmetric family $\mathcal{E}^{\mathrm{asym}}_{5,2}$ of Sec.~\ref{sec:52_asym_r2_cyclic}, and finally examine the mixed family $\mathcal{E}^{\mathrm{mix}}_5$ of Sec.~\ref{sec:52_full55_cyclic}.

\subsubsection{State--projector existence gap for $\mathcal{E}=\{X_i,Y_i,Z_i\}$.}
\label{sec:PI_single_qubit}
For the parameter sets $((4,3,2))$, $((4,4,2))$, $((5,4,2))$, and $((5,5,2))$, state-level permutation invariance forces each basis vector into the Dicke subspace, which is too restrictive to support $K$ mutually orthonormal vectors satisfying the KL conditions; hence $\Sigma^{\mathrm{basis}}_{K,S_n}(\boldsymbol{E})=\emptyset$ in each case.
At the projector level, however, explicit constructions yield permutation-invariant projectors with exact KL error detection for each of these four parameter sets, realising $0\in\Sigma^{\mathrm{proj}}_{K,S_n}(\boldsymbol{E})$ in every case.

For $((4,3,2))$, one may span a $3$-dimensional code subspace by
\begin{equation}
\begin{aligned}
|0_L\rangle &= \frac{1}{\sqrt{2}}\bigl(|0101\rangle-|1010\rangle\bigr), \\
|1_L\rangle &= \frac{1}{\sqrt{2}}\bigl(|0110\rangle-|1001\rangle\bigr), \\
|2_L\rangle &= \frac{1}{\sqrt{2}}\bigl(|0011\rangle-|1100\rangle\bigr),
\end{aligned}
\end{equation}
and set $P=\sum_{j=0}^2 |j_L\rangle\langle j_L|$.  Although the individual vectors above are not symmetric under all permutations, the span is invariant, and one checks $U_\pi P U_\pi^\dagger=P$ for all $\pi\in S_4$, with $P E P=0$ for all $E\in\{X_i,Y_i,Z_i\}_{i=1}^4$.

For $((4,4,2))$, an exact permutation-invariant projector at $\lambda^*(P)=0$ is
\begin{equation}
P=\frac{1}{4}(1+XXXX)(1+ZZZZ).
\end{equation}

For $((5,4,2))$, an explicit projector-level permutation-invariant code at $\lambda^*(P)=0$ is obtained by taking $P=\sum_{j=0}^3|j_L\rangle\langle j_L|$, where
\begin{widetext}
\begin{equation}
\begin{aligned}
|0_L\rangle
&= \frac{1}{2}\bigl(
   |00001\rangle+|11110\rangle
 - |00010\rangle-|11101\rangle
 \bigr), \\
|1_L\rangle
&= \frac{1}{\sqrt{12}}\bigl(
   |00001\rangle+|11110\rangle
 + |00010\rangle+|11101\rangle 
 - 2|00100\rangle-2|11011\rangle
 \bigr), \\
|2_L\rangle
&= \frac{1}{\sqrt{24}}\bigl(
   |00001\rangle+|11110\rangle
 + |00010\rangle+|11101\rangle 
 + |00100\rangle+|11011\rangle
 - 3|01000\rangle-3|10111\rangle
 \bigr), \\
|3_L\rangle
&= \frac{1}{\sqrt{40}}\bigl(
   |00001\rangle+|11110\rangle
 + |00010\rangle+|11101\rangle 
 + |00100\rangle+|11011\rangle
 + |01000\rangle+|10111\rangle 
 - 4|10000\rangle-4|01111\rangle
\bigr).
\end{aligned}
\end{equation}
\end{widetext}

This projector satisfies exact KL for $\{X_i,Y_i,Z_i\}_{i=1}^5$, hence $\lambda^*(P)=0$.

Lastly, for $((5,5,2))$, a rank-$5$ permutation-invariant projector at $\lambda^*(P)=0$ is given by $P=\sum_{j=0}^4|j_L\rangle\langle j_L|$, where
\begin{equation}
\begin{aligned}
|0_L\rangle &= \frac{1}{\sqrt{2}}\bigl(|00001\rangle \pm |11110\rangle\bigr),\\
|1_L\rangle &= \frac{1}{\sqrt{2}}\bigl(|01111\rangle \pm |10000\rangle\bigr),\\
|2_L\rangle &= \frac{1}{\sqrt{2}}\bigl(|01000\rangle \pm |10111\rangle\bigr),\\
|3_L\rangle &= \frac{1}{\sqrt{2}}\bigl(|11011\rangle \pm |00100\rangle\bigr),\\
|4_L\rangle &= \frac{1}{\sqrt{2}}\bigl(|00010\rangle \pm |11101\rangle\bigr).
\end{aligned}
\end{equation}
Here both $\pm$ signs must be chosen uniformly (all $+$ or all $-$) for the projector to remain permutation-invariant; the KL conditions themselves are sign-independent. One can verify that exact KL detection holds for all $E \in \{X_i, Y_i, Z_i\}$, and hence $\lambda^*(P) = 0$.

These examples confirm that while demanding a basis of individually permutation-invariant codewords can be overly restrictive, requiring only projector-level symmetry retains full permutation invariance of the code space and admits exact error-detecting codes even when no state-level permutation-invariant basis is available. In terms of spectra, the mechanism is
\begin{equation}
\Sigma^{\mathrm{basis}}_{K,S_n}(\boldsymbol{E})\subsetneq\Sigma^{\mathrm{proj}}_{K,S_n}(\boldsymbol{E})
\end{equation}
in the existence-gap instances listed above, with $\Sigma^{\mathrm{basis}}_{K,S_n}(\boldsymbol{E})=\emptyset$ but $0\in\Sigma^{\mathrm{proj}}_{K,S_n}(\boldsymbol{E})$.

\subsubsection{Asymmetric permutation-invariant codes for $\mathcal E^{\mathrm{asym}}_{5,2}$.}
\label{sec:52_asym_r2_perm}
For the same asymmetric family $\boldsymbol{E}=\mathcal E^{\mathrm{asym}}_{5,2}$ studied under cyclic symmetry in Sec.~\ref{sec:52_asym_r2_cyclic}, both state-level and projector-level permutation invariance lead to the same signature spectrum. In contrast with the cyclic case, this permutation-invariant spectrum has a strictly positive lower endpoint.

A state-level permutation-invariant code is supported on the fully symmetric Dicke sector. Imposing the logical bit-flip relation \( |1_L\rangle = X^{\otimes 5}|0_L\rangle\), we use the expansion
\begin{equation}
\begin{aligned}
|0_L\rangle &= c_0\,|D_{5,0}\rangle + c_1\,|D_{5,2}\rangle + c_2\,|D_{5,4}\rangle, \\
|1_L\rangle &= X^{\otimes 5}|0_L\rangle,
\end{aligned}
\label{eq:52_r2_PI_ansatz}
\end{equation}
and define the associated code projector by $P=|0_L\rangle\langle 0_L|+|1_L\rangle\langle 1_L|$. Choosing the global phase such that \(c_0\ge 0\) is real, the normalization condition together with the single-qubit \(Z_i\) KL equation gives
\begin{equation}
|c_1|^2=\frac34-2c_0^2,\qquad
|c_2|^2=c_0^2+\frac14,
\end{equation}
so in particular \(c_0^2\le 3/8\). As in the cyclic family of Sec.~\ref{sec:52_asym_r2_cyclic}, the signature norm is
\begin{equation}
\lambda^{*2}=\frac{2}{5}\Bigl(8c_0^2-\frac{1}{2}\Bigr)^2.
\label{eq:52_r2_PI_lambda}
\end{equation}
The difference is that, in the permutation-invariant setting, the remaining KL equations for \(X_i\) and \(Y_i\) impose the additional solvability condition
\begin{equation}
256c_0^4-512c_0^2+81\le 0,
\end{equation}
hence
\begin{equation}
c_0^2\in\left[1-\frac{5\sqrt7}{16},\,\frac38\right].
\label{eq:52_r2_PI_c0_range}
\end{equation}
Substituting \eqref{eq:52_r2_PI_c0_range} into \eqref{eq:52_r2_PI_lambda} yields
\begin{equation}
\lambda^{*2}\in\left[40-15\sqrt7,\;\frac52\right]
\approx [\,0.3137,\;2.5\,],
\end{equation}
and therefore
\begin{equation}
\lambda^*\in\left[\sqrt{40-15\sqrt7},\;\sqrt{\frac52}\right]
\approx [\,0.5601,\;1.5811\,].
\label{eq:52_r2_PI_range}
\end{equation}
The lower endpoint is attained at \(c_0^2=1-5\sqrt7/16\), whereas the upper endpoint is attained at \(c_0^2=3/8\). Although \eqref{eq:52_r2_PI_lambda} would vanish at \(c_0^2=1/16\), that value lies outside the feasible interval \eqref{eq:52_r2_PI_c0_range}; hence the permutation-invariant spectrum is bounded away from zero.

For \(n=5\) and \(K=2\), projector-level permutation invariance does not enlarge this interval. Indeed, any rank-2 projector commuting with the full \(S_5\) action must be supported entirely on the trivial irreducible representation, since the nontrivial \(S_5\)-sectors contribute ranks in multiples of \(4\) or \(5\). Thus projector-level permutation invariance reduces to the same Dicke-sector problem as the state-level expansion above, and
\begin{equation}
\Sigma^{\mathrm{basis}}_{2,S_5}(\boldsymbol{E})
=
\Sigma^{\mathrm{proj}}_{2,S_5}(\boldsymbol{E})
=
\left[\sqrt{40-15\sqrt7},\;\sqrt{\frac52}\right].
\end{equation}

\subsubsection{Permutation-invariant codes for $\mathcal{E}^{\mathrm{mix}}_5$.}
\label{sec:52_full55_perm}

We now examine the mixed error family $\boldsymbol{E}=\mathcal E^{\mathrm{mix}}_5$ of \eqref{eq:52_full55_errors}, whose cyclic-invariant spectrum was determined in Sec.~\ref{sec:52_full55_cyclic}, under full permutation invariance. Restricting the code projector to the fully permutation-invariant subspace, the problem reduces to the Dicke (spin-$5/2$) irreducible representation
\begin{equation}
\mathcal H_{\mathrm{sym}}
=
\mathrm{span}\{|D_{5,s}\rangle\}_{s=0}^{5}
\cong V_{5/2},
\qquad
\dim \mathcal H_{\mathrm{sym}}=6.
\end{equation}
On \(\mathcal H_{\mathrm{sym}}\), the $55$ operators in \eqref{eq:52_full55_errors} collapse, by permutation symmetry, to six distinct restrictions:
$X_i \sim X_1,
Y_i \sim Y_1,
Z_i \sim Z_1,
X_iX_j \sim X_1X_2,
Z_iZ_j \sim Z_1Z_2,
X_iZ_j \sim Z_iX_j \sim X_1Z_2.$
with multiplicities \(5,5,5,10,10,20\), respectively. Hence for any rank-$2$ projector \(P=VV^\dagger\) with \(\mathrm{Ran}(P)\subseteq \mathcal H_{\mathrm{sym}}\),
\begin{equation}
\lambda^{*2}
=
5(\lambda_X^2+\lambda_Y^2+\lambda_Z^2)
+
10(\lambda_{XX}^2+\lambda_{ZZ}^2)
+
20\lambda_{XZ}^2,
\end{equation}
where
\begin{equation}
\begin{aligned}
V^\dagger X_1V&=\lambda_X I_2, \\
V^\dagger Y_1V&=\lambda_Y I_2, \\
V^\dagger Z_1V&=\lambda_Z I_2, \\
V^\dagger X_1X_2V&=\lambda_{XX} I_2, \\
V^\dagger Z_1Z_2V&=\lambda_{ZZ} I_2, \\
V^\dagger X_1Z_2V&=\lambda_{XZ} I_2.
\end{aligned}
\end{equation}

Direct optimization over \(St(6,2)\), with the KL loss built from these six representatives and their multiplicities, did not reveal any interior exact solutions. In \(200\) random initializations, \(194\) converged to machine-precision KL residuals \(\lesssim 10^{-14}\), and within numerical precision,
\begin{equation}
\lambda^{*2}_{\mathrm{sym}}=\frac54.
\end{equation}
Moreover, when we targeted interior values \(t\in(0,5/4)\) by minimizing
\begin{equation}
\mu L_{\mathrm{KL}}+(\lambda^{*2}-t)^2,
\end{equation}
the optimizer always returned to \(\lambda^{*2}=5/4\) once \(L_{\mathrm{KL}}\) was forced small.

The endpoint \(x=1\) of the cyclic family \eqref{eq:52_full55_family} already lies in \(\mathcal H_{\mathrm{sym}}\), so it gives an explicit permutation-invariant code at \(\lambda^{*2}=5/4\). A convenient closed form is obtained by setting
\begin{equation}
\tau:=\frac{2+i\sqrt3}{\sqrt7},
\qquad |\tau|=1,
\end{equation}
and defining
\begin{equation}
\begin{aligned}
|0_L^{\mathrm{PI}}\rangle
&=
\frac{1}{\sqrt{32}}
\Big(
\sqrt7\,|D_{5,0}\rangle
-\sqrt{10}\,\tau\,|D_{5,2}\rangle
+i\sqrt{15}\,\tau\,|D_{5,4}\rangle
\Big),\\[4pt]
|1_L^{\mathrm{PI}}\rangle
&=
X^{\otimes 5}|0_L^{\mathrm{PI}}\rangle \\
&=
\frac{1}{\sqrt{32}}
\Big(
\sqrt7\,|D_{5,5}\rangle
-\sqrt{10}\,\tau\,|D_{5,3}\rangle
+i\sqrt{15}\,\tau\,|D_{5,1}\rangle
\Big).
\end{aligned}
\end{equation}
The corresponding projector
\begin{equation}
P_{\mathrm{PI}}
=
|0_L^{\mathrm{PI}}\rangle\langle 0_L^{\mathrm{PI}}|
+
|1_L^{\mathrm{PI}}\rangle\langle 1_L^{\mathrm{PI}}|
\end{equation}
satisfies all KL conditions, with
\begin{equation}
\begin{aligned}
\lambda_{X_iX_j} &= \lambda_{Z_iZ_j}=\frac14,\\
\lambda_{X_i} &= \lambda_{Y_i}=\lambda_{Z_i}=\lambda_{X_iZ_j} = \lambda_{Z_iX_j}=0,
\end{aligned}
\end{equation}
for all \(1\le i<j\le 5\). Therefore
\begin{equation}
\lambda^{*2}(P_{\mathrm{PI}})
=
10\left(\frac14\right)^2+10\left(\frac14\right)^2
=
\frac54.
\end{equation}

Thus the upper endpoint of \eqref{eq:52_full55_interval} is already realized inside the fully permutation-invariant sector. By contrast, our numerical search inside \(\mathcal H_{\mathrm{sym}}\) suggests that no interior exact points occur there. In other words, the full interval \([0,5/4]\) arises only after enlarging from the symmetric sector \(V_{5/2}\) to the cyclic \(T=+1\) sector
\begin{equation}
\mathcal H_0\cong V_{5/2}\oplus V_{1/2},
\end{equation}
where the extra \(V_{1/2}\) block supplies the missing degree of freedom needed for the interpolation.


\section{Discussion}
\label{sec:discussion}

This paper studies exact Pauli-detecting quantum codes through the scalar signature norm $\lambda^*$, obtained from the simultaneously compressed Pauli expectations of a rank-$K$ detecting projector. From the viewpoint of joint higher-rank numerical ranges and Knill--Laflamme compression conditions \cite{KnillLaflamme,li2008generalized,ChoiKribsZyczkowski2006LAA,LiPoonSze2008JMAA,GauLiPoonSze2011,Woerdeman2008HigherRankConvex,afshin2018pauli}, the problem begins as a vector-valued feasibility question for simultaneous scalar compressions. Passing to $\lambda^*$ reduces this to a one-dimensional invariant that still retains a direct variance interpretation for Pauli observables \cite{XuSchwonnekWinter2024,xu2025simultaneous}.

The results support a refined interval picture. For unrestricted detection problems, the attainable signature spectrum $\Sigma_K(\boldsymbol{E})$ was a single closed interval whenever non-empty in every example we studied, including the complete two-qubit classification and the random three-qubit scans. The same interval behavior persisted in all symmetry-restricted examples arising from symmetry-compatible pairs $(\mathcal E,G)$, where cyclic or permutation symmetry acts by orbit reduction rather than by an external slicing constraint. In those settings, the symmetry-restricted spectra may shrink, collapse to singletons, or become empty, but when non-empty they again appear as closed intervals.

The three-qubit random-tuple analysis also shows that this regularity has a genuine boundary. If one imposes a cyclic-$+1$ constraint on a Pauli tuple that is not itself stable under cyclic relabeling, then the symmetry condition is external to the error model. In that regime we observed empty spectra, singleton spectra, genuine intervals, and an exact disconnected example with spectrum $\{0,1\}$. Thus the relevant structural distinction is not simply ``with symmetry'' versus ``without symmetry'', but symmetry-compatible reduction versus externally imposed symmetry restriction. A similar kind of disconnectedness appears when one imposes additional transversal-gate symmetry constraints: in the Stiefel-manifold search for nonadditive codes with prescribed transversal groups in Ref.~\cite{zhang2025transversal}, Table~3 reports disconnected $\lambda^*$-ranges for certain groups (e.g.\ $2I$ at $\{0,\sqrt{3/4},\sqrt{7}\}\cup[1.34,1.52]$ and $BD_{10}$ at $[0,2.52]\cup\{\sqrt{7}\}$), and emphasizes that ``the listed intervals are not exhaustive but represent all ranges discovered so far'' \cite{zhang2025transversal}. In that sense, the disconnected cyclic-$+1$ spectra here and the disconnected transversal $\lambda^*$-regions there are in a similar spirit: they both reflect how an externally imposed structural constraint can slice an otherwise interval-like feasibility set into multiple components.

At the same time, the interval picture reveals a striking contrast between stabilizer and nonadditive codes. For stabilizer projectors, the Pauli compression coefficients lie in $\{0,\pm1\}$, so the corresponding $\lambda^*$-values form a discrete subset of the ambient spectrum. Consequently, whenever a nontrivial interval occurs, it contains a continuum of exact codes that cannot be exhausted by stabilizer constructions alone. From this perspective, stabilizer codes appear not as the generic points of the landscape but as distinguished discrete reference points inside a much larger nonadditive geometry \cite{Rains1997Nonadditive,Cross2009CWS,Laflamme1996FiveQubit}. One conceptual contribution of the present work is therefore to place stabilizer, nonadditive, symmetric, and asymmetric exact codes inside a common higher-rank variance framework, where the difference between these families is reflected directly in the geometry of their attainable signature spectra.

A second structural theme is the distinction between state-level and projector-level symmetry. In the symmetry-compatible families studied here, relaxing from individually symmetric codewords to a symmetric projector can enlarge an interval, replace a singleton by a continuum, or recover exact detecting codes when no symmetric basis exists. The larger cyclic and permutation examples show that the projector, rather than any preferred basis, is the natural symmetry carrier for exact detection \cite{beigi_et_al:LIPIcs.TQC.2013.192,Ouyang2014PermutationInvariant,Ouyang2016PermutationInvariantMultiQubit}.

Methodologically, the paper combines exact low-dimensional analysis with Stiefel-manifold optimization and symmetry-adapted parameterizations. The numerical component is therefore not a substitute for structure, but a way of probing regimes where the compression equations are too large to solve analytically while still preserving the exact Knill--Laflamme constraints at the level of the objective. At present, the interval phenomenon should be viewed as strong evidence rather than as a theorem. Natural open questions include establishing general conditions under which $\Sigma_K(\boldsymbol{E})$ must be an interval in the unrestricted and symmetry-compatible settings, characterizing when external symmetry constraints can destroy connectedness, understanding when the full higher-rank variance range $Q_K(\boldsymbol{E})$ is connected, and clarifying how symmetry sectors control the endpoints of $\Sigma_K(\boldsymbol{E})$. More broadly, the present results suggest that higher-rank Pauli variance geometry is substantially more ordered than the full signature-vector picture alone would indicate, and that $\lambda^*$ provides an effective coordinate for navigating the exact-code landscape across stabilizer, nonadditive, cyclic, and permutation-restricted families.

\begin{acknowledgments}

We thank Chao Zhang, Ningping Cao and Yiu-Tung Poon for helpful discussions.
	
\end{acknowledgments}




\bibliography{refs}

\end{document}